\documentclass[11pt]{article}
\usepackage{arxiv}
\usepackage{amsmath,amssymb,amsthm,bm,booktabs,graphicx,microtype,xcolor}
\usepackage{float,placeins}
\usepackage[numbers,sort&compress]{natbib}
\usepackage{subcaption}

\usepackage{mathtools}
\graphicspath{{figures/}}
\theoremstyle{plain}
\newtheorem{proposition}{Proposition}[section]
\newtheorem{lemma}[proposition]{Lemma}
\newtheorem{corollary}[proposition]{Corollary}
\theoremstyle{remark}
\newtheorem{remark}[proposition]{Remark}
\definecolor{darkred}{rgb}{0.6, 0.0, 0.0}

\newcommand{\rev}[1]{{\color{red}#1}}

{\par\medskip\begingroup\color{blue}\footnotesize
 \noindent\textit{--- original text (replaced above; kept for comparison) ---}\par}%
{\par\noindent\textit{--- end of original ---}\endgroup\medskip\par}
\definecolor{v8teal}{rgb}{0.0, 0.45, 0.45}

\usepackage{cancel}
\usepackage[normalem]{ulem}
\definecolor{v9blue}{rgb}{0.05,0.20,0.75}
\definecolor{v9green}{rgb}{0.00,0.40,0.15}
\newcommand{\nrev}[1]{{\color{v9blue}#1}}

\definecolor{v10purple}{rgb}{0.50,0.00,0.55}
\newcommand{\xrev}[1]{{\color{v10purple}#1}}

\definecolor{v12magenta}{rgb}{0.80,0.00,0.45}
\DeclareRobustCommand{\yrev}[1]{{\color{v12magenta}#1}}
\DeclareRobustCommand{\delete}[1]{\ifmmode{\color{gray}\cancel{#1}}\else{\color{gray}\sout{#1}}\fi}
\usepackage{hyperref}
\hypersetup{
    colorlinks=true,
    linkcolor=darkred,
    citecolor=darkred,
    urlcolor=blue,
    filecolor=magenta
}
\usepackage{cleveref}
\newcommand{\spopinf}{SPOpInf}
\newcommand{\opinf}{OpInf}
\newcommand{\R}{\mathbb{R}}
\title{Stochastic Physics-Constrained Operator Inference for Complex Systems}
\author{Changhong Mou\\[2pt]
  Department of Mathematics and Statistics\\
  Utah State University, Logan, UT 84322, USA\\
  \texttt{changhong.mou@usu.edu}}
  \date{\today}

\definecolor{red}{rgb}{0,0,0}
\definecolor{blue}{rgb}{0,0,0}
\definecolor{gray}{rgb}{0,0,0}
\definecolor{magenta}{rgb}{0,0,0}
\definecolor{darkred}{rgb}{0,0,0}
\definecolor{v8teal}{rgb}{0,0,0}
\definecolor{v9blue}{rgb}{0,0,0}
\definecolor{v9green}{rgb}{0,0,0}
\definecolor{v10purple}{rgb}{0,0,0}
\definecolor{v12magenta}{rgb}{0,0,0}
\begin{document}
\maketitle

\begin{abstract}
Quadratically nonlinear systems driven by additive noise arise throughout science and engineering, both as stochastic dynamical systems in their own right and as reduced-order models (ROMs) of turbulent flows.  
We propose stochastic physics-constrained operator inference (\spopinf{}), a non-intrusive method that learns a stochastic differential equation, i.e., a quadratic--linear drift \emph{and} a diffusion amplitude, from data.  
Unlike operator inference (OpInf), which fits a deterministic drift by least squares, \spopinf{} maximizes an Euler--Maruyama increment likelihood in which the residual covariance is itself an unknown, so the deterministic and stochastic components are calibrated jointly; it enforces the energy-preserving quadratic structure exactly, through linear equality constraints imposed inside the optimization; and it works in an upper-triangular parametrization that removes the structural non-identifiability of the full quadratic tensor.  
The resulting optimization is solved by an expectation--maximization \yrev{(EM)-based} iterative learning algorithm that updates the drift coefficients and the noise levels in turn: each iteration requires only an inexpensive weighted least-squares solve for the drift and a residual-variance update for the noise.
With five stochastic benchmark systems \yrev{that show intermittency and chaos}, \spopinf{} recovers the deterministic operators more accurately than OpInf, satisfies the imposed constraints to machine precision, accurately estimates the noise amplitudes, and reproduces long-time statistics.  
Applied to a stochastic Burgers equation and a quasi-geostrophic double-gyre flow, the same estimator yields non-intrusive stochastic ROMs that remain stable \yrev{over long times where unconstrained alternatives blow up.}
\end{abstract}
\raggedbottom
\section{Introduction}
Quadratically nonlinear systems driven by noise are among the most common model classes in science and engineering \citep{majda2016turbulent,majdaharlim2012,chen2018conditional}.  The quadratic nonlinearity originates in advection, vortex stretching, and triad interactions among wave modes, and stochastic forcing enters either as physical randomness or as the effective action of unresolved scales on resolved ones \citep{hasselmann1976,majda2001framework,majdawang2006}.  Systems of this type serve two complementary purposes.  
As low-dimensional test problems, from the Lorenz models to the Charney--DeVore and truncated Burgers--Hopf systems, they are laboratories for understanding chaos, intermittency, regime switching, and extreme events in complex systems \citep{lorenz1963,crommelin2004,majda2006slow,farazmand2019}.  
As discretizations and \yrev{reduced-order} models (ROMs) of turbulent flows, they are the computational models themselves: spectral and finite-element discretizations of the Burgers, Navier--Stokes, and quasi-geostrophic equations yield systems of exactly this quadratic form, though with a very large number of degrees of freedom (DOFs), and projection-based ROMs, such as the POD--Galerkin ROM, inherit the same quadratic form in low-dimensional subspaces \citep{holmes1996,noack2003,benner2015,wang2020reduced}.
Learning such systems from trajectory data is therefore a central task in two senses: as system identification, when the goal is to recover the dynamics of an observed stochastic system, and as \yrev{reduced-order} modeling, when the learned equations themselves serve as a non-intrusive \yrev{low-dimensional} surrogate of a complex PDE.

Over the past decade, each of these two tasks has acquired an effective regression-based method.  For system identification, the sparse identification of nonlinear dynamics (SINDy) discovers interpretable governing equations for observed states by sparse regression on a library of candidate terms \citep{brunton2016sindy,rudy2017pdefind}.  
For \yrev{reduced-order} modeling, operator inference (OpInf) learns the operators of a projection-based ROM non-intrusively, by fitting a polynomial model of the form suggested by the governing equations to reduced trajectory data \citep{peherstorfer2016opinf}; OpInf has been extended to general nonlinear PDEs through lifting transformations \citep{qian2020lift,qian2022pdeopinf}, applied to challenging engineering problems such as combustion \citep{mcquarrie2021regularized}, and analyzed for noisy training data \citep{uy2023active}. 
The two methods are closely related: when the SINDy library is restricted to linear and quadratic monomials and the sparsity threshold is switched off, the SINDy regression coincides with the OpInf least-squares problem, up to each method's customary regularization.  
The difference between the two lies in the setting and the goal rather than in the algorithm: SINDy works on observed states with a large candidate library and uses sparsity for equation discovery, whereas OpInf learns the reduced operators non-intrusively from projected trajectory data of a known high-dimensional model.

For the systems above, however, two essential components are often desirable that the standard OpInf formulation was not designed to provide.  
The first is \emph{stochasticity}.  In turbulent flows, the discarded scales act back on the resolved ones as an effectively random forcing, so a deterministic reduced model cannot reproduce the correct long-time statistics; this observation underlies stochastic climate modeling and stochastic parameterization \citep{hasselmann1976,mtv2003,franzke2015,berner2017} and modern stochastic closures for turbulence \citep{tian2021mz,hoekstra2024closure}.  
Stochasticity is equally essential when the system itself is a stochastic dynamical system, and in data assimilation, where every forecast model is imperfect: representing the inevitable model error by a calibrated stochastic term supplies the forecast uncertainty that the filter requires, and is central to the skillful filtering of turbulent signals \citep{majdaharlim2012,chen2018conditional,mou2023multiscale}.
Stochastic identification approaches exist \citep{boninsegna2018stochastic,callaham2021langevin}, learning both a nonlinear drift, that is, the deterministic part of the dynamics, and the noise amplitude from libraries of candidate terms.  
However, because they estimate the drift and the noise by averaging the observed dynamics over small regions of state space, they have in practice been limited to one or a few variables \citep{callaham2021langevin}, and physics enters only through the choice of candidate terms, with no constraint on the identified coefficients. 
The second missing component is \emph{energy conservation}.  In the physical systems above, the quadratic nonlinearity redistributes energy among components without creating or destroying it; a fitted quadratic operator that violates this property, even slightly, can inject energy without bound and blow up in finite time \citep{schlegel2015,kaptanoglu2021trapping}.  
Structure-aware variants of OpInf enforce energy preservation or stability constraints \citep{sawant2023structure,koike2024epopinf,goyal2025stable, gkimisis2026representation}, but they identify deterministic models only.

In this paper, we propose \emph{stochastic physics-constrained operator inference} (\spopinf{}) to close this gap.  \spopinf{} differs from classical OpInf in three ways.  First, OpInf estimates a deterministic drift, whereas \spopinf{} identifies a stochastic differential equation, that is, a drift \emph{and} a diffusion amplitude.  
Second, OpInf solves an ordinary least-squares regression, whereas \spopinf{} maximizes an Euler--Maruyama increment likelihood in which the residual covariance is itself an unknown, so the deterministic and stochastic components are calibrated jointly rather than sequentially. 
Third, \spopinf{} enforces
linear equality constraints to retain energy-preserving quadratic structure
exactly during the likelihood optimization, not as a post hoc correction,
and works in an upper-triangular parametrization that removes the structural
non-identifiability of the full quadratic tensor.  
Computationally, \spopinf{} is an extension of the OpInf regression.  
From the same snapshot data, it estimates time derivatives by finite differences and fits the quadratic--linear model by iteratively alternating two simple steps until convergence, each as cheap as an ordinary least-squares solve: one step estimates the noise level of each equation from the current fitting residuals, and the other refits the coefficients, weighting each equation by its estimated noise level and enforcing the energy constraints exactly.  
In this way, the drift and the noise are updated iteratively, each improving the estimate of the other, until the change in the estimates falls below a prescribed tolerance; the noise amplitude is then read off from the final residuals.  
Every step is an inexpensive least-squares solve, so \spopinf{} keeps nearly the cost of OpInf.
The proposed method applies in two settings.  
In the first, it identifies stochastic dynamical systems from observed trajectories.  
In the second, it builds ROMs for complex PDEs: snapshot data of the PDE solution are projected onto a reduced subspace spanned by POD or Fourier modes, the model is inferred from the resulting time coefficients, and then serves as a non-intrusive stochastic ROM.

{Our contributions are: (i) an equality-constrained joint-likelihood calibration of the deterministic and stochastic components of quadratic SDEs, in which the residual covariance is an unknown of the same optimization; (ii) a compact, provably complete construction of the energy-preserving
constraints in the identifiable upper-triangular parametrization
(\cref{prop:energy-equivalence}); (iii) a systematic formulation whose instances cover both directly observed stochastic systems and ROMs for complex PDEs\yrev{,} so one
estimator serves system identification and stochastic reduced-order modeling; (iv) an investigation across five stochastic systems which show intermittency, chaos, and metastable regime switching, together with two ROMs for PDE applications; and (v) \yrev{sensitivity tests} (\cref{sec:robustness}) quantifying the dependence of the estimates on the sampling interval, the training data, the realization of the driving
noise, and observation noise.}

The paper is organized as follows.  \Cref{sec:problem} states the problem setting for stochastic quadratic systems, from low-dimensional dynamical systems to discretized complex PDEs; \cref{sec:method} develops the \spopinf{} framework, from the operator-inference model through the upper-triangular formulation and the energy-preserving constraints to the stochastic optimization; \cref{sec:numerical} presents the numerical tests in two categories, identifying stochastic dynamical systems (\cref{sec:ident-sds}) and \yrev{reduced-order} modeling for complex PDEs (\cref{sec:pde-applications}); and \cref{sec:conclusion} concludes.

\section{Stochastic quadratic systems: dynamical systems and complex PDEs}
\label{sec:problem}

\subsection{Stochastic quadratic model}
\label{sec:model-class}
We begin by defining the model considered throughout this paper: a stochastic differential equation with quadratic drift, written in a notation used consistently for all later sections.
Throughout, $\bm a(t)\in\R^r$ denotes the resolved state: for a stochastic dynamical system, $\R^r$ is the state space itself and $\bm a$ its observed state; for a complex PDE, the solution is approximated by an expansion $\sum_{i=1}^{r} a_i(t)\,\phi_i(\bm x)$ in basis functions $\phi_i$, \yrev{such as Fourier or POD modes}, and $\bm a=(a_1,\ldots,a_r)^\top$ collects the time-dependent coefficients (\cref{sec:two-settings}).  
In either case, the object to be identified is the following stochastic quadratic model.
We use componentwise tensor notation: let $\bm B_i\in\R^{r\times r}$ be the $i$th slice of the third-order quadratic tensor $\mathcal B$, let $\bm C\in\R^{r\times r}$ be the linear operator, let $\bm D\in\R^r$ be a
constant forcing, and let $\bm\Sigma=\operatorname{diag}(\sigma_1,\ldots,\sigma_r)$ be the amplitude of an $r$-dimensional standard Wiener process $\bm W_t$.  
{The} $i$th component of $\bm a$ evolves as
\begin{equation}
  \mathrm d a_i
  = \left[\bm a^\top\bm B_i\bm a
  +(\bm C\bm a)_i+D_i\right]\mathrm dt
  +\sigma_i\,\mathrm dW_i,
  \qquad i=1,\ldots,r,
  \label{eq:cpopinf-component}
\end{equation}
where the bracketed term is the deterministic component (the drift) and
$\sigma_i\,\mathrm dW_i$ is the stochastic component.  To simplify the
notation, stacking the quadratic terms of all $r$
components defines the vector-valued quadratic operator:
\begin{equation}
  \mathcal Q_{\mathcal B}(\bm a):=\bm a^\top\mathcal B\,\bm a
  =\begin{bmatrix}
  \bm a^\top\bm B_1\bm a & \cdots &
  \bm a^\top\bm B_r\bm a
  \end{bmatrix}^{\!\top}
  .
  \label{eq:quadratic-map}
\end{equation}
The stochastic
quadratic model then takes the vector form
\begin{equation}
  \mathrm d\bm a
  =\left[\bm a^\top\mathcal B\,\bm a
  +\bm C\bm a+\bm D\right]\mathrm dt
  +\bm\Sigma\,\mathrm d\bm W_t.
  \label{eq:cpopinf-vector}
\end{equation}
The stochastic quadratic model \cref{eq:cpopinf-vector} is the finite-dimensional form of the canonical structure of turbulent dynamical systems \citep{majda2016turbulent,majdaharlim2012,chen2018conditional,%
mou2023multiscale}.  
In that setting, the skew-symmetric part of the linear operator $\bm C$ represents dispersion and its negative-definite symmetric part represents dissipation; the constant vector $\bm D$ represents the deterministic forcing; the additive noise $\bm\Sigma\,\mathrm d\bm W_t$ represents either physical stochastic forcing or the effect of unresolved scales on the resolved ones.
Finally, the quadratic interactions transfer energy across scales without creating or destroying it: this energy conservation is also the physics constraint that \spopinf{} enforces exactly during identification (\cref{sec:constraints}).

The inference problem addressed in this paper is then stated as follows: \emph{given a sampled trajectory of $\bm a$, infer the deterministic operators {$(\mathcal B,\bm C,\bm D)$} and the stochastic amplitude $\bm\Sigma$ in \cref{eq:cpopinf-vector}, subject to linear equality constraints on the quadratic tensor.}  Depending on the origin of $\bm a$, this is a system identification problem, when $\bm a$ is the observed state of a stochastic dynamical system, or an operator inference problem, when $\bm a$ contains the ROM coefficients of a complex PDE and the inferred model serves as a stochastic ROM (\cref{sec:two-settings}).  
{In general no part of the drift is assumed known: $\bm D$ is inferred together with $(\mathcal B,\bm C)$ as part of \spopinf{}. Any component \yrev{that \emph{is} known a priori, such as the projected forcing in Setting II or a known linear damping,} can instead be supplied to the method and removed from the inference, as detailed in \cref{sec:data-model}.}

This inference problem is difficult for three reasons.  
First, the energy-conserving quadratic structure is easily violated by regression, and even a slight violation of the fitted quadratic {tensor can} inject energy without bound and cause finite-time blow-up
\citep{schlegel2015,kaptanoglu2021trapping,moore2026learning}.
Second, the systems of interest are chaotic, so individual trajectories cannot be matched for long; an inferred model must instead be evaluated by its long-time statistics, such as the distribution of states and their temporal correlations \citep{majda2016turbulent}, and these statistics are sensitive to small errors in the inferred operators.
Third, phenomena such as intermittency and regime transitions depend quantitatively on the noise, so the noise amplitude requires the same attention in inference as the drift; however, each observed state change over a time step combines a deterministic contribution and a stochastic one, which the data do not separate \citep{majdaharlim2012}.  
{These three difficulties motivate, respectively, the exact energy
constraints of \cref{sec:constraints}, the statistics-based evaluation metrics
used throughout \cref{sec:numerical}, and the joint estimation of the
deterministic and stochastic components in \cref{sec:joint-likelihood}.}
{\color{red}
\begin{table}[H]
\centering
\caption{Notations and acronyms.}
\nrev{\label{tab:notation}}
\small
\begin{tabular}{ll@{\qquad}ll}
\toprule
$\bm a\in\R^r$ & resolved state / reduced coefficients &
$\bm\Sigma,\ \sigma_i$ & \yrev{noise amplitude} \\
$\mathcal B,\ \bm B_i$ & quadratic tensor and its $i$th slice &
$\bm G$ & energy-constraint matrix \\
$\bm C$ & linear operator &
$\bm\theta$ & stacked unknown coefficients \\
$\bm D$ & constant forcing ({inferred, or known and moved to $\bm f_{\rm known}$}) &
$\bm\Phi_n,\ \bm\ell$ & feature matrix / feature vector \\
$\bm\tau$ & ROM closure term &
$\bm R,\ \bm Q_\Delta$ & residual / increment covariance \\
$\bm\phi_i,\ \varphi_i$ & basis functions &
$\varepsilon_{\rm tol}$ & tolerance \\
\bottomrule
\end{tabular}
\end{table}
}
\subsection{Two settings: stochastic dynamical systems and complex PDEs}
\label{sec:two-settings}
The stochastic quadratic model \cref{eq:cpopinf-vector} is used in two settings that differ not in the data, which in both cases is a sampled trajectory of $\bm a$, but in what the model represents: in the first, $\bm a$ is the state of a stochastic dynamical system and \cref{eq:cpopinf-vector} is a model of the system itself; in the second, $\bm a$ collects the ROM coefficients of a complex PDE and \cref{eq:cpopinf-vector} is a stochastic reduced-order model of that PDE.

\paragraph{Setting I: stochastic dynamical systems.}
Here $\bm a$ is the observed state of a finite-dimensional stochastic dynamical system, and \cref{eq:cpopinf-vector} is a model for the system itself.  
This is the classical stochastic system identification setting; the benchmark systems of \cref{sec:ident-sds} are of this type.

\paragraph{Setting II: complex PDEs.}
Here $\bm a$ contains the basis coefficients of a nonlinear PDE solution.
For illustration, consider the incompressible Navier--Stokes equations,
\begin{equation}
  \partial_t\bm u+(\bm u\cdot\nabla)\bm u
  =-\nabla p+\mathrm{Re}^{-1}\Delta\bm u,
  \qquad \nabla\cdot\bm u=0,
  \label{eq:nse}
\end{equation}
where $\bm u$ is the velocity, $p$ is the pressure, and $\mathrm{Re}$ is the Reynolds number. 
The velocity field is approximated by the expansion
\begin{equation}
  \bm u_r(\bm x,t)=\overline{\bm u}(\bm x)
  +\sum_{i=1}^r a_i(t)\bm\phi_i(\bm x),
  \label{eq:pde-pod-expansion}
\end{equation}
where $\overline{\bm u}$ is the mean field and the basis functions $\bm\phi_i$ are, for example, Fourier or POD modes.  
The Galerkin projection substitutes \cref{eq:pde-pod-expansion} into \cref{eq:nse} and takes the $L^2$ inner product $(\cdot,\cdot)$ with each basis function; for divergence-free bases the pressure term vanishes, and the result is a system of $r$ ordinary differential equations for the coefficients: for $i=1,\ldots,r$,
\begin{equation}
  \dot a_i
  =D_{G,i}+(\bm C_G\bm a)_i+\bm a^\top\bm B_{G,i}\,\bm a,
  \label{eq:grom-component}
\end{equation}
with
\begin{subequations}
\label{eq:grom-operators}
\begin{align}
  D_{G,i}&=\big(\mathrm{Re}^{-1}\Delta\overline{\bm u}
  -(\overline{\bm u}\cdot\nabla)\overline{\bm u},\,\bm\phi_i\big),
  \label{eq:grom-D}\\
  (\bm C_G)_{im}&=\big(\mathrm{Re}^{-1}\Delta\bm\phi_m
  -(\overline{\bm u}\cdot\nabla)\bm\phi_m
  -(\bm\phi_m\cdot\nabla)\overline{\bm u},\,\bm\phi_i\big),
  \label{eq:grom-C}\\
  (\bm B_{G,i})_{mn}&=-\big((\bm\phi_m\cdot\nabla)\bm\phi_n,\,
  \bm\phi_i\big).
  \label{eq:grom-B}
\end{align}
\end{subequations}
Because the nonlinearity of \cref{eq:nse} is quadratic, no higher-order
terms arise, and collecting the components gives the Galerkin
reduced-order model (G-ROM) in exactly the form of \cref{sec:model-class},
\begin{equation}
  \dot{\bm a}=\bm f_G(\bm a)
  =\mathcal Q_{\mathcal B_G}(\bm a)+\bm C_G\bm a+\bm D_G.
  \label{eq:pde-grom}
\end{equation}
The same construction applies to any PDE with quadratic nonlinearity,
including the Burgers and quasi-geostrophic equations of
\cref{sec:pde-applications}
\citep{holmes1996,noack2003,benner2015,mou2020reduced,besabe2025linear}.

The G-ROM \cref{eq:pde-grom} is exact only if the solution remains in
the span of the $r$ retained modes.  In turbulent and
convection-dominated flows it does not: for a small $r$, a substantial
fraction of the kinetic energy resides in the discarded modes, and those
modes interact with the retained ones through the quadratic
nonlinearity.  To make this precise, decompose the full-order solution
as $\bm u=\overline{\bm u}+\bm u_r+\bm u'$, where $\bm u_r$ is the part
resolved by the $r$ modes in \cref{eq:pde-pod-expansion} and $\bm u'$ is
the truncated remainder.  Projecting \cref{eq:nse} onto $\bm\phi_i$
without dropping $\bm u'$ gives the exact evolution of the resolved
coefficients,
\begin{equation}
  \dot{\bm a}=\bm f_G(\bm a)+\bm\tau(\bm a,\bm u'),
  \qquad
  \tau_i(\bm a,\bm u')
  =\big(\mathrm{Re}^{-1}\Delta\bm u'
  -(\bm u_r\cdot\nabla)\bm u'
  -(\bm u'\cdot\nabla)\bm u_r
  -(\bm u'\cdot\nabla)\bm u',\,\bm\phi_i\big),
  \label{eq:pde-closure}
\end{equation}
where the closure term $\bm\tau$ collects every interaction that
involves the truncated modes.  The G-ROM is \cref{eq:pde-closure} with
$\bm\tau$ set to zero, and the discrepancy between the G-ROM and the
projected full-order solution is exactly the neglected $\bm\tau$.
Because $\bm\tau$ depends on $\bm u'$, which the ROM does not carry, it
cannot be evaluated within the ROM and must be modeled in terms of the
resolved coefficients $\bm a$ alone; this is the classical closure
problem of \yrev{ROMs} \citep{ahmed2021closures,snyder2022reduced}.

Two properties of $\bm\tau$ in the turbulent regime guide how it should
be modeled.  First, on average it is dissipative: the quadratic term
transfers energy from the large, retained scales to the small, discarded
scales, where viscosity removes it, so that the net effect of $\bm\tau$
on the retained modes is an additional energy dissipation.  The classical remedies model $\bm\tau$ as an eddy
viscosity or a variational-multiscale dissipation acting on $\bm a$
\citep{aubry1988,wang2012,sanmaulik2018,mou2021ddvms}; in the notation of
\cref{sec:model-class}, such a correction modifies the symmetric,
negative-definite part of the linear operator $\bm C$.  Second,
$\bm\tau$ fluctuates: the discarded modes evolve on time scales much
shorter than those of the retained ones, and the Mori--Zwanzig
formalism shows that the exact closure consists of a Markovian
correction to the drift, a memory integral, and an orthogonal-dynamics
term driven by the unresolved initial data that is unpredictable from
$\bm a$ \citep{chorin2000mz,tian2021mz}.  When the discarded modes
decorrelate quickly, the memory integral collapses onto the drift and
the orthogonal term is well approximated by white noise; this is the
central argument of stochastic climate modeling \citep{hasselmann1976}
and of stochastic mode reduction \citep{majda2001framework,mtv2003},
where averaging over the fast modes produces a closed quadratic model
for the slow ones with additional dissipation and additive noise, and it
is the basis of data-driven stochastic closures for  Galerkin ROMs \citep{kondrashov2015,mou2023multiscale,%
hoekstra2024closure}.  In the turbulent setting, therefore, the closure
term is well described by a combination of energy dissipation and white
noise, which is precisely the structure of the additive noise and the
dissipative part of $\bm C$ in the stochastic quadratic model
\cref{eq:cpopinf-vector}.

\subsection{Non-intrusive discrete-time formulation}
\label{sec:data-model}
In both settings of \cref{sec:two-settings}, the only input to the
inference is a sampled trajectory of the resolved state: snapshots
$\bm a_1,\ldots,\bm a_N\in\R^r$ at the uniform time step
$\Delta t$, so that $\bm a_n\approx\bm a(t_n)$ with $t_n=n\,\Delta t$.
In Setting I the $\bm a_n$ are the observed states themselves; in
Setting II they are the projections of the full-order snapshots onto the
basis of \cref{eq:pde-pod-expansion},
$a_{n,i}=(\bm u(\cdot,t_n)-\overline{\bm u},\,\bm\phi_i)$, and this
projection is the only place where the full-order solution enters.  In
particular, neither the Galerkin operators \cref{eq:grom-operators} nor
the closure term $\bm\tau$ of \cref{eq:pde-closure} is ever assembled or
evaluated.  The formulation is therefore non-intrusive in the sense of
operator inference \citep{peherstorfer2016opinf}: it requires no access
to the discretized PDE operators or to the solver that produced the
snapshots, and the same procedure applies verbatim to both settings.
This subsection converts the continuous-time model
\cref{eq:cpopinf-vector} into a regression model that links consecutive
snapshots; this discrete model is what the estimator of
\cref{sec:method} works with.

We first separate the drift of \cref{eq:cpopinf-vector} into a part that
is known a priori and a part to be inferred:
\begin{equation}
  \bm f_{\bm\theta}(\bm a)
  =\bm f_{\rm known}(\bm a)+\bm\Phi(\bm a)\,\bm\theta,
  \qquad {\bm\theta=\operatorname{vec}(\mathcal B,\bm C,\bm D)\in\R^p}.
  \label{eq:fixed-unknown-split}
\end{equation}
{Here $\bm\theta$ stacks all unknown coefficients of the quadratic tensor
$\mathcal B$, the linear operator $\bm C$, and the constant forcing $\bm D$
into one vector, and $\bm\Phi(\bm a)\in\R^{r\times p}$ is the feature matrix
whose columns evaluate the corresponding candidate terms, the unique
quadratic monomials $a_ja_k$ with $j\le k$, the linear terms $a_j$, and the
constant $1$, at the state $\bm a$.  The known part $\bm f_{\rm known}$
collects whatever portion of the drift is available a priori and may be
zero.  When a term is known---for example the projected forcing $\bm D_G$ of
\cref{eq:grom-D} in Setting II, or a known linear damping---it is carried by
$\bm f_{\rm known}$ and the corresponding columns are removed from the
library; in the fully known-forcing case, $\bm\theta$ reduces to
$\operatorname{vec}(\mathcal B,\bm C)$ and every inferred coefficient belongs
to an interaction term.}
In Setting II this split
also makes precise what is being learned: since the data are the
projected full-order trajectory, which obeys the closed equation
\cref{eq:pde-closure} rather than the G-ROM \cref{eq:pde-grom}, the
inferred $(\mathcal B,\bm C)$ are the Galerkin operators corrected by
the deterministic, dissipative part of the closure, and $\bm\Sigma$
represents its fluctuating part; neither correction requires knowledge
of $\bm\tau$ itself.

Discretizing \cref{eq:cpopinf-vector} over one time step by the
Euler--Maruyama scheme, the standard forward scheme for SDEs, gives the
increment model
\begin{equation}
  \bm a_{n+1}=\bm a_n
  +\Delta t\,\bm f_{\rm known}(\bm a_n)
  +\Delta t\,\bm\Phi_n\bm\theta
  +\bm\epsilon_n,
  \qquad
  \bm\epsilon_n\sim\mathcal N(\bm0,\bm Q_\Delta),
  \quad
  \bm Q_\Delta=\Delta t\,\bm\Sigma\bm\Sigma^\top,
  \label{eq:increment-model}
\end{equation}
where $\bm\Phi_n=\bm\Phi(\bm a_n)$: the next snapshot equals the current
one, advanced by the known and unknown parts of the drift over $\Delta t$,
plus a Gaussian random increment $\bm\epsilon_n$ whose covariance
$\bm Q_\Delta$ grows linearly in $\Delta t$, as is characteristic of
Brownian forcing.  Dividing \cref{eq:increment-model} by $\Delta t$ and
moving the known part to the left-hand side yields the equivalent
regression form
\begin{equation}
  \bm y_n\coloneqq\frac{\bm a_{n+1}-\bm a_n}{\Delta t}
  -\bm f_{\rm known}(\bm a_n)
  =\bm\Phi_n\bm\theta+\bm\eta_n,
  \qquad \bm\eta_n\sim\mathcal N(\bm0,\bm R),
  \qquad \bm R=\frac{\bm\Sigma\bm\Sigma^\top}{\Delta t},
  \label{eq:forward-response}
\end{equation}
in which the response $\bm y_n$ is a finite-difference approximation of
the unknown part of the drift, and the noise $\bm\eta_n$ is the scaled
stochastic increment.  Its covariance $\bm R$ is the covariance of this
regression residual, not of the state, and it determines the stochastic
amplitude through $\bm\Sigma\bm\Sigma^\top=\Delta t\,\bm R$; note that
$\bm R$ grows as $\Delta t$ decreases, reflecting the roughness of
stochastic trajectories.  \Cref{eq:forward-response} has the form of the
operator-inference regression \citep{peherstorfer2016opinf}, with one
essential difference: the residual $\bm\eta_n$ is not a fitting error to
be minimized away but carries the stochastic amplitude $\bm\Sigma$ that
we wish to identify.  For this reason the operators and the noise
cannot be estimated separately, and the inference of $(\bm\theta,\bm R)$,
and hence of $(\mathcal B,\bm C,\bm\Sigma)$, from
\cref{eq:forward-response} under the energy constraints of
\cref{sec:constraints} is the subject of \cref{sec:method}.
{Throughout, the snapshots are assumed to be exact, noise-free
observations of the state: measurement error would enter the finite
difference in \cref{eq:forward-response} at order
$\mathcal O(\sigma_{\rm obs}/\Delta t)$, \yrev{increasing $\bm R$ and biasing $\widehat{\bm\Sigma}$; this effect is examined in \cref{sec:robustness}.}}

\section{The stochastic physics-constrained operator inference framework}
\label{sec:method}

\subsection{Operator inference}
\label{sec:opinf-model}
The proposed framework is built on top of the classical operator-inference
(OpInf) regression \citep{peherstorfer2016opinf}, which we recall first in
the notation of \cref{sec:problem}.  For the resolved state
$\bm a(t)\in\R^r$, the OpInf model is
\begin{equation}
  \dot{\bm a}_{\mathrm{OpInf}}
  = \widehat{\bm c}+\widehat{\bm A}\bm a
  +\widehat{\bm H}\,(\bm a\,\widetilde\otimes\,\bm a),
  \label{eq:opinf-standard}
\end{equation}
where $\bm a\,\widetilde\otimes\,\bm a\in\R^{r(r+1)/2}$ collects the
unique quadratic monomials $a_ja_k$, $j\le k$
\citep{peherstorfer2016opinf,qian2020lift,mcquarrie2021regularized}.
This is exactly the drift of the stochastic quadratic model
\cref{eq:cpopinf-vector} with the stochastic component removed:
$\widehat{\bm A}$ corresponds to the linear operator $\bm C$,
$\widehat{\bm H}$ is the matricized form of the quadratic tensor
$\mathcal B$ in its upper-triangular representation, and
$\widehat{\bm c}$ corresponds to the constant forcing $\bm D$, \yrev{which is inferred unless it is known a priori and carried by $\bm f_{\rm known}$.}
OpInf estimates the operators by ordinary least squares on the regression
model \cref{eq:forward-response},
\begin{equation}
  \bm\theta_{\rm LS}=\arg\min_{\bm\theta}
  \sum_{n}\|\bm y_n-\bm\Phi_n\bm\theta\|_2^2,
  \label{eq:ls}
\end{equation}
with the same responses $\bm y_n$ and feature matrices $\bm\Phi_n$ as in
\cref{sec:data-model}, so OpInf and the proposed method are trained on
identical data and libraries.  \yrev{Throughout, OpInf serves as the deterministic reference model: it does not model the stochastic component, so in Setting I it is compared on operator-recovery errors only and is excluded from the statistical comparisons, and in Setting II it is integrated as a deterministic ROM.}
Its
differences from \spopinf{} are summarized in \cref{sec:opinf-contrast}.

{\paragraph{The four \yrev{reference models}.}
Throughout \cref{sec:numerical}, \spopinf{} is compared against the same four
reference models, defined once here.
\begin{itemize}\itemsep1pt
\item \emph{G-ROM}: the intrusive Galerkin ROM \cref{eq:pde-grom}, whose
  operators $(\mathcal B_G,\bm C_G,\bm D_G)$ are assembled from the PDE as in
  \cref{eq:grom-operators}, integrated with $\bm\Sigma=\bm0$.  It exists only
  in Setting II, where a full-order model is available.
\item \emph{G-ROM\,$(\sigma)$}: the same Galerkin operators, integrated with a
  nonzero diagonal amplitude obtained by projecting the physical forcing noise
  onto the reduced basis, $\sigma_i=\|\bm\Sigma_{\rm FOM}^\top\bm\phi_i\|_2$.
\item \emph{OpInf}: the ordinary least-squares fit \cref{eq:ls} on the same
  library and the same data, integrated deterministically
  ($\bm\Sigma=\bm0$).
\item \emph{OpInf\,$(\sigma)$}: the same least-squares drift, integrated with
  the diagonal amplitude read off \emph{afterwards} from its own regression
  residual, $\sigma_i=\sqrt{\Delta t\,\yrev{(N-1)^{-1}}\sum_n(y_{n,i}-(\bm\Phi_n
  \bm\theta_{\rm LS})_i)^2}$.  This is the sequential counterpart of
  \spopinf{}: same library, same noise model, but neither the energy
  constraints nor the joint calibration.
\end{itemize}
In Setting I (\cref{sec:ident-sds}) only OpInf and \spopinf{} are compared on
operator recovery, because there is no full-order model from which to assemble
a Galerkin \yrev{model}; and the statistical comparisons there use \spopinf{}
alone, because the reference is the true system itself rather than a
projection of it.  
The constrained-versus-unconstrained
\yrev{comparison} in Setting I is instead reported in \cref{sec:robustness}, where the
constraint set itself is varied on the CDV system.
}

\subsection{The stochastic physics-constrained operator inference (\spopinf{}) formulation}
\label{sec:joint-likelihood}
Stochastic physics-constrained operator inference (\spopinf{}) infers both
components of the stochastic quadratic model \cref{eq:cpopinf-vector} at
once.  The drift is parametrized through the candidate-term library of
\cref{sec:data-model},
\begin{equation}
  \bm f_{\bm\theta}(\bm a)
  =\bm f_{\rm known}(\bm a)+\bm\Phi(\bm a)\,\bm\theta,
  \qquad
  {\bm\ell(\bm a)
  =\big[(a_ja_k)_{j\le k},\;a_1,\ldots,a_r,\;1\big]^{\!\top}},
  \label{eq:spopinf-library}
\end{equation}
where {$\bm\ell(\bm a)$} is the feature vector of candidate terms, each row
of $\bm\Phi(\bm a)$ applies {$\bm\ell(\bm a)$} to one equation, and
{$\bm\theta=\operatorname{vec}(\mathcal B,\bm C,\bm D)$ stacks the
retained quadratic, linear, and constant coefficients, so the constant
forcing is inferred as part of \spopinf{}.  When $\bm D$, or any other part
of the drift, is known a priori, it is carried by $\bm f_{\rm known}$ and the
corresponding library columns are dropped (\cref{sec:data-model}).  The
energy constraints act only on the quadratic term, i.e., $\mathcal B$, so the rows of $\bm G$
carry zeros on the linear and constant columns.}
The second
unknown is the diagonal stochastic amplitude
$\bm\Sigma=\operatorname{diag}(\sigma_1,\ldots,\sigma_r)$, and physical
structure is imposed through linear equality constraints
{$\bm G\bm\theta=\bm0$, where $\bm G$ is a known constraint
matrix\yrev{.}}

Because the noise level enters the estimation criterion, the \spopinf{}
problem is \emph{not} the constrained version of the least-squares
problem \cref{eq:ls}.  Under the increment model
\cref{eq:increment-model}, the negative log-likelihood of the data, in
the response form of \cref{eq:forward-response} with
$\bm R=\bm\Sigma\bm\Sigma^\top/\Delta t$ and a ridge term
$\bm\Gamma\succeq\bm0$ for numerical stability, gives the \spopinf{}
formulation:
\begin{equation}
  \min_{\substack{\bm\theta,\;\bm R=\operatorname{diag}(R_1,\ldots,R_r)\\
                  R_i>0}}
  \frac{\yrev{N-1}}{2}\log|\bm R|
  +\frac12\sum_{n=\yrev{1}}^{N-1}
  (\bm y_n-\bm\Phi_n\bm\theta)^\top\bm R^{-1}
  (\bm y_n-\bm\Phi_n\bm\theta)
  +\frac12\bm\theta^\top\bm\Gamma\bm\theta,
  \qquad {\bm G\bm\theta=\bm0}.
  \label{eq:joint-deterministic-likelihood}
\end{equation}
The first term penalizes large noise levels and the second weights the
misfit by the inverse noise level, so the deterministic and stochastic
components are coupled in one criterion.  The following subsections
develop each ingredient of \cref{eq:joint-deterministic-likelihood} in
turn: \cref{sec:constraints-triangular} specifies the admissible
parametrization of $\bm\theta$ and constructs the rows of {$\bm G$}, and
{\cref{sec:alternating}} solves the optimization by
{an alternating calibration of the coefficients and the noise levels,
each step an inexpensive weighted least-squares solve}.

\subsection{Constraints and the upper-triangular formulation}
\label{sec:constraints-triangular}
\spopinf{} restricts which coefficient vectors $\bm\theta$ are admissible
in {\cref{eq:joint-deterministic-likelihood}}.
\Cref{sec:well-posedness} keeps one coefficient per quadratic monomial, the
upper-triangular formulation, and \cref{sec:constraints} imposes energy
conservation as linear equality constraints on $\bm\theta$, enforced exactly.

\subsubsection{Well-posedness and the upper-triangular representation}
\label{sec:well-posedness}
If every entry of each matrix $\bm B_i$ is treated as an independent
coefficient, the cross term in $\bm a^\top\bm B_i\bm a$ is
$[(\bm B_i)_{jk}+(\bm B_i)_{kj}]\,a_ja_k$, so the data can identify only
the sum.  The following proposition states the resulting failure of
well-posedness precisely; it sharpens the discussion in
\citet[Appendix~B.3.2--B.3.3]{mou2023multiscale}.

\begin{proposition}[Structural non-identifiability of the full
parametrization]
\label{prop:nonidentifiability}
Fix $i$ and $j\ne k$, let $\delta\ne0$, and let $\bm\nu$ be the parameter
perturbation that adds $+\delta$ to $(\bm B_i)_{jk}$ and $-\delta$ to
$(\bm B_i)_{kj}$ and leaves every other coefficient unchanged.  Then
$\bm f_{\bm\theta+\bm\nu}=\bm f_{\bm\theta}$ identically on $\R^r$.
Consequently, in the regression model \cref{eq:forward-response} the
feature matrices satisfy $\bm\Phi_n\bm\nu=\bm0$ for every snapshot, so
any estimation criterion that depends on $\bm\theta$ only through the
model, including the least-squares problem \cref{eq:ls} and the
likelihood introduced in \cref{sec:joint-likelihood}, is constant along
the affine lines $\bm\theta+\R\bm\nu$; the Gram matrix
$\sum_n\bm\Phi_n^\top\bm\Phi_n$, weighted or not, is singular; and no
amount of data determines $(\bm B_i)_{jk}$ and $(\bm B_i)_{kj}$
separately.  The estimation problem is therefore not well posed.
\end{proposition}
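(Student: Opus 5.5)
The argument is a direct computation, followed by propagating the identity through each criterion in turn. First I would compute the change in the model under the perturbation. Only the $i$th component of the drift involves $\bm B_i$. For any $\bm a\in\R^r$, the perturbation contributes
\[
\bm a^\top(\delta\,\bm e_j\bm e_k^\top-\delta\,\bm e_k\bm e_j^\top)\bm a=\delta\,a_ja_k-\delta\,a_ka_j=0,
\]
because the perturbing matrix $\delta(\bm e_j\bm e_k^\top-\bm e_k\bm e_j^\top)$ is skew-symmetric. Every quadratic form of a skew-symmetric matrix vanishes. The linear and constant parts are untouched, and $\bm f_{\rm known}$ does not depend on $\bm\theta$. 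Hence $\bm f_{\bm\theta+\bm\nu}(\bm a)=\bm f_{\bm\theta}(\bm a)$ for all $\bm a$.

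Second, I would transfer this identity to the feature matrices. In the full parametrization the drift is linear in $\bm\theta$, so $\bm f_{\bm\theta}=\bm f_{\rm known}+\bm\Phi(\bm a)\bm\theta$. The first step then gives $\bm\Phi(\bm a)\bm\nu=\bm0$ for every $\bm a$, and in particular $\bm\Phi_n\bm\nu=\bm0$ at every snapshot. Concretely, the columns of $\bm\Phi$ attached to $(\bm B_i)_{jk}$ and $(\bm B_i)_{kj}$ are identical: both equal $a_ja_k\bm e_i$. The same holds for any scalar multiple $t\bm\nu$.

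Third, I would draw the consequences. Any criterion that depends on $\bm\theta$ only through the residuals $\bm y_n-\bm\Phi_n\bm\theta$ is unchanged along $\bm\theta+\R\bm\nu$. This covers \cref{eq:ls} and the data terms of \cref{eq:joint-deterministic-likelihood} for every $\bm R$. For the weighted Gram matrix $\sum_n\bm\Phi_n^\top\bm W\bm\Phi_n$ with any weight $\bm W$ (e.g.\ $\bm R^{-1}$), we have $\bm\nu$ in its kernel, so the matrix is singular. Finally, the data enter only through $\bm\Phi_n\bm\theta$, and the law of the increments in \cref{eq:increment-model} is the same for $\bm\theta$ and $\bm\theta+t\bm\nu$. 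So no amount of data distinguishes $(\bm B_i)_{jk}+\delta$ from $(\bm B_i)_{jk}$ while the sum $(\bm B_i)_{jk}+(\bm B_i)_{kj}$ is held fixed. Minimizers, if they exist, form unbounded affine families, which is the claimed failure of well-posedness.

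The main obstacle is the qualifier ``depends on $\bm\theta$ only through the model''. The ridge term $\tfrac12\bm\theta^\top\bm\Gamma\bm\theta$ and the constraint $\bm G\bm\theta=\bm0$ need not be invariant along $\bm\nu$. I would handle this by stating the invariance for the data-fit part, and hence for the unregularized problem. I would then remark that a ridge term merely selects one representative, symmetric or minimum-norm, by fiat rather than from the data. This selection motivates the upper-triangular parametrization in place of relying on regularization.
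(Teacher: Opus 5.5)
Your proof is correct and follows the paper's argument. Your skew-symmetry observation is the paper's remark that the two coefficients multiply the identical feature $a_ja_k$. From there the corresponding columns of every $\bm\Phi_n$ coincide, $\bm\Phi_n\bm\nu=\bm0$, and $\bm\nu$ lies in the kernel of every weighted Gram matrix. The paper closes the ``no amount of data'' claim with the Fisher information inheriting this null vector; you use the slightly stronger fact that the increment law itself is unchanged.

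Your ridge caveat is a correct sharpening of the qualifier ``depends on $\bm\theta$ only through the model.'' For the constraint you can say more than ``need not be invariant.'' The energy relations written in full coordinates, (iii) of \cref{prop:energy-equivalence}, satisfy $\bm G\bm\nu=\bm0$, because the cubic form sees only the symmetric part of each $\bm B_i$. So the constrained, unregularized likelihood is also flat along $\bm\theta+\R\bm\nu$.
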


\begin{proof}
Since $a_ja_k=a_ka_j$, the form $\bm a^\top\bm B_i\bm a$ depends on the
pair $\big((\bm B_i)_{jk},(\bm B_i)_{kj}\big)$ only through its sum,
which $\bm\nu$ preserves; hence the drift is unchanged, and in the
regression \cref{eq:forward-response} the features multiplying the two
coefficients are the identical signal $a_j(t)a_k(t)$, so the
corresponding columns of every $\bm\Phi_n$ coincide and
$\bm\Phi_n\bm\nu=\bm0$.  Any criterion built from the residuals
$\bm y_n-\bm\Phi_n\bm\theta$ is then invariant under
$\bm\theta\mapsto\bm\theta+\bm\nu$, and $\bm\nu$ lies in the null space
of $\sum_n\bm\Phi_n^\top\bm W\bm\Phi_n$ for any weighting $\bm W$; in
particular, for Gaussian noise the Fisher information is proportional to
such a weighted Gram matrix and inherits the null vector.
\end{proof}
\begin{corollary}[Upper-triangular reparametrization]
\label{cor:triangular}
The map that sends each full slice to its upper-triangular representative,
$(\bm B_i)_{jk}\mapsto(\bm B_i)_{jk}+(\bm B_i)_{kj}$ for $j<k$ with
diagonal entries unchanged and lower entries set to zero, is a bijection
between the model-equivalence classes of
\cref{prop:nonidentifiability} and the retained coordinates.  It leaves the
vector field, and therefore the likelihood, unchanged, and it removes every
duplicated feature column, so the flat directions of
\cref{prop:nonidentifiability} disappear.  Identifiability of the restricted
problem then reduces to the standard requirement that \yrev{the design matrix have full column rank}, which is a property of
the data rather than of the parametrization.
\end{corollary}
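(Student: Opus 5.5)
We first make precise the equivalence relation implicit in \cref{prop:nonidentifiability}. Two full parametrizations $\mathcal B$ and $\mathcal B'$ are called model-equivalent if they define the same quadratic map, $\mathcal Q_{\mathcal B}=\mathcal Q_{\mathcal B'}$ on $\R^r$. The observation behind the proof is that a quadratic form on $\R^r$ is determined exactly by its coefficients in the monomial basis $\{a_ja_k\}_{j\le k}$. For the $i$th slice we have the expansion
\begin{equation*}
\bm a^\top\bm B_i\bm a=\sum_j (\bm B_i)_{jj}\,a_j^2+\sum_{j<k}\big[(\bm B_i)_{jk}+(\bm B_i)_{kj}\big]\,a_ja_k .
\end{equation*}
The monomials $a_ja_k$ with $j\le k$ are linearly independent as functions on $\R^r$. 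To see this, recover each coefficient by differentiation: $\partial^2/\partial a_j\partial a_k$ applied to the form isolates the coefficient of $a_ja_k$ (up to a factor $2$ on the diagonal). Hence $\mathcal B\sim\mathcal B'$ if and only if, for every $i$, the diagonal entries agree and the symmetric sums $(\bm B_i)_{jk}+(\bm B_i)_{kj}$ agree for all $j<k$.

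With this characterization, the proof that the stated map $T$ is a bijection between equivalence classes and upper-triangular coordinates has three steps.
\begin{enumerate}
\item \emph{$T$ is well defined on classes.} The output of $T$ is exactly the list of invariants above, so equivalent parametrizations are sent to the same point.
\item \emph{$T$ is injective on classes.} Equal images mean equal invariants, which by the characterization means the two parametrizations are equivalent.
\item \emph{$T$ is surjective.} Any upper-triangular array is itself a full parametrization with zero lower entries, and $T$ fixes it, so every upper-triangular point is attained.
\end{enumerate}
Since $T(\mathcal B)\sim\mathcal B$, the vector field $\bm f_{\bm\theta}$ is unchanged. The likelihood \cref{eq:joint-deterministic-likelihood} depends on $\bm\theta$ only through $\bm\Phi_n\bm\theta$, and this is the same invariance argument as in \cref{prop:nonidentifiability}, so the likelihood is unchanged as well.

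Next we show the flat directions disappear. In the reduced parametrization each monomial $a_ja_k$ with $j\le k$ has exactly one column per equation $i$, namely the library $\bm\ell$ of \cref{eq:spopinf-library}. The duplicated columns $a_ja_k$ and $a_ka_j$ from the proof of \cref{prop:nonidentifiability} are merged into one, so no perturbation of the form $\bm\nu$ exists in the reduced space. More strongly, suppose a perturbation $\bm\nu'$ of the reduced coordinates satisfies $\bm\Phi(\bm a)\bm\nu'=\bm0$ identically in $\bm a$. Linear independence of the functions in $\bm\ell$ (monomials of degree at most two) forces $\bm\nu'=\bm0$. So the reduced model has no structural null direction.

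The remaining claim is that any null direction of the weighted Gram matrix $\sum_n\bm\Phi_n^\top\bm W\bm\Phi_n$ must come from the data. Because $\bm\Phi_n$ is block-diagonal with rows $\bm\ell(\bm a_n)^\top$ (one block per equation), and $\bm W$ is diagonal positive definite, the Gram matrix is block-diagonal with blocks proportional to $\sum_n\bm\ell(\bm a_n)\bm\ell(\bm a_n)^\top$. It is therefore nonsingular exactly when the design matrix with rows $\bm\ell(\bm a_n)^\top$ has full column rank. That condition depends only on the snapshots $\bm a_n$: it holds, for example, when they are not contained in the zero set of any nonzero polynomial of degree at most two. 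It does not depend on the parametrization.

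The main obstacle is making ``bijection between equivalence classes'' rigorous, and that rests entirely on the linear independence of the monomials. With the differentiation argument that step is short. The rest is bookkeeping about which columns are retained when parts of the drift are moved into $\bm f_{\rm known}$; dropping columns preserves all of the above.
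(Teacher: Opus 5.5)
Your proposal is correct and follows the reasoning the paper relies on: the data identify only the diagonal entries and the sums $(\bm B_i)_{jk}+(\bm B_i)_{kj}$, and merging the duplicated columns leaves the Kronecker-structured Gram matrix of \cref{rem:block-structure}, which is nonsingular exactly when the design matrix has full column rank. Your linear-independence-of-monomials argument supplies the rigor the paper leaves implicit. One small precision: state the invariance for the unpenalized likelihood \cref{eq:joint-increment-likelihood}, or for the first two terms of \cref{eq:joint-deterministic-likelihood}, because the ridge term $\tfrac12\bm\theta^\top\bm\Gamma\bm\theta$ depends on $\bm\theta$ directly and is not preserved by the reparametrization.
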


Accordingly, \spopinf{} retains one coefficient per unique monomial
$a_ja_k$, $j\leq k$---an upper-triangular representation of each
$\bm B_i$.  The energy-preserving constraints derived next in
\cref{sec:constraints} are imposed in these coordinates, where they take
the explicit reduced form \cref{eq:reduced-relations}.

\Cref{fig:identifiability} illustrates \cref{prop:nonidentifiability} from
the inference viewpoint.  In the forward direction, the two coefficients act
as separate gains on the identical signal $a_j(t)a_k(t)$ feeding the same
trajectory component, so the observed contribution to the time derivative is
proportional to their sum.  In the inverse direction, the likelihood is
constant along the flat direction of \cref{prop:nonidentifiability}: its
minimizing set is an entire line, whereas after the reparametrization of
\cref{cor:triangular} the restricted problem is strictly convex in the
retained coefficient, with the unique minimizer equal to the sum.

\begin{figure}[H]
  \centering
  \includegraphics[width=\textwidth]{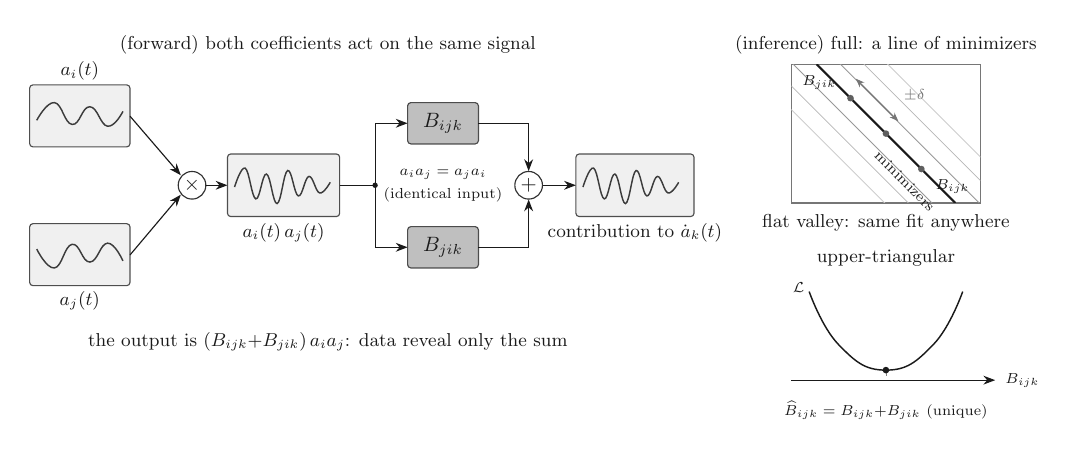}
  \caption{Why the upper-triangular representation is necessary.  Left
  (forward): {with $B_{ijk}=(\bm B_i)_{jk}$ denoting the coefficient of
  $a_ja_k$ in the $i$th equation, the coefficients $B_{ijk}$ and $B_{ikj}$ act
  as two gains applied to the identical signal $a_j(t)\,a_k(t)$ and are merged
  into the same output component, so the data reveal only the sum
  $B_{ijk}+B_{ikj}$}
  (\cref{prop:nonidentifiability}).
  \yrev{Right (inverse):} contours of the likelihood over the coefficient pair are parallel lines
  and the minimizers form an entire flat valley (top), whereas the
  upper-triangular restriction eliminates the flat direction and the
  restricted problem is strictly convex with the unique minimizer equal to
  the sum (bottom, \cref{cor:triangular}).}
  \label{fig:identifiability}
\end{figure}

\Cref{fig:three-models} places \spopinf{} between the two models it is compared with.  The closure ROM above it is intrusive: its operators are assembled from the PDE and its closure term cannot be evaluated in the reduced space.  
\spopinf{} is non-intrusive: every operator is inferred from the projected trajectory, and the stochastic component takes over
the role of the closure term.  Operator inference below it is likewise non-intrusive and shares the quadratic parametrization, but has neither the stochastic term nor the energy constraint.
\begin{figure}[H]
  \centering
  \includegraphics[width=.9\textwidth]{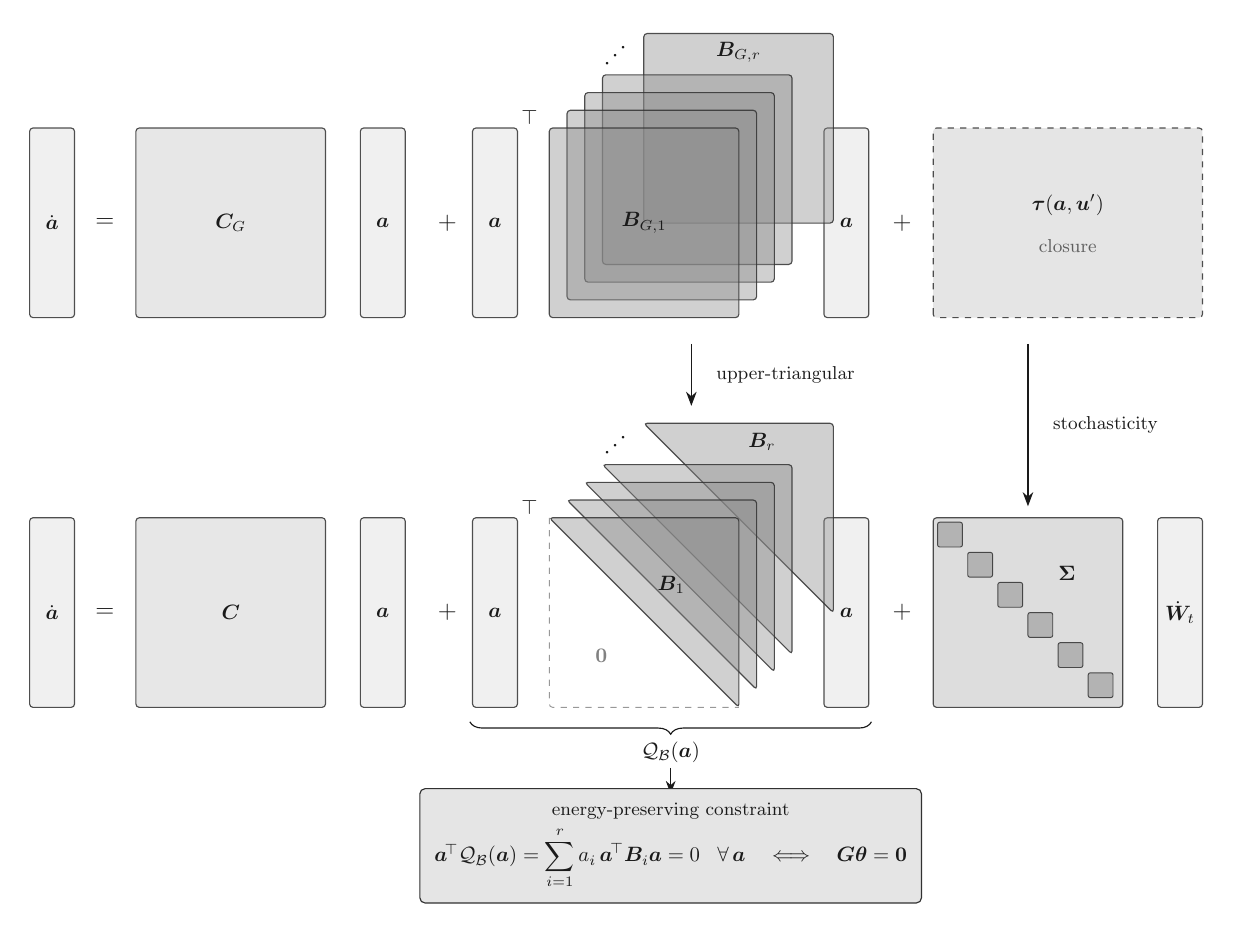}
  \caption{The three quadratic models compared in this paper.  (a) Closure
  ROM: the intrusive Galerkin projection \cref{eq:pde-grom}, with the full
  quadratic tensor $\mathcal B_G=(\bm B_{G,1},\ldots,\bm B_{G,r})$
  assembled from the PDE operators and the closure term
  $\bm\tau(\bm a,\bm u')$ (dashed) of \cref{eq:pde-closure}, which depends
  on the truncated modes and cannot be evaluated within the ROM.
  (b) \spopinf{}: the stochastic model \cref{eq:cpopinf-vector}, learned
  non-intrusively from the projected trajectory.  Its operators
  $(\mathcal B,\bm C)$ are inferred, not assembled, and differ from the
  Galerkin ones; the stochastic component $\bm\Sigma\dot{\bm W}_t$, the
  white-noise form of $\bm\Sigma\,\mathrm d\bm W_t$, mimics the effect of
  the closure term (arrow ``stochasticity''; the $\mathrm dt$ on the drift
  is implicit).  The quadratic tensor is in the upper-triangular
  representation of \cref{cor:triangular}, and the energy-preserving
  constraint \cref{eq:energy,eq:constraint} is enforced exactly (box).
  (c) Operator inference in its matricized form \cref{eq:opinf-standard},
  $\widehat{\bm H}\in\R^{r\times r(r+1)/2}$ acting on the unique monomials
  $\bm a\,\widetilde\otimes\,\bm a$: also non-intrusive and with the same
  parametrization as (b), but with neither stochastic term nor constraint,
  fitted by least squares.}
  \label{fig:three-models}
\end{figure}

\subsubsection{Energy-preserving quadratic constraints}
\label{sec:constraints}
Let $E(\bm a)=\tfrac12\|\bm a\|_2^2$ denote the resolved energy, so that
$\nabla E(\bm a)=\bm a$ and the level sets of $E$ are the spheres centered at
the origin.
{We call $E$ the resolved energy \yrev{because it coincides with the physical energy in}
\cref{sec:sbe-pde}, where the sine basis is orthonormal in the variable whose
$L^2$ norm is the energy.  In general $\tfrac12\|\bm a\|_2^2$ is the quadratic
invariant \emph{of the variable in which the basis is orthonormal}; for the
quasi-geostrophic problem of \cref{sec:qg-pde}, where POD is applied to the
vorticity, it is the resolved \emph{enstrophy}
$\tfrac12\|\omega_r\|_{L^2}^2$ rather than the energy
$\tfrac12(\psi_r,\omega_r)$, and \cref{rem:weighted-energy} treats the general
weighted case.  Nothing below depends on which of the two it is:
\cref{prop:energy-equivalence} is a statement about the cubic form alone.}
Along any trajectory of the deterministic part of
\cref{eq:cpopinf-vector}, the chain rule gives the energy rate
\begin{equation}
  \frac{\mathrm d}{\mathrm dt}E(\bm a)
  =\bm a^\top\dot{\bm a}
  =\underbrace{\bm a^\top\mathcal Q_{\mathcal B}(\bm a)}_{\text{quadratic}}
  +\;\bm a^\top\bm C\bm a+\bm a^\top\bm D.
  \label{eq:energy-rate}
\end{equation}
The quadratic term is called \emph{energy preserving} if its contribution to
\cref{eq:energy-rate} vanishes identically:
\begin{equation}
  \bm a^\top\mathcal Q_{\mathcal B}(\bm a)
  =\sum_{i=1}^r a_i\,\bm a^\top\bm B_i\bm a=0
  \quad\text{for every }\bm a\in\R^r.
  \label{eq:energy}
\end{equation}
Writing $B_{ijk}=(\bm B_i)_{jk}$ for the tensor entries, the left-hand side
of \cref{eq:energy} is the homogeneous cubic form
$\mathcal C(\bm a)=\sum_{i,j,k=1}^{r}B_{ijk}\,a_ia_ja_k$.  The following
proposition collects the equivalent characterizations of \cref{eq:energy}
that the estimator uses.

\begin{proposition}[Characterizations of energy preservation]
\label{prop:energy-equivalence}
The following statements are equivalent.
\begin{enumerate}
\item[\textup{(i)}] The quadratic term contributes no instantaneous energy
  change along any trajectory: the first term in \cref{eq:energy-rate}
  vanishes for every solution.
\item[\textup{(ii)}] $\mathcal Q_{\mathcal B}(\bm a)\perp\bm a$ for every
  $\bm a\in\R^r$; equivalently, the quadratic vector field is tangent to
  every level sphere of $E$.
\item[\textup{(iii)}] The full-tensor entries satisfy
\begin{subequations}
\label{eq:cubic-relations}
\begin{align}
  B_{iii}&=0,\\
  B_{iij}+B_{iji}+B_{jii}&=0, &&i\ne j,\\
  B_{ijk}+B_{ikj}+B_{jik}+B_{jki}+B_{kij}+B_{kji}&=0,
  &&i,j,k\ \text{distinct}.
\end{align}
\end{subequations}
\item[\textup{(iv)}] In the upper-triangular representation of
  \cref{sec:well-posedness}, where $B_{ijk}=0$ for $j>k$, the retained
  entries satisfy
\begin{subequations}
\label{eq:reduced-relations}
\begin{align}
  B_{iii}&=0, &&1\le i\le r,\\
  B_{iij}+B_{jii}&=0, &&i<j,\\
  B_{ijj}+B_{jij}&=0, &&i<j,\\
  B_{ijk}+B_{jik}+B_{kij}&=0, &&i<j<k,
\end{align}
\end{subequations}
which form $r+r(r-1)+\binom{r}{3}$ linear equations with at most three
nonzero entries per row.
\end{enumerate}
\end{proposition}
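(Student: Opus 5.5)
The plan is to prove the chain (i)$\Leftrightarrow$(ii)$\Leftrightarrow$\cref{eq:energy}, then \cref{eq:energy}$\Leftrightarrow$(iii), and finally (iii)$\Leftrightarrow$(iv) by specializing to the upper-triangular coordinates of \cref{cor:triangular}.

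\emph{First step: (i), (ii) and \cref{eq:energy}.} Condition (ii) is \cref{eq:energy} read pointwise: $\bm a^\top\mathcal Q_{\mathcal B}(\bm a)=0$ says $\mathcal Q_{\mathcal B}(\bm a)$ is orthogonal to $\nabla E(\bm a)=\bm a$. The normal of the sphere through $\bm a\neq\bm0$ is $\bm a$, so this is the same as tangency to every level sphere of $E$. For (i)$\Leftrightarrow$\cref{eq:energy}, the direction \cref{eq:energy}$\Rightarrow$(i) is immediate. For the converse, every $\bm a_0\in\R^r$ is the initial value of some local solution of the (polynomial, hence locally Lipschitz) deterministic ODE. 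Evaluating the first term of \cref{eq:energy-rate} at $t=0$ therefore gives $\bm a_0^\top\mathcal Q_{\mathcal B}(\bm a_0)=0$ for all $\bm a_0$.

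\emph{Second step: \cref{eq:energy}$\Leftrightarrow$(iii).} The cubic form is $\mathcal C(\bm a)=\sum_{i,j,k}B_{ijk}a_ia_ja_k$. Because the monomials $a_ia_ja_k$ commute, I will group terms by the multiset $\{i,j,k\}$. This writes $\mathcal C=\sum_{\text{multisets}}c_{\{ijk\}}\,m_{\{ijk\}}(\bm a)$, where $m$ runs over the distinct cubic monomials and $c$ is the sum of $B$ over all distinct orderings of the multiset. There are three types of multiset. The type $\{i,i,i\}$ gives the single ordering $B_{iii}$. The type $\{i,i,j\}$ with $i\neq j$ gives the three orderings $B_{iij},B_{iji},B_{jii}$. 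The type $\{i,j,k\}$ with distinct indices gives all six permutations. A real polynomial vanishes identically on $\R^r$ iff all its coefficients in the monomial basis vanish. So $\mathcal C\equiv0$ iff each grouped coefficient is zero, which is exactly \cref{eq:cubic-relations}.

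\emph{Third step: (iii)$\Leftrightarrow$(iv).} In the upper-triangular representation, I set $B_{ijk}=0$ whenever $j>k$ and substitute into each relation of (iii).
\begin{itemize}
\item The single-index relation is unchanged.
\item For the pair $\{i,i,j\}$ with $i<j$, the ordering $B_{iji}$ has second index $j$ greater than third index $i$, so it drops. The relation becomes $B_{iij}+B_{jii}=0$.
\item For the pair $\{j,j,i\}$ with $i<j$, i.e.\ the multiset $\{i,j,j\}$, the surviving entries are $B_{ijj}$ and $B_{jij}$, since $B_{jji}$ has $j>i$ in the last two slots. This gives the third line of \cref{eq:reduced-relations}.
\item For distinct $i<j<k$, exactly the three orderings whose last two indices are increasing survive: $B_{ijk}$, $B_{jik}$ and $B_{kij}$.
\end{itemize}
Conversely, the equivalence class of \cref{cor:triangular} leaves $\mathcal C$ unchanged. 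Hence (iv) for the representative is equivalent to (iii) for any tensor in the class, and both are equivalent to $\mathcal C\equiv0$.

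\emph{Counting.} The four families contain $r$, $\binom r2$, $\binom r2$ and $\binom r3$ rows respectively, which totals $r+r(r-1)+\binom r3$. Each row has at most three nonzero entries. Distinct rows involve disjoint sets of retained coefficients, because each retained coefficient $B_{ijk}$ with $j\le k$ belongs to exactly one multiset. The rows are therefore linearly independent, although the statement does not require this.

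\emph{Expected obstacle.} The only delicate point is the bookkeeping in the third step: one must check, for each multiset type, which orderings have second index $\le$ third index. This is where the asymmetry between the second and third lines of \cref{eq:reduced-relations} comes from. The analytic step (i)$\Rightarrow$\cref{eq:energy} is handled simply by the existence of local solutions through every point.
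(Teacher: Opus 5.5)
Your proposal is correct and follows essentially the same route as the paper: the gradient/normal-vector argument for (i)$\Leftrightarrow$(ii), grouping the cubic form $\mathcal C$ by monomial and setting each coefficient to zero for (ii)$\Leftrightarrow$(iii), and substituting $B_{ijk}=0$ for $j>k$ for (iii)$\Leftrightarrow$(iv), with the row count coming from one equation per monomial. Your additions are correct refinements of points the paper leaves implicit: local existence of a solution through every point, which passes from ``along every trajectory'' to ``at every state''; invariance of $\mathcal C$ under the upper-triangular map; and linear independence of the rows via their disjoint supports.
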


\begin{proof}
(i)$\Leftrightarrow$(ii): by \cref{eq:energy-rate}, the quadratic
contribution to the energy rate at state $\bm a$ equals
$\langle\nabla E(\bm a),\mathcal Q_{\mathcal B}(\bm a)\rangle
=\bm a^\top\mathcal Q_{\mathcal B}(\bm a)$, and $\nabla E(\bm a)=\bm a$ is
the outward normal of the sphere through $\bm a$, so the contribution
vanishes at every state if and only if the field is everywhere tangential.
(ii)$\Leftrightarrow$(iii): $\mathcal C$ is a polynomial on $\R^r$, and a
polynomial over an infinite field vanishes identically if and only if all of
its coefficients vanish.  Grouping the sum defining $\mathcal C$ by
monomials, the coefficient of $a_p^3$ is $B_{ppp}$; the coefficient of
$a_p^2a_q$ with $p\ne q$ is $B_{ppq}+B_{pqp}+B_{qpp}$; and the coefficient
of $a_pa_qa_s$ with $p,q,s$ distinct is the sum of $B_{ijk}$ over the six
permutations $(i,j,k)$ of $(p,q,s)$.  Setting each coefficient to zero gives
\cref{eq:cubic-relations}.
(iii)$\Leftrightarrow$(iv): substituting $B_{ijk}=0$ for $j>k$ into
\cref{eq:cubic-relations} removes the entries $B_{iji}$ for $i<j$, $B_{jji}$
for $i<j$, and, for $i<j<k$, the entries $B_{ikj}$, $B_{jki}$, and
$B_{kji}$, leaving exactly \cref{eq:reduced-relations}.  The count follows
from one equation per monomial: $r$ pure cubes, $r(r-1)$ ordered pairs for
the mixed-square monomials, and $\binom{r}{3}$ distinct triples.
\end{proof}

The relations \cref{eq:reduced-relations}, restricted to the 
coordinates of the regression library, are assembled row by row as
\begin{equation}
  {\bm G\bm\theta=\bm 0}.
  \label{eq:constraint}
\end{equation}

\begin{remark}[{Weighted energies}]
\nrev{\label{rem:weighted-energy}}%
{The choice $E(\bm a)=\tfrac12\|\bm a\|_2^2$ presumes a basis that is
orthonormal in the variable whose $L^2$ norm defines the conserved quantity.
If the conserved quantity is instead a general quadratic form
$E_{\bm M}(\bm a)=\tfrac12\bm a^\top\bm M\bm a$ with $\bm M$ symmetric
positive definite---for instance when POD is applied to the vorticity, so
that $\tfrac12\|\bm a\|_2^2$ is the resolved \emph{enstrophy} rather than the
energy---the change of variables $\tilde{\bm a}=\bm M^{1/2}\bm a$ reduces
$E_{\bm M}$ to the spherical case, and
\cref{prop:energy-equivalence} applies verbatim to the transformed tensor;
the resulting conditions remain linear equality constraints on $\bm\theta$.}
\end{remark}

\begin{remark}[Stochastic energy balance]
\label{rem:energy-balance}
For the full SDE \cref{eq:cpopinf-vector} with constant $\bm\Sigma$,
It\^o's formula gives
$\mathrm dE=\big[\bm a^\top\mathcal Q_{\mathcal B}(\bm a)
+\bm a^\top\bm C\bm a+\bm a^\top\bm D
+\tfrac12\operatorname{tr}(\bm\Sigma\bm\Sigma^\top)\big]\mathrm dt
+\bm a^\top\bm\Sigma\,\mathrm d\bm W_t$.
Under \cref{eq:energy} the first term vanishes identically, so
\begin{equation}
  \frac{\mathrm d}{\mathrm dt}\mathbb E[E]
  =\mathbb E\big[\bm a^\top\bm C\bm a+\bm a^\top\bm D\big]
  +\tfrac12\operatorname{tr}(\bm\Sigma\bm\Sigma^\top):
  \label{eq:mean-energy-balance}
\end{equation}
the quadratic term redistributes energy among components while the energy
budget is set entirely by the linear operator, the forcing, and the noise,
as in the stochastic climate-modeling framework
\citep{majda2001framework,mtv2003}.
\end{remark}

\begin{remark}[What the constraint removes, and what it does not]
\label{rem:blowup}
Decompose any quadratic field at $\bm a\ne\bm0$ into tangential and radial
parts, $\mathcal Q_{\mathcal B}(\bm a)
=\mathcal Q_{\rm tan}(\bm a)
+c(\widehat{\bm a})\,\|\bm a\|_2\,\bm a$ with
$\widehat{\bm a}=\bm a/\|\bm a\|_2$ and
$c(\widehat{\bm a})=\widehat{\bm a}^\top
\mathcal Q_{\mathcal B}(\widehat{\bm a})$.  The radial part injects energy
at rate $c(\widehat{\bm a})\,\|\bm a\|_2^3$, which is cubic in the state
norm; if regression noise makes $c$ positive along some direction, this
production eventually dominates any linear dissipation, whose removal rate
is only quadratic in the norm, and the identified model can blow up in
finite time \citep{schlegel2015}.  Constraint \cref{eq:constraint} sets
$c\equiv0$, deleting this mechanism identically.  The constraint alone does
not, however, guarantee boundedness: the linear part must additionally
dissipate, as in the trapping condition of \citet{kaptanoglu2021trapping}.
\Cref{lem:moment-bound} records the corresponding second-moment statement.
\end{remark}

\begin{lemma}[Second-moment bound under the constraint]
\label{lem:moment-bound}
Suppose \cref{eq:energy} holds and the symmetric part of the linear operator
is negative definite,
$\tfrac12\lambda_{\max}(\bm C+\bm C^\top)=-\mu<0$.  Then along
\cref{eq:cpopinf-vector},
\begin{equation}
  \frac{\mathrm d}{\mathrm dt}\mathbb E[E]
  \le-\mu\,\mathbb E[E]
  +\frac{\|\bm D\|_2^2}{2\mu}
  +\frac12\operatorname{tr}(\bm\Sigma\bm\Sigma^\top),
  \qquad\text{so}\qquad
  \limsup_{t\to\infty}\mathbb E[E]
  \le\frac{1}{\mu}\left(
  \frac{\|\bm D\|_2^2}{2\mu}
  +\frac12\operatorname{tr}(\bm\Sigma\bm\Sigma^\top)\right).
\end{equation}
\end{lemma}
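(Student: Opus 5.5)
Starting from the mean energy balance \cref{eq:mean-energy-balance} of \cref{rem:energy-balance}, which already uses \cref{eq:energy} to delete the cubic term, we bound the remaining terms pointwise in $\bm a$ and then take expectations. First, since $\bm a^\top\bm C\bm a=\tfrac12\bm a^\top(\bm C+\bm C^\top)\bm a$, the Rayleigh quotient bound gives $\bm a^\top\bm C\bm a\le-\mu\|\bm a\|_2^2=-2\mu E(\bm a)$. Second, the forcing term is handled by Cauchy--Schwarz and Young's inequality with the weight chosen to absorb half of the dissipation:
\begin{equation*}
  \bm a^\top\bm D\le\|\bm a\|_2\|\bm D\|_2
  \le\frac{\mu}{2}\|\bm a\|_2^2+\frac{\|\bm D\|_2^2}{2\mu}
  =\mu E(\bm a)+\frac{\|\bm D\|_2^2}{2\mu}.
\end{equation*}
Adding the two gives $\bm a^\top\bm C\bm a+\bm a^\top\bm D\le-\mu E(\bm a)+\|\bm D\|_2^2/(2\mu)$ for every $\bm a$. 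Taking expectations and inserting into \cref{eq:mean-energy-balance} yields the differential inequality of the lemma.

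The limsup bound then follows from Gr\"onwall's inequality in differential form. Set $K=\|\bm D\|_2^2/(2\mu)+\tfrac12\operatorname{tr}(\bm\Sigma\bm\Sigma^\top)$ and $m(t)=\mathbb E[E(\bm a(t))]$. Then $\tfrac{\mathrm d}{\mathrm dt}\big(e^{\mu t}m\big)\le Ke^{\mu t}$, so $m(t)\le e^{-\mu t}m(0)+\tfrac{K}{\mu}(1-e^{-\mu t})$. Letting $t\to\infty$ gives $\limsup m\le K/\mu$, which is the stated bound.

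The main obstacle is the rigorous justification of \cref{eq:mean-energy-balance} itself, as opposed to the algebra. One must know that the SDE has a global solution, with no explosion, and that the stochastic integral $\int\bm a^\top\bm\Sigma\,\mathrm d\bm W$ has zero mean, which requires $\mathbb E\int_0^t\|\bm a\|^2\,\mathrm ds<\infty$. I would handle this by localization. Take stopping times $\tau_R=\inf\{t:\|\bm a(t)\|_2\ge R\}$ and apply It\^o's formula to $E$ up to $t\wedge\tau_R$, where the martingale term has zero mean. Then run the same pointwise bound to get $\mathbb E[E(\bm a(t\wedge\tau_R))]\le m(0)+Kt$ (more precisely, the Gr\"onwall-type bound in integral form). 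Since $E\ge R^2/2$ on $\{\tau_R\le t\}$, this gives $\mathbb P(\tau_R\le t)\to0$, hence non-explosion. Fatou's lemma then passes the integral inequality to the limit $R\to\infty$, and the Gr\"onwall step is applied to the resulting integral inequality. Local Lipschitz continuity of the quadratic drift supplies local existence and uniqueness, and the initial condition is assumed to have $\mathbb E[E(\bm a(0))]<\infty$.
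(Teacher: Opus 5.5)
Your proof is correct and is the paper's own argument: the mean energy balance of \cref{rem:energy-balance}, the Rayleigh-quotient bound $\bm a^\top\bm C\bm a\le-2\mu E$, the Young split $\bm a^\top\bm D\le\mu E+\|\bm D\|_2^2/(2\mu)$, and Gr\"onwall. Your localization paragraph adds the non-explosion and true-martingale justification that the paper takes for granted, but note one point there: after letting $R\to\infty$, the inequality $m(t)\le m(0)+Kt-\mu\int_0^t m(s)\,\mathrm ds$ alone does not give the exponential bound, or even the $\limsup$, because the integral Gr\"onwall lemma needs a nonnegative coefficient. So either apply It\^o's formula to $e^{\mu t}E$ before localizing, or use the linear bound $m(t)\le m(0)+Kt$ to make the martingale a true martingale and recover the exact balance, after which your differential-form Gr\"onwall step applies.
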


\begin{proof}
By \cref{rem:energy-balance},
$\frac{\mathrm d}{\mathrm dt}\mathbb E[E]
=\mathbb E[\bm a^\top\bm C\bm a]+\mathbb E[\bm a^\top\bm D]
+\tfrac12\operatorname{tr}(\bm\Sigma\bm\Sigma^\top)$.
The quadratic form satisfies
$\bm a^\top\bm C\bm a\le-\mu\|\bm a\|_2^2=-2\mu E$, and Young's inequality
gives $\bm a^\top\bm D\le\|\bm a\|_2\|\bm D\|_2
\le\mu E+\|\bm D\|_2^2/(2\mu)$.  Combining the two bounds yields the
differential inequality, and Gr\"onwall's lemma yields the limit superior.
\end{proof}

\subsection{\texorpdfstring{An expectation--maximization \yrev{(EM)-based} iterative learning algorithm}{An expectation--maximization (EM)-based iterative learning algorithm}}
\rev{\label{sec:alternating}}

{\Cref{sec:joint-likelihood} stated the \spopinf{} criterion
\cref{eq:joint-deterministic-likelihood}; we now derive it from the increment
likelihood and solve it.  Recall that the unknowns are the coefficient vector
$\bm\theta=\operatorname{vec}(\mathcal B,\bm C,\bm D)\in\R^p$ of
\cref{eq:fixed-unknown-split}, which stacks the $r\cdot r(r+1)/2$ retained
quadratic coefficients, the $r^2$ linear coefficients, and, when the constant
forcing is inferred, the $r$ constant coefficients, so
$p=r^2(r+3)/2+r$ in the general case, and the diagonal stochastic amplitude
$\bm\Sigma=\operatorname{diag}(\sigma_1,\ldots,\sigma_r)$; the constraint
matrix $\bm G$ of \cref{eq:constraint} is known, and the
formulation applies to any such $\bm G$.}

Under the increment model \cref{eq:increment-model}, the
increments $\bm a_{n+1}-{\bm g_n}-\bm M_n\bm\theta$ are independent Gaussian
vectors, where we abbreviate
\begin{equation}
  {\bm g_n}=\bm a_n+\Delta t\,\bm f_{\rm known}(\bm a_n),
  \qquad
  \bm M_n=\Delta t\,\bm\Phi_n,
  \label{eq:increment-abbrev}
\end{equation}
that is, {$\bm g_n$} is the one-step prediction from the known part of the
drift and $\bm M_n$ maps coefficients to their one-step contribution.
The negative log-likelihood of the data, up to an additive constant, is
therefore
\begin{equation}
  \mathcal L(\bm\theta,\bm Q_\Delta)
  =\frac{\yrev{N-1}}{2}\log|\bm Q_\Delta|
  +\frac12\sum_{n=\yrev{1}}^{N-1}
  \yrev{\bm\epsilon_n^\top\bm Q_\Delta^{-1}\bm\epsilon_n},
  \qquad
  \yrev{\bm\epsilon_n}=\bm a_{n+1}-{\bm g_n}-\bm M_n\bm\theta,
  \qquad {\bm G\bm\theta=\bm0},
  \label{eq:joint-increment-likelihood}
\end{equation}
with $\bm Q_\Delta=\Delta t\,\bm\Sigma\bm\Sigma^\top$ the increment
covariance.  {Dividing each residual by $\Delta t$ converts
\cref{eq:joint-increment-likelihood} into the response form of
\cref{eq:forward-response}, with $\bm Q_\Delta=\Delta t^2\bm R$, and adding the
ridge term recovers exactly the working formulation
\cref{eq:joint-deterministic-likelihood} stated in
\cref{sec:joint-likelihood}.}

Following the joint calibration of \citet{mou2023multiscale}, we solve
\cref{eq:joint-deterministic-likelihood} by alternating minimization over
$\bm R$ and $\bm\theta$; {each subproblem is solved exactly and
inexpensively}.  Let
\yrev{$S=\{1,\ldots,p\}$ be the index set of library columns,} let $\bm\Phi_{n,S}$, $\bm\Gamma_{SS}$, and
{$\bm G_S$} denote the restrictions to those columns (discarding empty
constraint rows), and let $\bm e_n=\bm y_n-\bm\Phi_n\bm\theta$ be the
current residuals.  For fixed $\bm\theta$, minimizing over $\bm R$ gives
the residual-variance update
\begin{equation}
  \bm R \leftarrow
  \operatorname{diag}\!\left[\operatorname{diag}\!\left(
  \frac{1}{\yrev{N-1}}\sum_{n=\yrev{1}}^{N-1}\bm e_n\bm e_n^\top\right)\right],
  \label{eq:R-update}
\end{equation}
that is, $R_i$ is the mean squared residual of the $i$th equation.  For
fixed $\bm R$, the problem in $\bm\theta$ is an equality-constrained
weighted least-squares problem; with the Gram matrix and right-hand side
\begin{equation}
  \bm K_S=\sum_{n=\yrev{1}}^{N-1}\bm\Phi_{n,S}^\top\bm R^{-1}
  \bm\Phi_{n,S}+\bm\Gamma_{SS},
  \qquad
  \bm b_S=\sum_{n=\yrev{1}}^{N-1}\bm\Phi_{n,S}^\top\bm R^{-1}\bm y_n,
  \label{eq:normal-system}
\end{equation}
the stationarity conditions of its Lagrangian give the multiplier and the
constrained coefficients {explicitly} in closed form,
\begin{align}
  {\bm\lambda_G}&=({\bm G_S}\bm K_S^{-1}{\bm G_S^\top})^{-1}
  {\bm G_S}\bm K_S^{-1}\bm b_S,
  \label{eq:multiplier-update}\\
  \bm\theta_S&=\bm K_S^{-1}\bm b_S
  -\bm K_S^{-1}{\bm G_S^\top\bm\lambda_G},
  \label{eq:theta-update}
\end{align}
so every iterate satisfies {$\bm G_S\bm\theta_S=\bm0$} exactly.  This
alternation does not fit a deterministic drift first and then choose a
noise amplitude independently: through $\bm R^{-1}$ in
\cref{eq:normal-system}, the current noise estimate reweights the
coefficient update, and vice versa.  \yrev{Because all states are observed, the expectation step of the EM algorithm reduces to the residual-variance update \cref{eq:R-update}, and the maximization step to the constrained weighted least-squares solve.}

The inner alternation of
\cref{eq:R-update,eq:normal-system,eq:multiplier-update,eq:theta-update}
continues until the coefficient change falls below a tolerance
{$\varepsilon_{\rm tol}$}.

Finally, $\bm R$ is recomputed on the converged \yrev{coefficients},
and the stochastic amplitude follows from
$\bm\Sigma\bm\Sigma^\top=\Delta t\,\bm R$ of \cref{eq:forward-response}:
\begin{equation}
  \widehat{\sigma}_i=\sqrt{R_i\,\Delta t},
  \qquad i=1,\ldots,r.
  \label{eq:stochastic-return}
\end{equation}
The generalized least-squares equations
\cref{eq:normal-system,eq:multiplier-update,eq:theta-update} are an
\emph{inner update} for the joint likelihood, not the definition of
\spopinf{}.

{The complete procedure is summarized as
follows.\par\smallskip\noindent\textbf{Algorithm 1 (\spopinf{}).}
\textit{Input:} snapshots $\bm a_1,\ldots,\bm a_N$, step $\Delta t$, known
drift part $\bm f_{\rm known}$ (possibly zero), constraint matrix $\bm G$, ridge $\bm\Gamma$,
tolerance $\varepsilon_{\rm tol}$.
\begin{enumerate}
\item Form the responses $\bm y_n$ and features $\bm\Phi_n$ of
  \cref{eq:forward-response}; initialize $S=\{1,\ldots,p\}$ and
  $\bm R=\bm I$.
\item Repeat until $\|\bm\theta^{\rm new}-\bm\theta\|\le
  \varepsilon_{\rm tol}$: update $\bm\theta$ by the constrained weighted
  least squares \cref{eq:normal-system,eq:multiplier-update,eq:theta-update},
  then update $\bm R$ by \cref{eq:R-update}.
\item Recompute $\bm R$ on the converged \yrev{coefficients} and return
  $\widehat{\bm\theta}$ and
  $\widehat\sigma_i=\sqrt{R_i\,\Delta t}$ \cref{eq:stochastic-return}.
\end{enumerate}}

{\begin{remark}[Block structure and cost]
\label{rem:block-structure}
The claim that \spopinf{} costs nearly as much as OpInf rests on a structural
property of \cref{eq:normal-system} worth making explicit.  Because the same
feature vector $\bm\ell(\bm a)$ is applied to every equation, the feature
matrix of snapshot $n$ is $\bm\Phi_n=\bm I_r\otimes\bm\ell_n^\top$ with
$\bm\ell_n=\bm\ell(\bm a_n)\in\R^{p_{\rm eq}}$, $p_{\rm eq}=r(r+1)/2+r$
(plus one when $\bm D$ is inferred) and $p=r\,p_{\rm eq}$.  Hence
\begin{equation}
  \sum_{n=\yrev{1}}^{N-1}\bm\Phi_n^\top\bm R^{-1}\bm\Phi_n
  =\bm R^{-1}\otimes\bm M,
  \qquad
  \bm M=\sum_{n=\yrev{1}}^{N-1}\bm\ell_n\bm\ell_n^\top\in
  \R^{p_{\rm eq}\times p_{\rm eq}},
  \label{eq:gram-kronecker}
\end{equation}
so $\bm K$ is block diagonal with blocks $\bm M/R_i+\bm\Gamma_{ii}$ that share
the single matrix $\bm M$.  Three consequences follow.  (i) $\bm M$ and the
right-hand sides are accumulated once, in a single pass over the data costing
$\mathcal O(Np_{\rm eq}^2)$ --- the same pass OpInf makes --- after which no
iteration touches the snapshots again.  (ii) Without constraints the problem
\emph{decouples} equation by equation and is independent of $\bm R$, so the
unconstrained limit of \spopinf{} returns exactly $\bm\theta_{\rm LS}$: the
noise levels influence the estimate \emph{only} through the constrained
correction $-\bm K_S^{-1}\bm G_S^\top\bm\lambda_G$ of
\cref{eq:theta-update}, where they decide how the correction enforcing
$\bm G\bm\theta=\bm0$ is distributed across equations.  (iii) Each iteration
costs $\mathcal O(r\,p_{\rm eq}^3+m^3)$ with $m$ the number of retained
constraint rows, independent of $N$.  In practice $\bm K_S^{-1}$ is never
formed: each diagonal block is factorized by Cholesky and
\cref{eq:multiplier-update,eq:theta-update} are applied as triangular solves,
\yrev{and the inverses in those displays denote these solves.}  \Cref{fig:framework} summarizes the complete framework: the snapshot data of either setting, the joint calibration of the drift and the noise levels under the energy constraints, and the resulting stochastic model.
\begin{figure}[H]
  \centering
  \includegraphics[width=\textwidth]{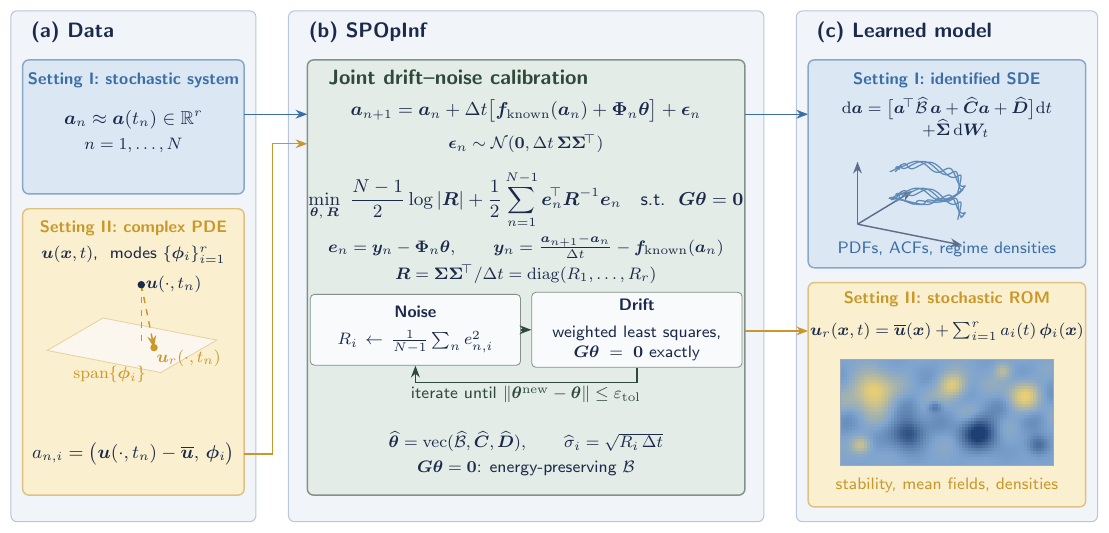}
  \caption{Overview of \spopinf{}.  (a) Training data: observed states of a
  stochastic system (Setting I) or projected coefficients of a PDE solution
  (Setting II).  (b) Joint calibration of the drift and the noise levels
  under the energy constraints $\bm G\bm\theta=\bm 0$.  (c) The learned
  stochastic model: an identified SDE (Setting I) or a stochastic ROM
  (Setting II).}
  \label{fig:framework}
\end{figure}
\end{remark}}

\subsection{Differences between \spopinf{} and OpInf}
\label{sec:opinf-contrast}
Since \spopinf{} and OpInf share the same quadratic--linear library and
the same training data, it is useful to state explicitly where the two
methods differ.  Throughout, we use drift and deterministic component
interchangeably, and likewise diffusion, or noise amplitude, and
stochastic component, as in \cref{sec:model-class}.

The first difference is what each method returns.  OpInf returns the
deterministic coefficients $\bm\theta_{\rm LS}$ and stops, so its output
is an ordinary differential equation: a drift alone.  \spopinf{} returns
the drift coefficients together with the diffusion amplitude,
$(\widehat{\bm\theta},\widehat{\bm\sigma})$, so its output is a complete
stochastic model of the form \cref{eq:cpopinf-vector}, which can be
simulated repeatedly and compared with the true system not only
trajectory by trajectory but also in its long-time statistics.

The second difference is how the coefficients are estimated.  OpInf
minimizes the plain sum of squared residuals \cref{eq:ls}, which treats
every equation equally.  \spopinf{} minimizes the likelihood criterion
\cref{eq:joint-deterministic-likelihood}, in which the noise level of
each equation is itself an unknown.  As a result, the deterministic and
stochastic components are estimated together: the current noise estimate
\cref{eq:R-update} decides how strongly each equation is weighted in the
coefficient update
\cref{eq:normal-system,eq:multiplier-update,eq:theta-update}, and the
returned diffusion amplitude \cref{eq:stochastic-return} belongs to the
final constrained model, rather than being read off afterwards from a
model fitted without it.

The third difference is physical consistency.  OpInf places no
restriction on the fitted coefficients, so noise in the data can produce
quadratic terms that create energy, which no physical member of the model
class does.  \spopinf{} enforces the energy-conservation relations
\cref{eq:constraint} as exact equality constraints inside the
optimization, so the learned quadratic tensor satisfies them by
construction rather than approximately.

In summary, these differences show that \spopinf{} is not simply
standard OpInf supplemented with a noise estimate.  Because the physical
constraints couple the coefficients of different equations, the
noise-weighted constrained regression yields a drift operator that differs
from the one obtained by unconstrained least squares.  Moreover, the
diffusion amplitude is not estimated a posteriori from the residual of a
previously fitted drift; instead, it is determined jointly with the drift
within the same optimization problem.

\section{Numerical tests}
\label{sec:numerical}
The numerical study is organized according to the two settings of \cref{sec:two-settings}. 
In \cref{sec:ident-sds} the method identifies stochastic dynamical systems from their observed state; in \cref{sec:pde-applications} it builds stochastic \yrev{reduced-order} models of complex PDEs from projected snapshots. 
In both cases the question is whether the learned model reproduces the key dynamical and statistical features of the underlying system
\citep{mou2023multiscale}, and the same evaluation methodology is used throughout.

\subsection{Stochastic dynamical systems}
\label{sec:ident-sds}
The first part of the numerical study concerns Setting I of
\cref{sec:two-settings}, in which the training data are directly observed
trajectories of finite-dimensional stochastic dynamical systems.  We
consider five test problems, illustrated in \cref{fig:test-models}, which
span intermittent, complex-valued, chaotic, cyclic, and metastable
dynamics.  All reference trajectories are generated by the
Euler--Maruyama scheme, and all stochastic regressions use the forward
response \cref{eq:forward-response}.  The constant forcing of each system
is treated as known and removed before fitting, so that the inferred
terms \yrev{contain} only the upper-triangular quadratic and the linear
terms.
{This instantiates the known-drift option of \cref{sec:data-model}: the
general \spopinf{} formulation infers $\bm D$ as part of $\bm\theta$, and
supplying the true forcing here simply removes the constant column from the
library.}

The definitions and experimental settings of the five systems are given
first; \cref{sec:sds-recovery} then shows the recovery results for the
operators and the stochastic amplitude, and \cref{sec:sds-statistics}
examines the long-time statistical quantities.

\begin{figure}[H]
  \centering
  \includegraphics[width=\textwidth]{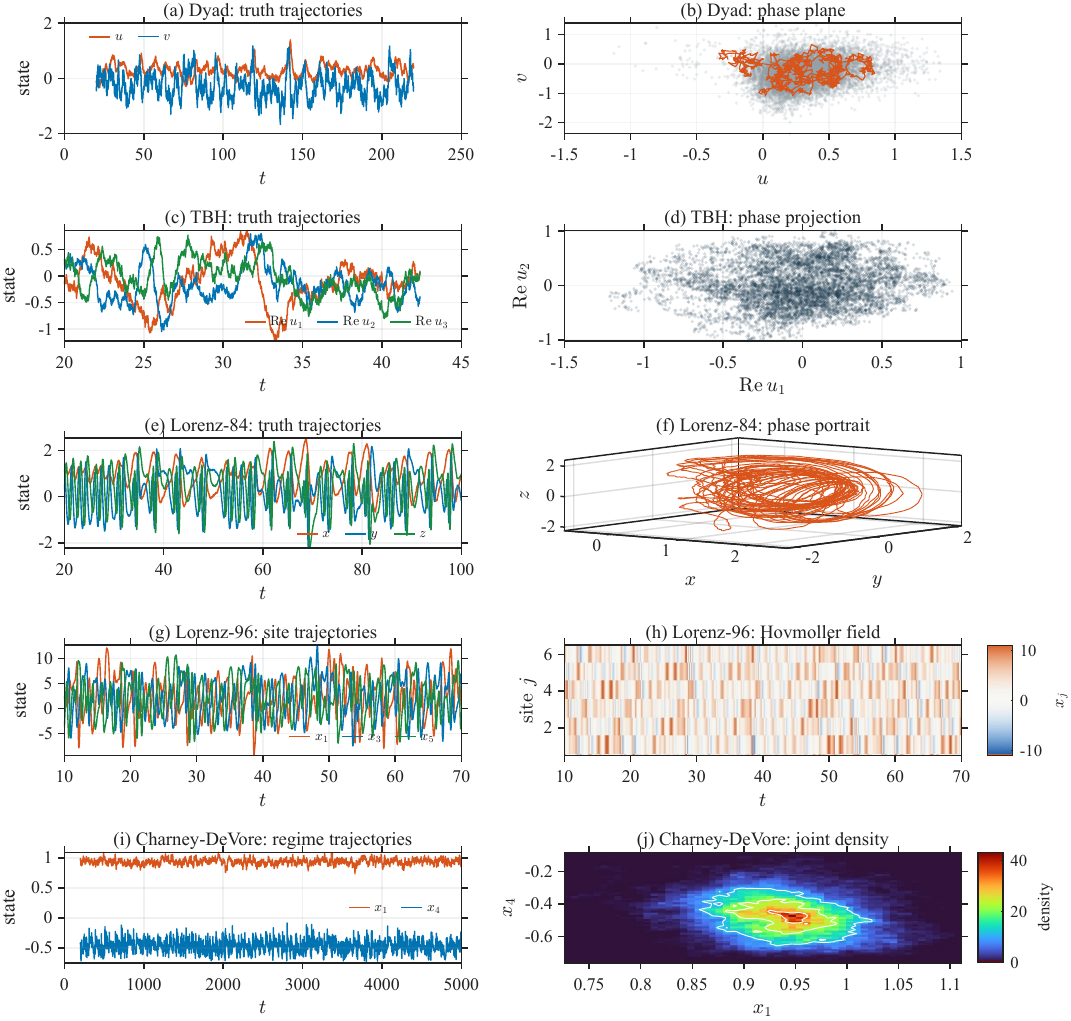}
  \caption{Test models. Truth visualizations show representative
  multicomponent trajectories and a complementary state-space view for the intermittent dyad
  (a,b), truncated Burgers--Hopf system (c,d), Lorenz--84 system (e,f),
  Lorenz--96 system (g,h), and Charney--DeVore system (i,j). The right-hand
  panels are, respectively, a phase plane, a two-mode phase projection, a
  three-dimensional phase portrait, a six-site Hovmoller field, and the joint
  density in the regime-discriminating $(x_1,x_4)$ coordinates. All panels use
  post-transient truth data from the corresponding identification runs.}
  \label{fig:test-models}
\end{figure}

\subsubsection{Intermittent stochastic dyad}
The two-state dyad isolates one nontrivial $a_i^2a_j$ energy constraint while
producing intermittent bursts.  Its governing SDE is \citep{majda2014conceptual,chen2016filtering,chen2018conditional}
\begin{align}
  \mathrm du&=(-d_u u+\gamma uv+F_u)\,\mathrm dt
  +\sigma_u\,\mathrm dW_u,
  \label{eq:dyad-u}\\
  \mathrm dv&=(-d_v v-\gamma u^2)\,\mathrm dt
  +\sigma_v\,\mathrm dW_v.
  \label{eq:dyad-v}
\end{align}
The quadratic contribution conserves $(u^2+v^2)/2$ because the two cubic
terms cancel.  Bursts occur when $v$ weakens the effective damping of $u$.

\paragraph{Settings.}
We set $d_u=0.1$, $d_v=0.5$, $\gamma=1$, $F_u=0.1$,
$(\sigma_u,\sigma_v)=(0.15,0.5)$, and $(u_0,v_0)=(0.1,-0.1)$.
The step is $\Delta t=0.005$, the training interval is $T=2000$, the first
$20$ time units are discarded, and the training seed is 10.  The known
forcing is $(F_u,0)^\top$.  For SPOpInf, we set $\yrev{\bm\Gamma}=10^{-8}\yrev{\bm I}$,
{$\varepsilon_{\rm tol}=10^{-8}$}\yrev{, and at most 100 alternating updates.  The pathwise comparison under identical noise realizations uses a window of 200 time units.}  Long-time PDFs and ACFs use three
independent trajectories of length 4000 per model.

\subsubsection{Three-mode truncated Burgers--Hopf system}
The truncated Burgers--Hopf (TBH) test is simulated in three complex modes
and inferred in the six-dimensional real coordinates
$\bm a=(\Re u_1,\Im u_1,\Re u_2,\Im u_2,\Re u_3,\Im u_3)^\top$
\citep{majda2006slow}.  With independent complex Wiener processes
$\mathrm dZ_k=(\mathrm dW_k^{(R)}+i\,\mathrm dW_k^{(I)})/\sqrt2$, the truth is
\begin{align}
  \mathrm du_1&=\left[-i\left(\overline u_1u_2+\overline u_2u_3\right)
  -\nu u_1\right]\mathrm dt+s\,\mathrm dZ_1,\\
  \mathrm du_2&=\left[-i\left(u_1^2+2\overline u_1u_3\right)
  -\nu u_2\right]\mathrm dt+s\,\mathrm dZ_2,\\
  \mathrm du_3&=\left[-3i u_1u_2-\nu u_3\right]\mathrm dt
  +s\,\mathrm dZ_3.
  \label{eq:tbh}
\end{align}
Thus the complex interaction coefficients are
$(-i,-i,-i,-2i,-3i)${, attached respectively to $\overline u_1u_2$ and
$\overline u_2u_3$ in the $u_1$ equation, to $u_1^2$ and $\overline u_1u_3$
in the $u_2$ equation, and to $u_1u_2$ in the $u_3$ equation,} and all three
linear damping coefficients are $-\nu$.
{In the real coordinates $\bm a$, the quadratic interactions conserve
$\tfrac12\sum_k|u_k|^2=\tfrac12\|\bm a\|_2^2$, so the energy-preserving
constraint of \cref{sec:constraints} applies.}
SPOpInf is not given this complex parametrization; it identifies a general
real quadratic tensor and the result is mapped back only for evaluation.

\paragraph{Settings.}
We set $\nu=0.5$, $s=0.5$, and hence real-component stochastic amplitude
$s/\sqrt2\approx0.3536$.  With random seed 10, each complex initial mode is
$0.1(\xi_k^{(R)}+i\xi_k^{(I)})$.  The step is $\Delta t=0.005$, the training
\yrev{interval} is $T=1000$, and the first 20 time units are discarded.  There is no
constant forcing.  For SPOpInf, we set $\yrev{\bm\Gamma}=10^{-8}\yrev{\bm I}$, {$\varepsilon_{\rm tol}=10^{-8}$}, and 100
maximum alternating updates.  \yrev{The pathwise comparison under identical noise realizations uses a window of 60 time units.}
Long-time PDFs and ACFs use three independent trajectories of length 1500 per
model after a 20-unit burn-in.

\subsubsection{Stochastic Lorenz--84 system}
The Lorenz--84 atmospheric circulation model \citep{lorenz1984} is
\begin{align}
  \mathrm dx&=\left[-y^2-z^2-a(x-F)\right]\mathrm dt
  +\sigma_x\,\mathrm dW_x,\\
  \mathrm dy&=\left[xy-bxz-y+G\right]\mathrm dt
  +\sigma_y\,\mathrm dW_y,\\
  \mathrm dz&=\left[bxy+xz-z\right]\mathrm dt
  +\sigma_z\,\mathrm dW_z.
  \label{eq:l84}
\end{align}
Its quadratic terms conserve $(x^2+y^2+z^2)/2$, whereas damping and forcing
control the energy balance.

\paragraph{Settings.}
The parameters are $a=0.25$, $b=4$, $F=8$, $G=1$, and
$\sigma_x=\sigma_y=\sigma_z=0.1$.  The initial state is $(1,1,1)^\top$,
$\Delta t=0.005$, $T=100$, the discarded transient is 20 time units, and the
training seed is 10.
For SPOpInf, we set $\yrev{\bm\Gamma}=10^{-8}\yrev{\bm I}$, {$\varepsilon_{\rm tol}=10^{-8}$} and 100 maximum alternating
updates.  \yrev{The pathwise comparison under identical noise realizations uses a window of 60 time units.}  Long-time PDFs and ACFs use three
independent trajectories of length 2000 per model after a 50-unit burn-in.

\subsubsection{Stochastic Lorenz--96 system}
For $J$ cyclic variables, with indices interpreted modulo $J$, the stochastic
Lorenz--96 model \citep{lorenz1996} is
\begin{equation}
  \mathrm dx_j=\left[(x_{j+1}-x_{j-2})x_{j-1}-x_j+F\right]\mathrm dt
  +\sigma\,\mathrm dW_j,
  \qquad j=1,\ldots,J.
  \label{eq:l96}
\end{equation}
The cyclic quadratic interaction is energy preserving and provides a test of
whether inference maintains both the cubic cancellation and site symmetry.

\paragraph{Settings.}
We set $J=6$, $F=8$, $\sigma=0.5$ at every site, and an initial state
$\bm x_0=F\bm1+0.01\bm\xi_0$.  The step is $\Delta t=0.005$, the training
\yrev{interval} is $T=1000$, the first 10 time units are discarded, and Euler--Maruyama
increments use seed 10.  The known forcing is $F\bm1$.  For SPOpInf, we set
$\yrev{\bm\Gamma}=10^{-8}\yrev{\bm I}$, {$\varepsilon_{\rm tol}=10^{-8}$}, and 100 maximum alternating updates.  \yrev{The pathwise comparison under identical noise realizations uses a window of 20 time units.}  Statistical PDFs are pooled over cyclic sites and ACFs are
site averaged using three independent trajectories of length 2000 per model
after a 20-unit burn-in.

\subsubsection{Six-mode Charney--DeVore system}
The six-mode Charney--DeVore (CDV) model tests metastable atmospheric regime
switching \citep{crommelin2004}.  To distinguish its scalar damping from the
inferred linear operator $\bm C$, denote the former by $c_d$.  The stochastic
governing equations are
\begin{align}
 \mathrm dx_1={}&[\widetilde\gamma_1x_3-c_d(x_1-x_1^*)]\,\mathrm dt
 +\sigma\,\mathrm dW_1,\\
 \mathrm dx_2={}&[-(\alpha_1x_1-\beta_1)x_3-c_dx_2
 -\delta_1x_4x_6]\,\mathrm dt+\sigma\,\mathrm dW_2,\\
 \mathrm dx_3={}&[(\alpha_1x_1-\beta_1)x_2-\gamma_1x_1-c_dx_3
 +\delta_1x_4x_5]\,\mathrm dt+\sigma\,\mathrm dW_3,\\
 \mathrm dx_4={}&[\widetilde\gamma_2x_6-c_d(x_4-x_4^*)
 +\epsilon(x_2x_6-x_3x_5)]\,\mathrm dt+\sigma\,\mathrm dW_4,\\
 \mathrm dx_5={}&[-(\alpha_2x_1-\beta_2)x_6-c_dx_5
 -\delta_2x_4x_3]\,\mathrm dt+\sigma\,\mathrm dW_5,\\
 \mathrm dx_6={}&[(\alpha_2x_1-\beta_2)x_5-\gamma_2x_4-c_dx_6
 +\delta_2x_4x_2]\,\mathrm dt+\sigma\,\mathrm dW_6.
 \label{eq:cdv}
\end{align}
For $m=1,2$, the coefficients are generated exactly as
\begin{align}
 \alpha_m&=\frac{8\sqrt2}{\pi}
 \frac{m^2(b^2+m^2-1)}{(4m^2-1)(b^2+m^2)},
 &\beta_m&=\frac{\beta b^2}{b^2+m^2},\\
 \delta_m&=\frac{64\sqrt2}{15\pi}
 \frac{b^2-m^2+1}{b^2+m^2},
 &\widetilde\gamma_m&=\frac{4\sqrt2\gamma_0}{\pi}
 \frac{mb^3}{(4m^2-1)(b^2+m^2)},\\
 \gamma_m&=\frac{4\sqrt2\gamma_0}{\pi}
 \frac{m^3b}{(4m^2-1)(b^2+m^2)},
 &\epsilon&=\frac{16\sqrt2}{5\pi}.
 \label{eq:cdv-coefficients}
\end{align}

\paragraph{Settings.}
The parameters are $x_1^*=0.95$, $x_4^*=-0.76095$, $c_d=0.1$,
$\beta=1.25$, $\gamma_0=0.2$, $b=0.5$, and the noise amplitude is
$\sigma=0.02$ in all six components.  The nominal initial state is
$(0.9,0,0,-0.6,0,0)^\top$, perturbed by Gaussian noise with standard
deviation $0.01$.  The system is integrated by the Euler--Maruyama scheme
with time step $\Delta t=0.005$ and random seed 10 over a training interval
of length $T=5000$, of which the first 200 time units are discarded as a
transient.  The known forcing is $(c_dx_1^*,0,0,c_dx_4^*,0,0)^\top$.  Because
the smallest physical couplings are of the order of $10^{-3}$, SPOpInf uses
a weak regularization $\yrev{\bm\Gamma}=10^{-8}\yrev{\bm I}$ and a tight tolerance
{$\varepsilon_{\rm tol}=10^{-8}$}, with at most 100
alternating updates.  Two constraint rows that are violated by the true
system, both associated with $\delta_2+\epsilon-\delta_1\approx-0.186$, are
removed. {These are the rows that enforce vanishing cubic coefficients
for the monomials $x_2x_4x_6$ and $x_3x_4x_5$, whose true values are
$\pm(\delta_2+\epsilon-\delta_1)$; consequently, the quadratic terms of CDV
are not exactly energy preserving}.  All remaining rows are enforced.
CDV is therefore the only test problem in this section whose quadratic
terms are \emph{not} energy preserving, and it is included for this reason:
it assesses the method when the standing assumption of
\cref{sec:constraints} is violated in a known and localized manner.  The
case in which the violation is \emph{not} known in advance is examined \yrev{below.}  Imposing the full constraint set, including the two
incompatible rows, yields $e_{\mathcal B}=0.155\pm0.022$, compared with
$0.249\pm0.014$ for unconstrained OpInf.  Thus, \spopinf{} retains most of
its advantage under \yrev{an incorrect} constraint set, and removing the two
incompatible rows reduces the error further to $0.140\pm0.024$.

We next describe how the two incompatible rows can be identified without
analytical knowledge of the system.  Ranking the constraint rows by the residual
$|\bm G_j\widehat{\bm\theta}|$ of a single unconstrained fit does
\emph{not} separate them: for one realization, the incompatible rows have
residuals $0.086$ and $0.013$ (normalized as in
\cref{eq:structure-residual}), whereas a compatible row attains $0.097$.
The sampling variability of the unconstrained fit thus exceeds the
magnitude of the violation.  Averaging the residuals over independent
trajectories separates the rows, because the violation is systematic
whereas the sampling error is not.  Over eight seeds, the two incompatible
rows rank first and second among all $56$ rows, with averaged residuals
$0.065$ and $0.063$, compared with a maximum of $0.060$ and a median of
$0.022$ over the compatible rows; their averaged magnitude agrees with the
true violation $0.063$ to within $3\%$.  This residual-based screening is
therefore effective, provided that an ensemble of trajectories, rather than
a single trajectory, is available.
The pathwise comparison under identical noise realizations uses a window of
400 time units.  The regime PDFs, ACFs, and the $(x_1,x_4)$ joint density
are computed from three independent trajectories of length 4000 per model.
\subsubsection{Operator and stochastic-component recovery}
\label{sec:sds-recovery}
Let $\widehat{\bm\theta}=\operatorname{vec}(\widehat{\mathcal B},\widehat{\bm C})$
denote the coefficient vector returned by \spopinf{}, that is, the
converged solution of the alternating updates
\cref{eq:R-update,eq:normal-system,eq:multiplier-update,eq:theta-update}\yrev{,}
 and let $\widehat{\bm\sigma}$ be the corresponding
stochastic amplitude \cref{eq:stochastic-return}.  Besides the relative
operator errors {$e_{\mathcal B}=\|\widehat{\mathcal B}-\mathcal
B\|_F/\|\mathcal B\|_F$ and $e_C=\|\widehat{\bm C}-\bm C\|_F/\|\bm C\|_F$}, we report the relative
{structure error}
\begin{equation}
  {e_G}(\widehat{\bm\theta})
  =\frac{\|{\bm G_{\rm kept}}\widehat{\bm\theta}\|_\infty}{\|\mathcal B\|_F},
  \label{eq:structure-residual}
\end{equation}
where {$\bm G_{\rm kept}=\bm G$} except for the CDV system, in which the two
constraint rows incompatible with the true dynamics are removed.  The
numerator is the largest violation among the constraints actually imposed;
since their target is zero, it is normalized by the Frobenius norm of the
true quadratic tensor, which provides the consistent scale for all
methods.  This measure follows the same energy-preservation principle as
the constrained quadratic identification methods of
\citet{kaptanoglu2021mhd,kaptanoglu2021trapping,koike2024epopinf} and the
stochastic ROMs of \citet{mou2023multiscale}, \yrev{here combined with the joint likelihood estimation} of \cref{sec:joint-likelihood}.
The operator-recovery results are collected in \cref{tab:summary}.

\begin{table}[H]
\centering
\caption{Relative deterministic-operator and structure errors.
Here $e_{\mathcal B}=\|\widehat{\mathcal B}-\mathcal B\|_F/
\|\mathcal B\|_F$ and \yrev{$e_C=\|\widehat{\bm C}-\bm C\|_F/\|\bm C\|_F$}.
{$e_G=\|\bm G_{\rm kept}\widehat{\bm\theta}\|_\infty/
\|\mathcal B\|_F$} is the relative structure error.}
\label{tab:summary}
\small
\begin{tabular}{llccc}
\toprule
System & Method & $e_{\mathcal B}$ & $e_C$ & {$e_G$} \\
\midrule
Dyad & \opinf{} & $1.0570\!\times\!10^{-1}$ & $1.3981\!\times\!10^{-1}$ & $9.1840\!\times\!10^{-2}$ \\
     & \spopinf{} & $3.8973\!\times\!10^{-3}$ & $5.4598\!\times\!10^{-2}$ & \yrev{$4.3\!\times\!10^{-18}$} \\
TBH  & \opinf{} & $1.1744\!\times\!10^{-1}$ & $1.8335\!\times\!10^{-1}$ & $4.5920\!\times\!10^{-2}$ \\
     & \spopinf{} & $8.6911\!\times\!10^{-2}$ & $1.8312\!\times\!10^{-1}$ & $5.4664\!\times\!10^{-17}$ \\
L84  & \opinf{} & $1.0992\!\times\!10^{-2}$ & $5.7567\!\times\!10^{-2}$ & $6.9352\!\times\!10^{-3}$ \\
     & \spopinf{} & $5.1161\!\times\!10^{-3}$ & $1.5536\!\times\!10^{-2}$ & $5.7824\!\times\!10^{-19}$ \\
L96  & \opinf{} & $5.7027\!\times\!10^{-3}$ & $3.0880\!\times\!10^{-2}$ & $1.8273\!\times\!10^{-3}$ \\
     & \spopinf{} & $2.7370\!\times\!10^{-3}$ & $3.3466\!\times\!10^{-3}$ & $2.3912\!\times\!10^{-17}$ \\
CDV  & \opinf{} & $2.4481\!\times\!10^{-1}$ & $1.1142$ & $8.9453\!\times\!10^{-2}$ \\
     & \spopinf{} & $1.1363\!\times\!10^{-1}$ & $3.3833\!\times\!10^{-1}$ & $9.8646\!\times\!10^{-17}$ \\
\bottomrule
\end{tabular}
\end{table}

For all five benchmarks, \spopinf{} reduces the
{quadratic-operator} error and
enforces the structure constraints to machine precision
(\cref{tab:summary}).
{The linear operator estimate is improved as well in four of the five systems, but
not uniformly: for TBH, $e_C=1.8312\times10^{-1}$ against
$1.8335\times10^{-1}$ for OpInf, an improvement of \yrev{only} about $0.1\%$.
This is what the structure of the estimator predicts.  By \cref{rem:block-structure}, the unconstrained problem decouples equation by equation and is independent of $\bm R$, so the noise weighting by itself
changes nothing; the constraints act only on the quadratic block, and they
reach $\bm C$ only indirectly, through the redistribution of the constrained
correction \yrev{across} equations.  Where the quadratic and linear blocks are
strongly correlated in the design,  as for the dyad, Lorenz--84, Lorenz--96 and
CDV, that indirect route is effective and $e_C$ falls by factors of $2$ to
$10$; for TBH the six real coordinates carry a nearly diagonal linear operator
$-\nu\bm I$ whose estimate is already at the noise floor of the data, and
there is nothing for the constraint to correct.}
{  All \spopinf{} entries are at
the level of the rounding error of the Cholesky solve, and none is
algebraically zero.}
{For CDV we report the \emph{total energy residual}
\begin{equation}
  \varrho(\mathcal B)
  =\max_{1\le i\le j\le k\le r}
  \Big|\!\!\sum_{(p,q,s)\in\mathrm{perm}(i,j,k)}\!\!B_{pqs}\Big|,
  \label{eq:total-energy-residual}
\end{equation}
the largest coefficient of the cubic form
$\mathcal C(\bm a)=\bm a^\top\mathcal Q_{\mathcal B}(\bm a)$; it vanishes
exactly when \cref{eq:energy} holds, and for the true CDV tensor it is
attained at the two monomials $x_2x_4x_6$ and $x_3x_4x_5$, whose coefficients
are $\pm(\delta_2+\epsilon-\delta_1)$, so that
$\varrho(\mathcal B_{\rm true})=|\delta_2+\epsilon-\delta_1|=0.1864$ exactly.
The constrained estimate gives $\varrho(\widehat{\mathcal B})=0.1825$, a
relative discrepancy of $0.021$: the method reproduces the physical energy
violation of CDV almost exactly, while satisfying every constraint
row to machine precision.}
Forcing the full residual to zero would nevertheless be physically incorrect;
the method instead satisfies the constraint set exactly.

For stochastic-component recovery we use a separate relative error,
\begin{equation}
  e_\sigma
  =\frac{\|\widehat{\bm\sigma}-\bm\sigma\|_2}
  {\|\bm\sigma\|_2}.
  \label{eq:sigma-error}
\end{equation}
This quantity is reported only for SPOpInf because ordinary OpInf does not
infer $\bm\sigma$.

\begin{table}[H]
\centering
\caption{SPOpInf stochastic-component estimates and their relative errors.
The $e_\sigma$ values {are computed from the full-precision
estimates\yrev{.}}}
\label{tab:stochastic-component}
\begin{tabular}{lccc}
\toprule
System & $\widehat{\bm\sigma}$ & $\bm\sigma$ (truth) & $e_\sigma$ \\
\midrule
Dyad & $(0.1501,0.4997)$ & $(0.15,0.50)$ & $6.0331\!\times\!10^{-4}$ \\
TBH & $(0.3534,0.3537,0.3534,0.3534,0.3541,0.3533)$
    & $0.3536\,\bm 1_6$ & $7.8853\!\times\!10^{-4}$ \\
L84 & $(0.0990,0.1005,0.0999)$ & $0.1\,\bm 1_3$ & $6.7355\!\times\!10^{-3}$ \\
L96 & $(0.4997,0.4999,0.5008,0.5000,0.4998,0.4997)$
    & $0.5\,\bm 1_6$ & $7.9237\!\times\!10^{-4}$ \\
CDV & $0.0200\,\bm 1_6$ & $0.02\,\bm 1_6$ & $7.7728\!\times\!10^{-4}$ \\
\bottomrule
\end{tabular}
\end{table}

The separate stochastic-component results in \cref{tab:stochastic-component} show that SPOpInf recovers $\bm\sigma$ with relative error below $10^{-3}$ for the dyad, TBH, Lorenz--96, and CDV cases and with error $6.7355\times10^{-3}$ for
Lorenz--84. {For TBH, in addition, the real-valued inference recovers the underlying
complex interaction coefficients with absolute errors between
$2.00\times10^{-2}$ and $4.50\times10^{-2}$, despite not being given the
complex parametrization.}
{\Cref{tab:tbh-complex} lists the five coefficients individually.  They
reconcile with $e_{\mathcal B}=8.69\times10^{-2}$ as follows: the absolute
errors are all of order $3\times10^{-2}$, while
$\|\mathcal B\|_F$ for the real six-dimensional tensor is dominated by the
single coefficient $-3i$, so the relative error of the tensor as a whole is a
few percent even though no individual coefficient is off by more than
$4.5\times10^{-2}$.  The spurious real parts, which should all vanish, are
bounded by $3.6\times10^{-2}$ and are the visible signature of the estimator
not being told that the underlying interaction is purely imaginary.}

{
\begin{table}[H]
\centering
\caption{{Truncated Burgers--Hopf interaction coefficients recovered by
\spopinf{} in real coordinates and mapped back to the complex
parametrization, which the estimator is not given.  The last column is
$|\widehat c-c|$.  The three linear damping coefficients are recovered as
$-0.5172+0.0056i$, $-0.5220+0.0329i$ and $-0.4851+0.0059i$ against the true
$-\nu=-0.5$.}}
\label{tab:tbh-complex}
\small
\begin{tabular}{llcc c}
\toprule
Equation & Term & True $c$ & Recovered $\widehat c$ & $|\widehat c-c|$ \\
\midrule
$u_1$ & $\overline u_1u_2$ & $-i$  & $\phantom{-}0.0263-1.0175\,i$ & $3.16\times10^{-2}$ \\
$u_1$ & $\overline u_2u_3$ & $-i$  & $-0.0127-1.0336\,i$           & $3.60\times10^{-2}$ \\
$u_2$ & $u_1^2$            & $-i$  & $-0.0356-1.0168\,i$           & $3.94\times10^{-2}$ \\
$u_2$ & $\overline u_1u_3$ & $-2i$ & $\phantom{-}0.0325-1.9689\,i$ & $4.50\times10^{-2}$ \\
$u_3$ & $u_1u_2$           & $-3i$ & $-0.0198-3.0025\,i$           & $2.00\times10^{-2}$ \\
\bottomrule
\end{tabular}
\end{table}
}
No stochastic-component error is reported for ordinary OpInf because its
least-squares deterministic fit does not identify $\bm\Sigma$.

\subsubsection{Long-time statistical fidelity}
\label{sec:sds-statistics}

\begin{figure}[H]
  \centering
  \includegraphics[width=.7\textwidth]{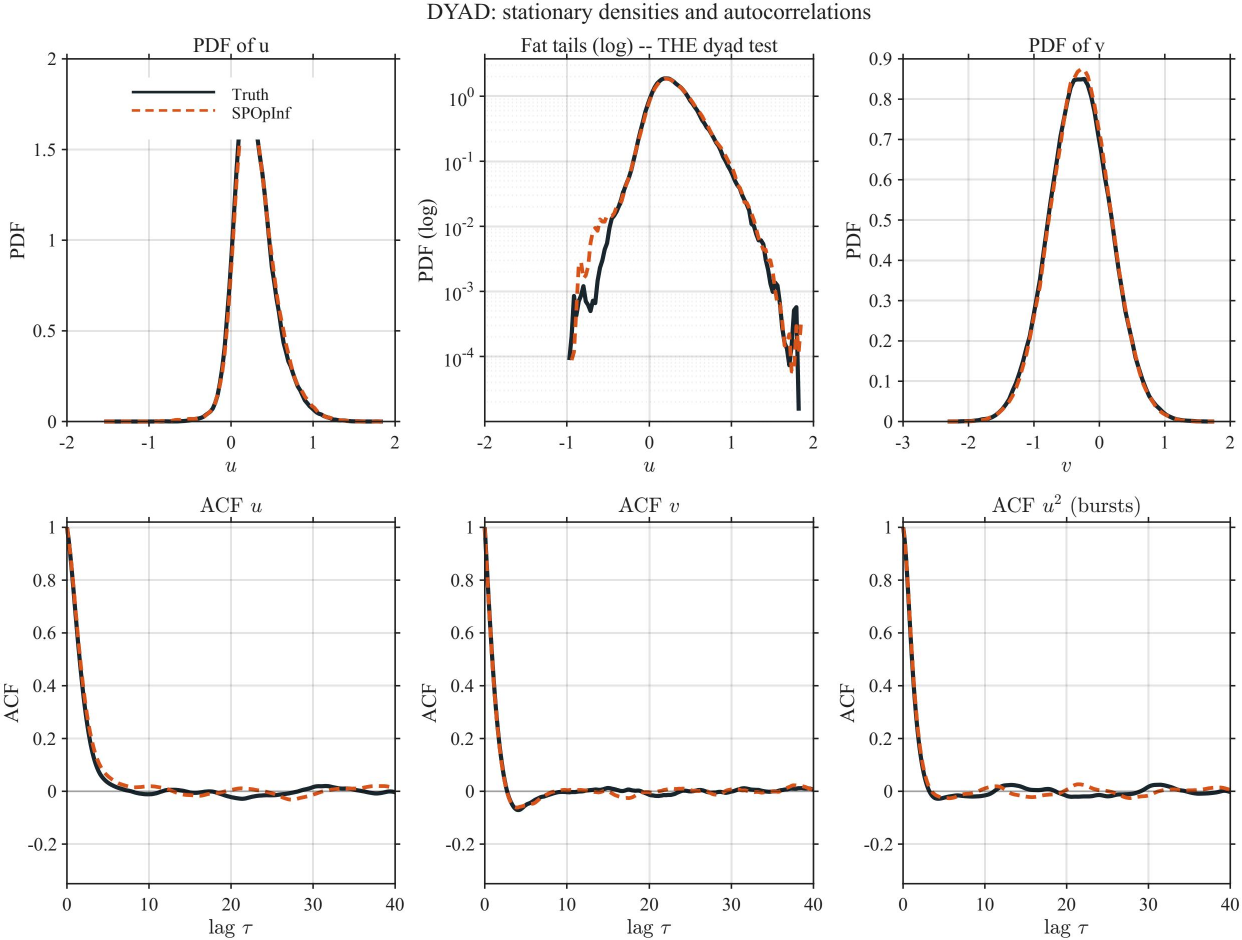}
  \caption{Intermittent dyad statistics from independent truth and SPOpInf
  ensembles.  The long-tailed marginal and autocorrelation diagnostics test
  burst frequency and duration, which are not determined by a small one-step
  residual alone.}
  \label{fig:dyadstats}
\end{figure}

\begin{figure}[H]
  \centering
  \includegraphics[width=.7\textwidth]{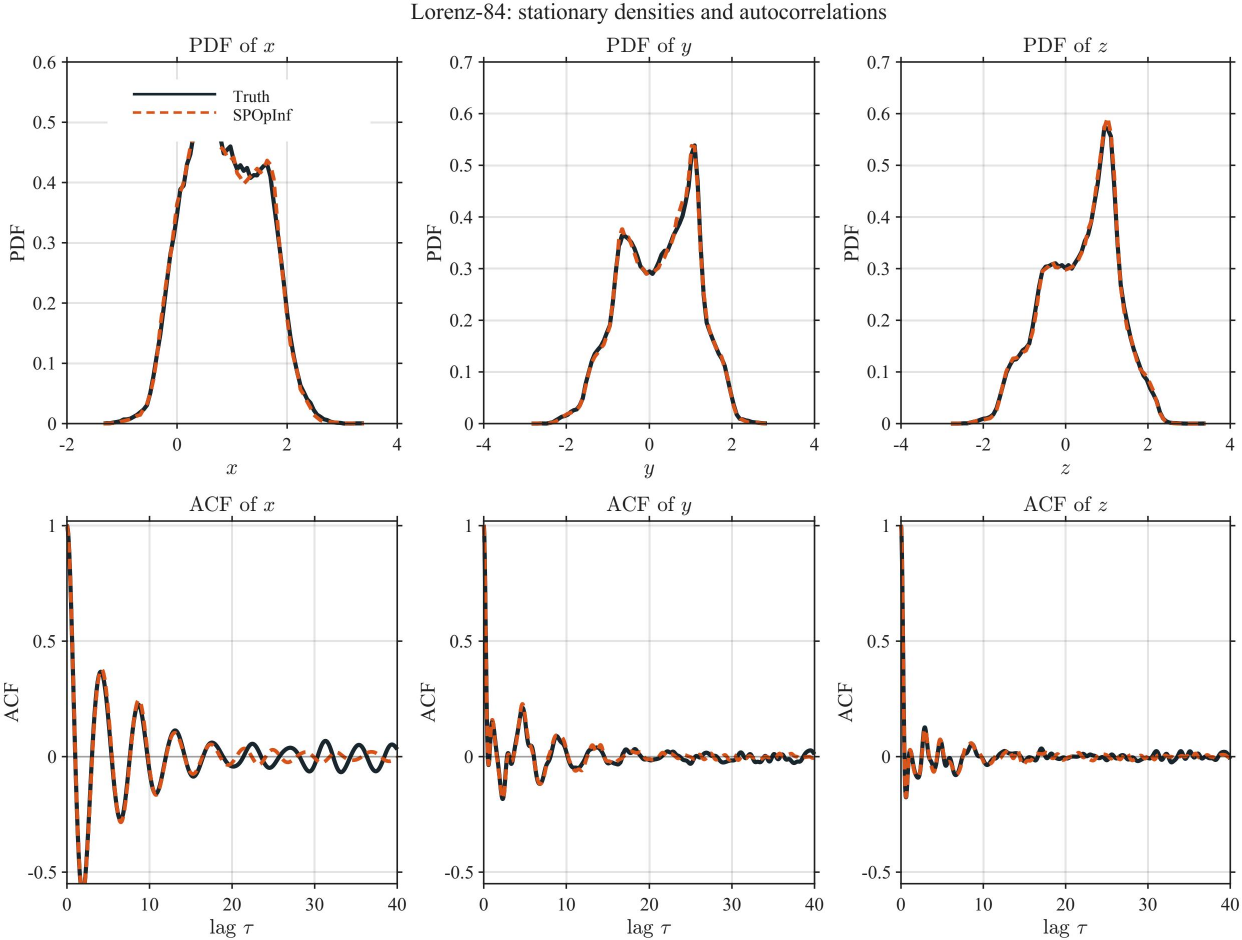}
  \caption{Lorenz--84 marginal densities (top) and autocorrelation functions
  (bottom).  The SPOpInf SDE reproduces the principal equilibrium features of
  the truth while simultaneously satisfying the quadratic energy relations.}
  \label{fig:l84stats}
\end{figure}

{\Cref{tab:statistics} shows the \yrev{quantitative} statistical comparison.  For each system it reports, over all resolved components, the largest discrepancy
between the \spopinf{} ensemble and the truth ensemble in the first four moments and in the integral correlation time.  Means and standard deviations agree to better than $4\%$ for four of the five systems, the
skewness to within $0.14$ and the kurtosis to within $0.37$ in absolute units,
and the decorrelation time to within $9\%$ everywhere.  The largest
discrepancies are those of CDV, whose mean is displaced by $0.20$ standard
deviations; this is the regime-occupancy error visible in
\cref{fig:cdvstats,fig:cdvjoint} and is the expected consequence of the weak
topographic couplings that set the switching rate.}

{
\begin{table}[H]
\centering
\caption{{Quantitative long-time statistical errors between independent
truth and \spopinf{} ensembles.  Each entry is the maximum over the resolved
components of: the mean discrepancy in units of the truth standard deviation,
$|\widehat\mu_i-\mu_i|/s_i$; the relative standard-deviation error
$|\widehat s_i-s_i|/s_i$; the absolute skewness and excess-kurtosis
discrepancies; and the relative error in the integral correlation time
$\tau_i$.  Computed from the same ensembles that produce
\cref{fig:dyadstats,fig:tbhstats,fig:l84stats,fig:l96stats,fig:cdvstats}.}}
\label{tab:statistics}
\small
\begin{tabular}{lccccc}
\toprule
System & $|\Delta\mu|/s$ & $|\Delta s|/s$ & $|\Delta\mathrm{skew}|$
       & $|\Delta\mathrm{kurt}|$ & $|\Delta\tau|/\tau$ \\
\midrule
Dyad & $0.016$ & $0.034$ & $0.068$ & $0.304$ & $0.045$ \\
TBH  & $0.070$ & $0.040$ & $0.122$ & $0.190$ & $0.054$ \\
L84  & $0.017$ & $0.011$ & $0.023$ & $0.043$ & $0.018$ \\
L96  & $0.009$ & $0.009$ & $0.021$ & $0.018$ & $0.000$ \\
CDV  & $0.203$ & $0.025$ & $0.141$ & $0.372$ & $0.083$ \\
\bottomrule
\end{tabular}
\end{table}
}
Each plot compares the reference
process with SPOpInf; ordinary OpInf is omitted because it does not infer a
stochastic component and therefore does not define the stochastic model needed
for a PDF or ACF comparison.  Lorenz--84 densities and ACFs are closely
matched (\cref{fig:l84stats}); the dyad retains intermittent tails
(\cref{fig:dyadstats}); {L96 retains cyclically pooled statistics
(\cref{fig:l96stats}); the TBH marginal PDFs and ACFs are compared in
\cref{fig:tbhstats}; and the CDV marginals and joint regime density are
examined in \cref{fig:cdvstats,fig:cdvjoint} below.}

\begin{figure}[H]
  \centering
  \includegraphics[width=.7\textwidth]{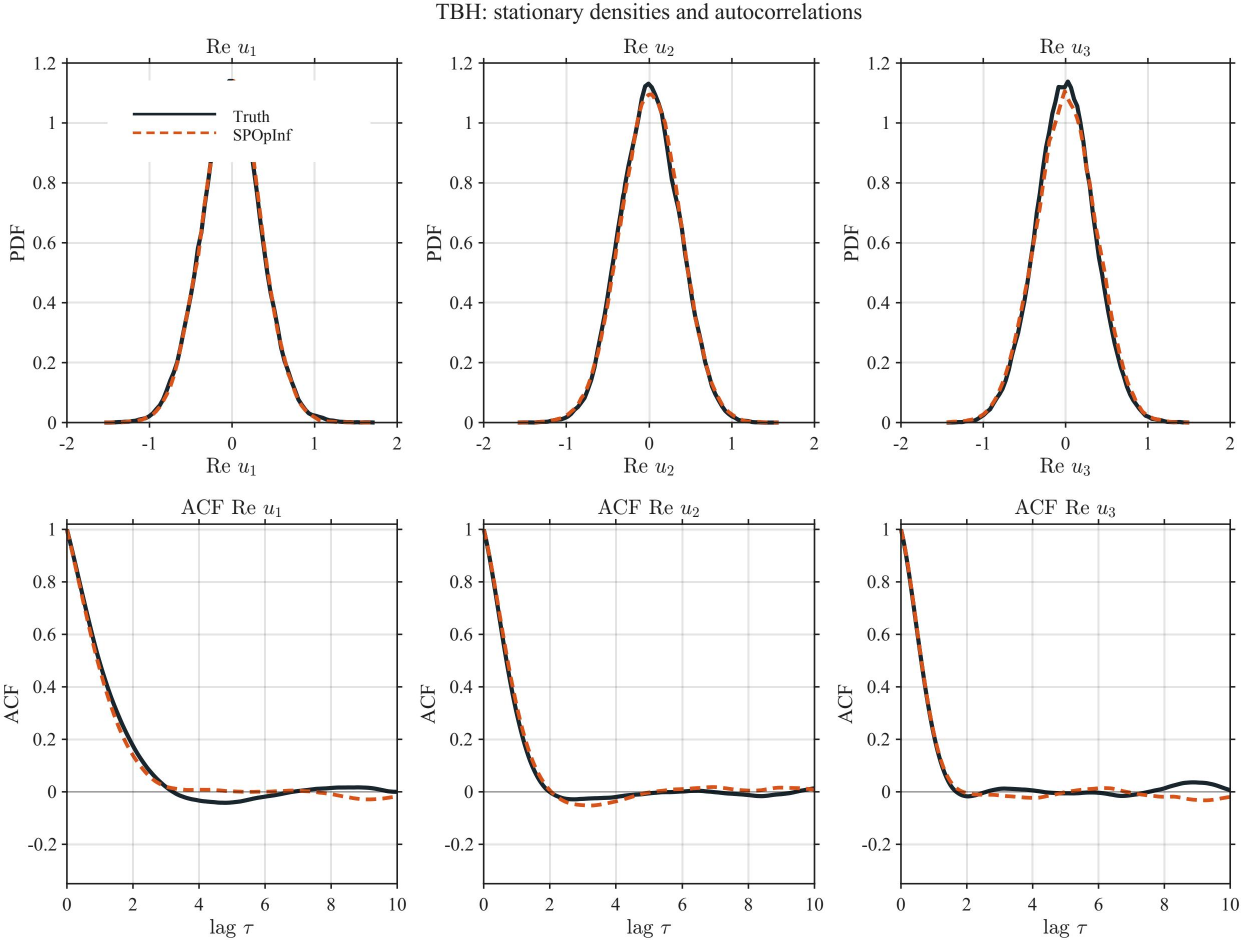}
  \caption{TBH marginal PDFs and autocorrelation functions from independent
  ensembles.  The complex truth and real-valued SPOpInf SDE test whether the
  inferred model reproduces the equilibrium statistics.}
  \label{fig:tbhstats}
\end{figure}

\begin{figure}[H]
  \centering
  \includegraphics[width=.7\textwidth]{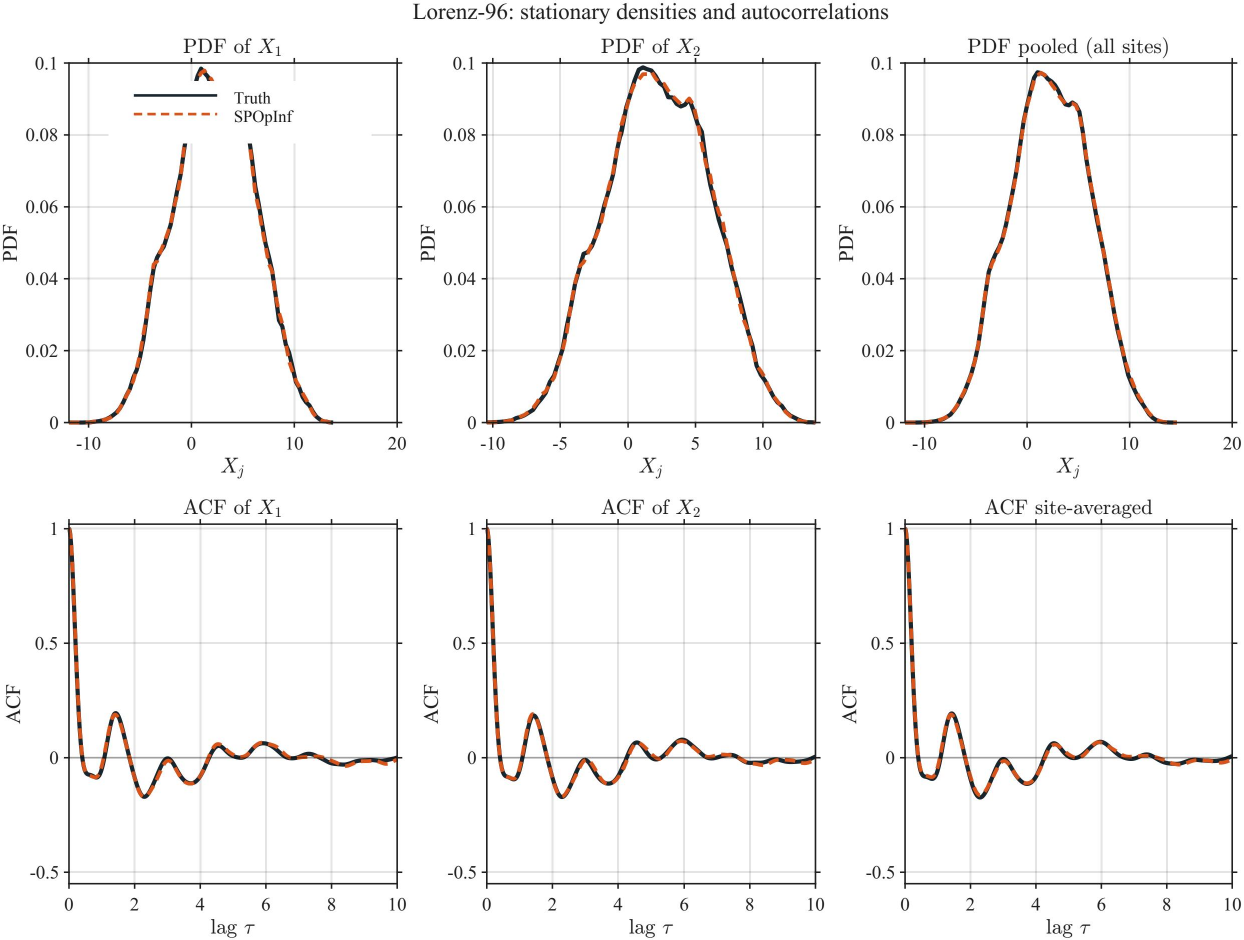}
  \caption{Lorenz--96 equilibrium PDF and ACF comparison between truth and
  SPOpInf.  PDFs are pooled over cyclic sites and ACFs are site averaged,
  making symmetry breaking or distributional bias visible without relying on
  individual chaotic paths.}
  \label{fig:l96stats}
\end{figure}

\begin{figure}[H]
  \centering
  \includegraphics[width=.7\textwidth]{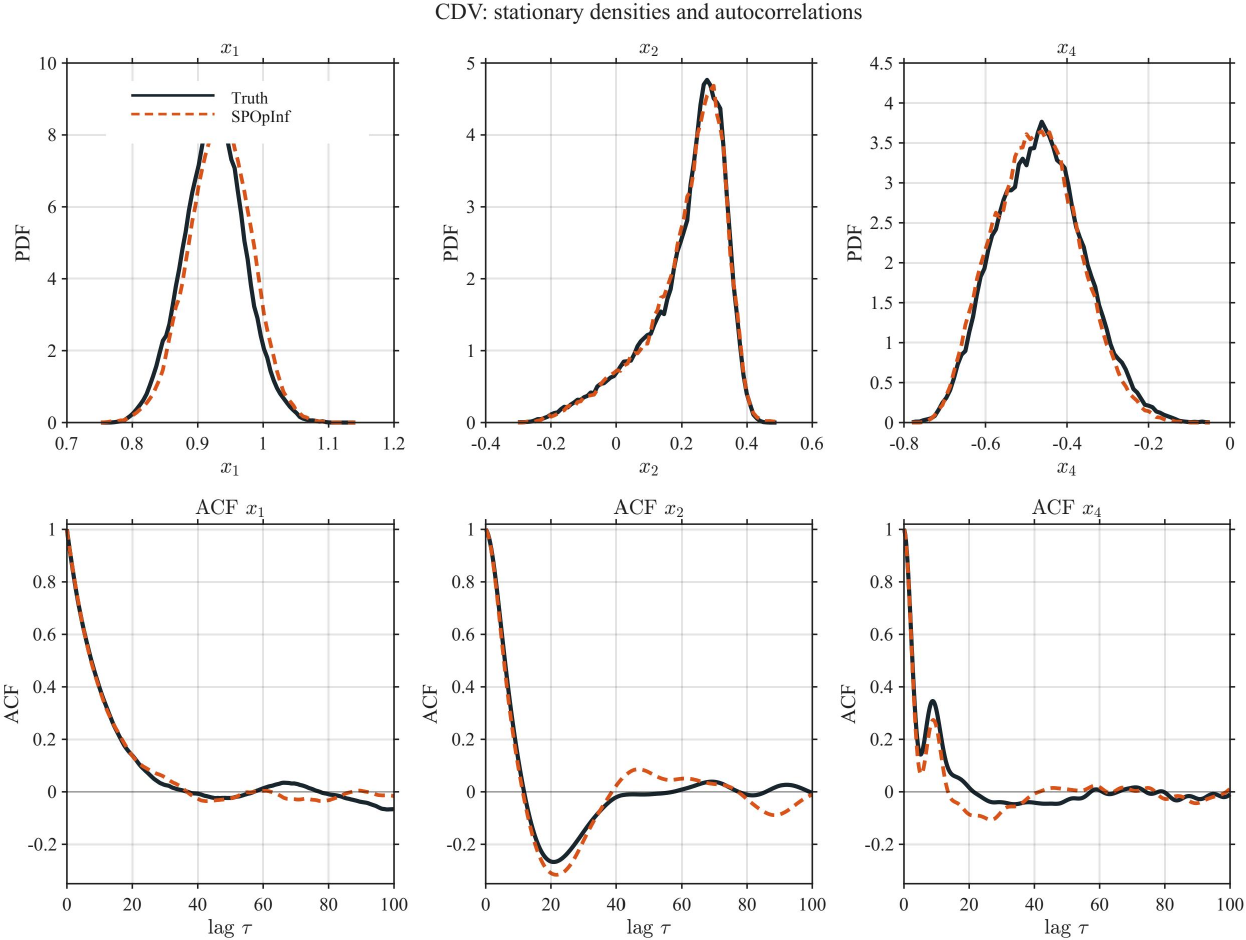}
  \caption{CDV marginal PDFs and autocorrelation functions from independent
  truth and SPOpInf ensembles.  These distributional diagnostics complement
  the joint regime density by testing each resolved component separately.}
  \label{fig:cdvstats}
\end{figure}

\begin{figure}[H]
  \centering
  \includegraphics[width=.7\textwidth]{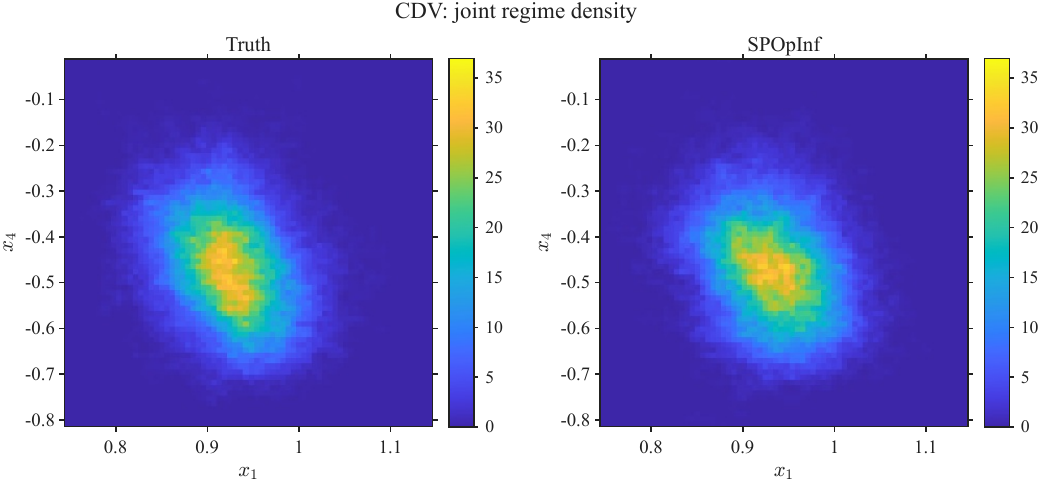}
  \caption{Two-panel comparison of the truth and SPOpInf joint CDV regime
  densities from independent simulations.  Retaining both metastable regions
  is a stronger test than matching componentwise means and variances.}
  \label{fig:cdvjoint}
\end{figure}

The CDV marginal comparisons in \cref{fig:cdvstats} and joint density in
\cref{fig:cdvjoint} are the most stringent equilibrium tests in the present
suite because weak topographic couplings control switching between regimes.
\yrev{Conclusions about rare transitions would, however, require uncertainty estimates across more independent training trajectories and confidence intervals for dwell-time distributions.}

\subsection{Complex partial differential equations}
\label{sec:pde-applications}
The second part of the numerical study is Setting II of
\cref{sec:two-settings}: the training data are ROM coefficients of
nonlinear PDEs, and the identified SDE {\cref{eq:cpopinf-vector}} serves as a
non-intrusive stochastic reduced-order model.
We test whether the \spopinf{} framework remains accurate after projection
of nonlinear PDEs.  The two applications are the one-dimensional stochastic
Burgers equation and the two-dimensional barotropic quasi-geostrophic (QG)
equation{; their reference solutions are shown in
\cref{fig:test-models-pde}}.

{In both test problems, \yrev{the training data are the responses $\bm y_n$ of \cref{eq:forward-response} with $\bm f_{\rm known}=\bm D_G$, the projected forcing \cref{eq:grom-D}; for the stochastic Burgers equation, which has no deterministic forcing, $\bm D_G=\bm 0$.}}
The estimator uses the upper-triangular quadratic-linear
formulation, \yrev{alternating updates of the coefficients and noise levels}, and constraint
$\bm a^\top\mathcal Q_{\mathcal B}(\bm a)=0$ described in \cref{sec:method}.

\begin{figure}[H]
  \centering
  {\includegraphics[width=\textwidth]{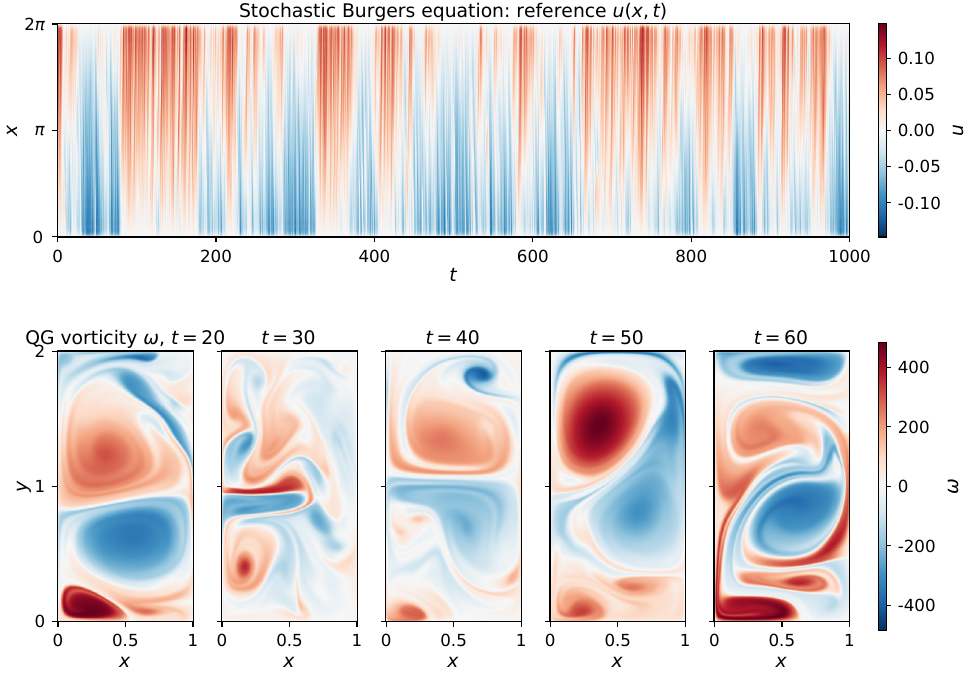}}
  \caption{Reference data for the complex PDEs (Setting II). Top:
  spatiotemporal field $u(x,t)$ of the stochastic Burgers reference
  solution{, reconstructed from the projected full-order snapshots}. Bottom: vorticity snapshots of the quasi-geostrophic reference
  solution at $t=20,30,40,50,60${; both rows use a symmetric diverging
  color scale}.}
  \label{fig:test-models-pde}
\end{figure}

\subsubsection{One-dimensional stochastic Burgers equation}
\label{sec:sbe-pde}
The first PDE is the viscous stochastic Burgers equation on
$x\in(0,L)$,
\begin{equation}
  \mathrm du=
  \left(\nu u_{xx}+{\kappa u}-\gamma u u_x\right)\mathrm dt
  +\sum_{k=1}^{m_\xi}\sigma_k e_k(x)\,\mathrm dW_k(t),
  \qquad u(0,t)=u(L,t)=0,
  \label{eq:stochastic-burgers-pde}
\end{equation}
with the orthonormal sine functions
$e_k(x)=\sqrt{2/L}\sin(k\pi x/L)$.  We use $L=2\pi$, $\nu=0.005$,
{$\kappa=0.01125$}, $\gamma=1$, $m_\xi=4$, and
$\sigma_k=0.003$ for $k=1,\ldots,4$.  The FOM step is $10^{-3}$ and data is saved every $50$ time steps, i.e., $\Delta t=0.05$.  The initial condition uses the sine modes one and four with amplitude $0.1$ as in \cite{mou2023multiscale}.

Projection onto the first $r$ sine modes yields the G-ROM:
\begin{equation}
  \mathrm d\bm a=
  \left[\mathcal Q_{\mathcal B_r}(\bm a)
  +\bm C_r\bm a\right]\mathrm dt
  +\bm\Sigma_r\,\mathrm d\bm W_t,
  \label{eq:sbe-rom}
\end{equation}
where G-ROM uses {$\bm C_r=\kappa\bm I-\nu\bm A$}\, with
$A_{ii}=(i\pi/L)^2$, \yrev{and the Galerkin-projected quadratic tensor as $\mathcal B_r$.}
OpInf and SPOpInf identify the quadratic-linear reduced drift from the
projected trajectory. Deterministic G-ROM and OpInf set $\bm\Sigma_r=0$.
G-ROM $(\sigma)$ uses the projected diffusion values, whereas OpInf
$(\sigma)$ estimates a diagonal diffusion from its regression residual.
{Because the sine basis is orthonormal and the forcing acts on the first
$m_\xi=4$ modes with known amplitudes, the projected diffusion is
simply $\bm\Sigma_r=\operatorname{diag}(\sigma_1,\ldots,\sigma_{\min(r,4)},
0,\ldots,0)$ with $\sigma_k=0.003$: at $r=2$ the G-ROM carries only the first
two forced modes, so $\bm\Sigma_r=0.003\,\bm I_2$ and the remaining two forced
modes are simply outside the reduced space; at $r=6$ the last two components
are unforced and $\bm\Sigma_r$ has two zeros.}
SPOpInf jointly estimates its drift and diagonal diffusion.  Every
stochastic ROM, i.e.,{ G-ROM\,$(\sigma)$, OpInf\,$(\sigma)$, and \spopinf{},} is driven by the exact Wiener-increment path saved with the FOM,
so pathwise differences cannot be attributed to different noise realizations.

\Cref{tab:sbe-sigma} compares the inferred amplitudes with this known
truth.  \spopinf{} recovers every forcing amplitude $\sigma_k$ to three
significant figures at all three reduced dimensions and, at $r=6$, correctly identifies the two unforced modes: the amplitudes it assigns them are two orders of magnitude below the forced ones.  The relative error $e_\sigma$
\cref{eq:sigma-error} grows with $r$, roughly doubling from $r=2$ to $r=4$ and tripling again from $r=4$ to $r=6$, because each added mode contributes its own small misfit and, at $r=6$, the two unforced modes enter the error as nonzero estimates of a zero truth.  

{\color{v9blue}
\begin{table}[H]
\centering
\caption{\nrev{Stochastic Burgers: \spopinf{} diffusion amplitudes against
the known projected truth $\sigma_k=0.003$ for the $m_\xi=4$ forced sine modes
and $0$ otherwise.  $e_\sigma$ is \cref{eq:sigma-error}.}}
\label{tab:sbe-sigma}
\small
\begin{tabular}{clll}
\toprule
$r$ & $\widehat{\bm\sigma}$ (\spopinf{}) & $\bm\sigma$ (truth) & $e_\sigma$ \\
\midrule
2 & $(0.00300,\,0.00300)$ & $0.003\,\bm 1_2$ & $1.60\times10^{-3}$ \\
4 & $(0.00299,\,0.00300,\,0.00299,\,0.00301)$ & $0.003\,\bm 1_4$
  & $2.71\times10^{-3}$ \\
6 & $(0.00299,\,0.00300,\,0.00299,\,0.00301,$
  & $(0.003\,\bm 1_4,\,0,\,0)$ & $8.06\times10^{-3}$ \\
  & $\ \ 2.35\times10^{-5},\,3.98\times10^{-5})$ & & \\
\bottomrule
\end{tabular}
\end{table}
}

For $r=2,4,6$, training ends at $T=500$ and propagation continues to
$T=1000$.  We report the \yrev{long-time} reduced-state error
\begin{equation}
  e_{L^2(0,T)}=
  \frac{\left(\int_0^T
  \|\bm a_{\rm ROM}(t)-\bm a_{\rm FOM}(t)\|_2^2\,\mathrm dt\right)^{1/2}}
  {\left(\int_0^T\|\bm a_{\rm FOM}(t)\|_2^2\,\mathrm dt\right)^{1/2}}.
  \label{eq:sbe-l2-error}
\end{equation}
A trajectory is classified as unstable if, at any time in the horizon
$0\le t\le1000$, its modal amplitude leaves the bounded region occupied by the projected full-order solution; its error is then undefined and is
marked by a dash in \cref{tab:sbe-pde-errors-full}.
Long-time densities use the complete interval $50\leq t\leq1000$ and therefore exclude any model classified as unstable, since it does not cover that entire window.

{\color{v9blue}
\begin{table}[!htbp]
\centering
\caption{\nrev{Stochastic Burgers pathwise comparison using the FOM Wiener
path: \yrev{long-time} relative $L^2$ error \cref{eq:sbe-l2-error} at each reduced
dimension $r$.  \yrev{A dash marks a trajectory that becomes unstable before $T=1000$}, so the \yrev{long-time} error is
undefined.  Best value in each column in bold.}}
\label{tab:sbe-pde-errors-full}
\small
\begin{tabular}{lccc}
\toprule
Model & $r=2$ & $r=4$ & $r=6$ \\
\midrule
G-ROM             & ---               & ---               & $1.4845$ \\
G-ROM $(\sigma)$  & ---               & ---               & $1.2875$ \\
OpInf             & $1.2043$          & $1.2352$          & --- \\
OpInf $(\sigma)$  & ---               & ---               & --- \\
SPOpInf           & $\mathbf{1.0288}$ & $\mathbf{0.4983}$ & $\mathbf{1.0966}$ \\
\bottomrule
\end{tabular}
\end{table}
}

\begin{figure}[H]
  \centering
  {\includegraphics[width=.8\textwidth]{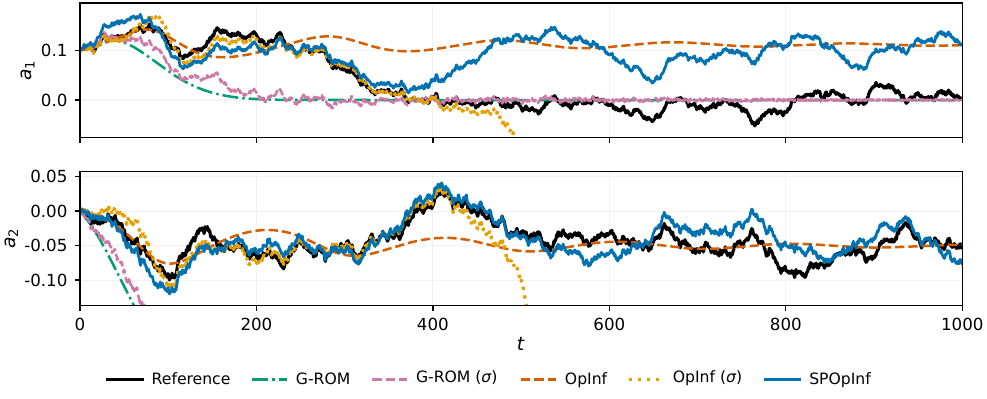}}

  \caption{Stochastic Burgers modal trajectories at $r=2$ over the complete
  interval $0\leq t\leq1000$.  {The vertical range of each panel is set
  by the reference and the ROMs that remain bounded through $T=1000$;
  divergent ROMs leave the panel\yrev{ and are marked by a dash in \cref{tab:sbe-pde-errors-full}}.}}
  \label{fig:sbe-r02-time-series}
\end{figure}

\begin{figure}[H]
  \centering
  {\includegraphics[width=.8\textwidth]{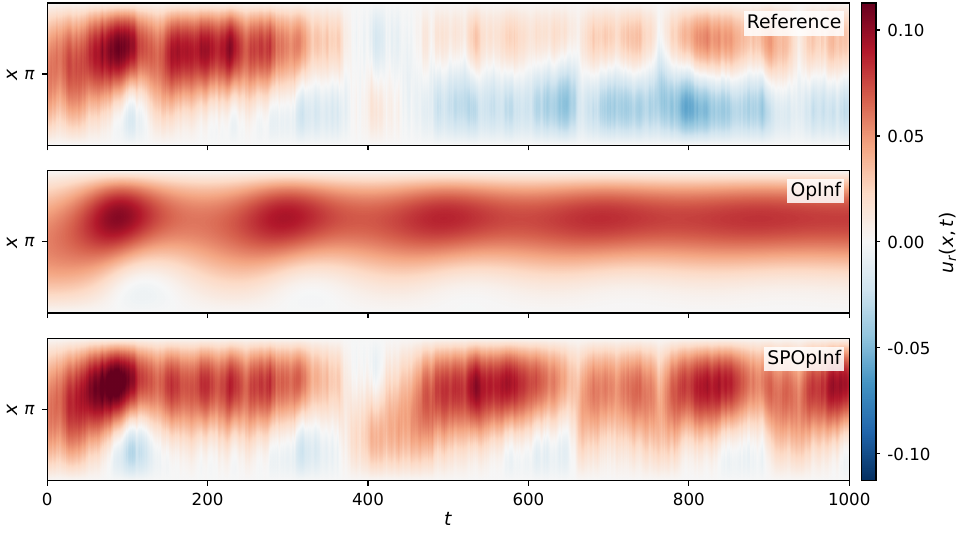}}
  \caption{Stochastic Burgers spatiotemporal reconstruction at $r=2$ over
  $0\leq t\leq1000$.  Only the projected FOM and ROMs that remain bounded over
  the full horizon are shown, using the symmetric color scale computed from
  the FOM reconstruction{ and a single shared colorbar}.}
  \label{fig:sbe-r02-spatiotemporal}
\end{figure}

\begin{figure}[H]
  \centering
  {\includegraphics[width=.8\textwidth]{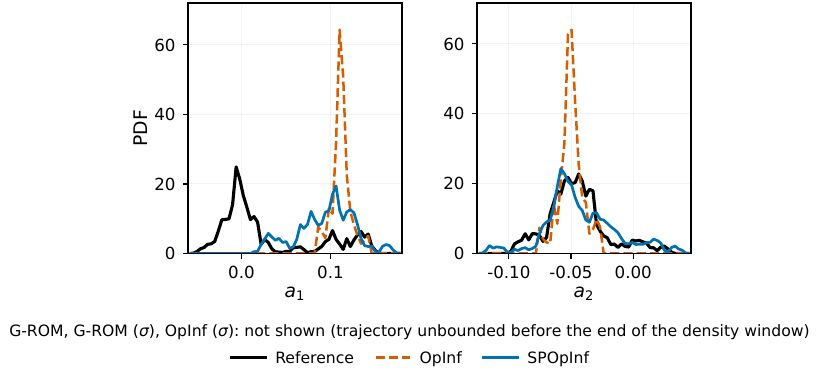}}
  \caption{Long-time marginal densities of the two retained Burgers modes at
  $r=2$, estimated over the complete interval $50\leq t\leq1000${.
  Only the models that remain bounded over this window are drawn; the legend
  names the excluded models}.}
  \label{fig:sbe-r02-pdfs}
\end{figure}

\begin{figure}[H]
  \centering
  {\includegraphics[width=.8\textwidth]{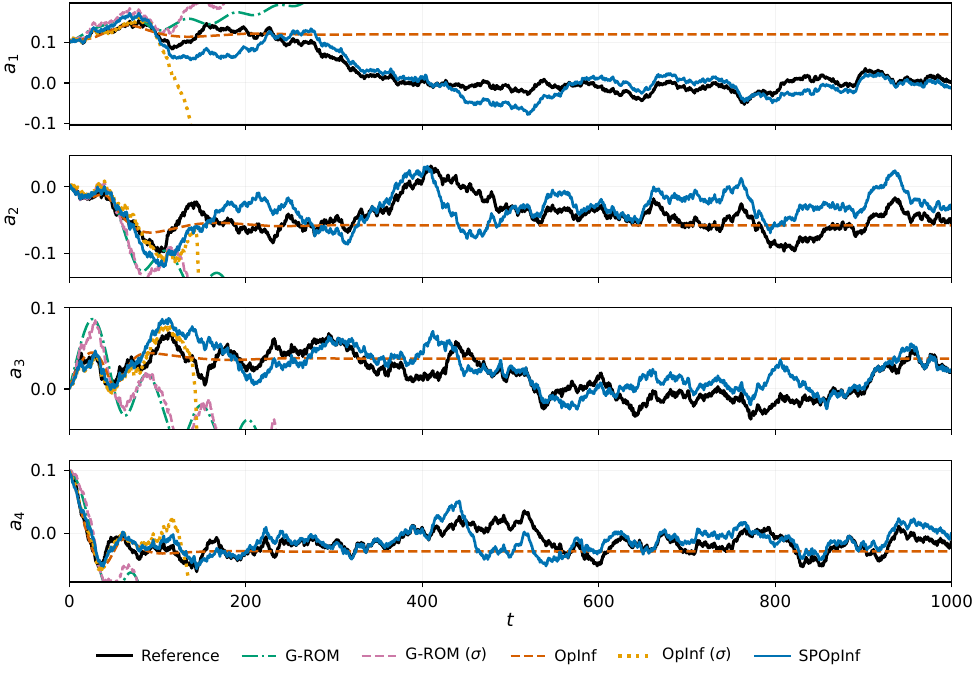}}
  \caption{Stochastic Burgers modal trajectories at $r=4$ over the complete
  interval $0\leq t\leq1000$.  {The vertical range of each panel is set
  by the reference and the ROMs that remain bounded through $T=1000$;
  divergent ROMs leave the panels\yrev{ and are marked by a dash in \cref{tab:sbe-pde-errors-full}}.}}
  \label{fig:sbe-r04-time-series}
\end{figure}

\begin{figure}[H]
  \centering
  {\includegraphics[width=.8\textwidth]{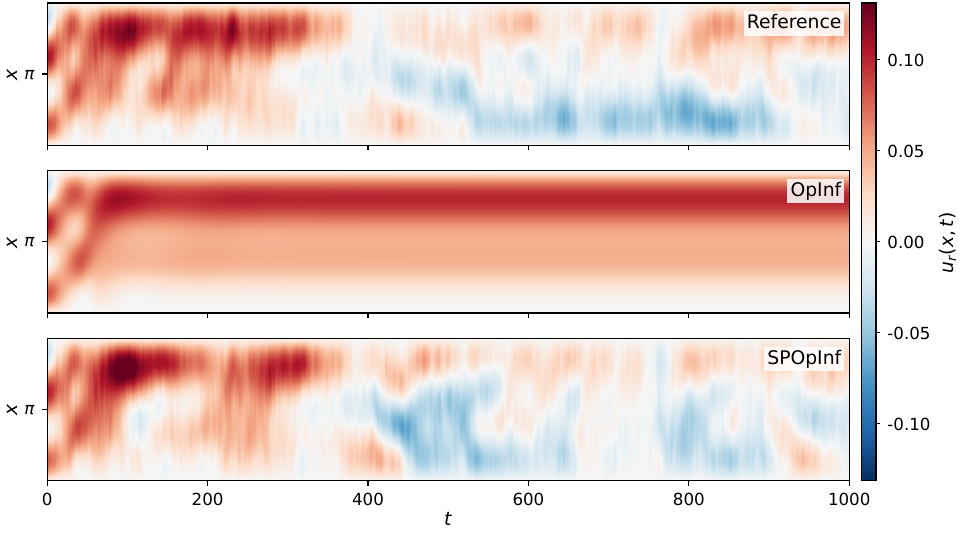}}
  \caption{Stochastic Burgers spatiotemporal reconstruction at $r=4$ over
  $0\leq t\leq1000$.  Only the projected FOM and ROMs that remain bounded over
  the full horizon are shown.  All three panels use the symmetric color scale
  computed from the FOM reconstruction{ and a single shared colorbar}.}
  \label{fig:sbe-r04-spatiotemporal}
\end{figure}

\begin{figure}[H]
  \centering
  {\includegraphics[width=.8\textwidth]{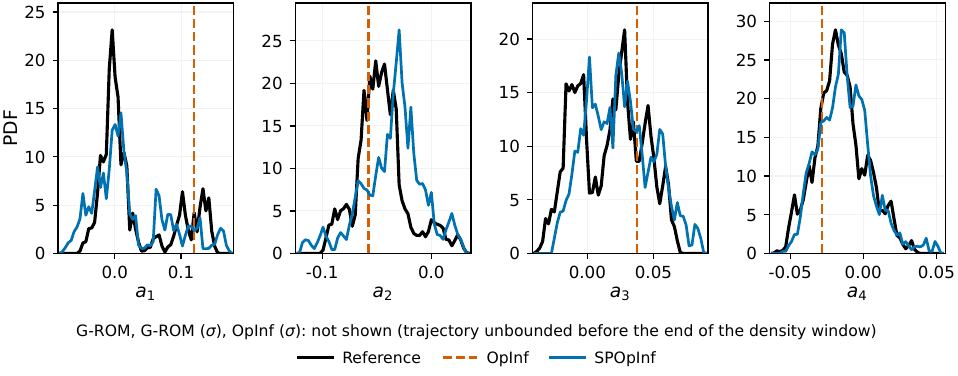}}
  \caption{Long-time marginal densities of the four retained Burgers modes at
  $r=4$, estimated from the complete interval $50\leq t\leq1000$.  Curves are
  drawn only for models that remain bounded over the full density window; the
  common legend {names} the excluded {models}.
  {Over this window the deterministic OpInf model has settled on a fixed
  point, so its density is a delta and is drawn as a vertical line at the
  steady value.}}
  \label{fig:sbe-r04-pdfs}
\end{figure}
{\Cref{tab:sbe-pde-errors-full} reports the long-time relative $L^2$
error of each model at $r=2,4,6$, with a dash for every model that becomes
unstable within the time interval.  SPOpInf is the only inferred model,
deterministic or stochastic, that remains stable over the whole interval at all three  ROM dimensions; at $r=6$ the only stable competitors are the two
Galerkin ROMs.  Among the stable models, SPOpInf also has the smallest long-time  error in every column.  
The improvement is largest at $r=4$, where SPOpInf reduces the error of deterministic OpInf, the only other stable model at that dimension, \yrev{by more than a factor of two}; at $r=2$ the reduction over
deterministic OpInf is about $15\%$, and at $r=6$ SPOpInf improves on the
better G-ROM $(\sigma)$ by a similar \yrev{margin}.  Adding
residual-calibrated noise to unconstrained OpInf does not stabilize the drift: OpInf $(\sigma)$ is unstable at every ROM dimension.  This
comparison isolates the distinction of \cref{sec:opinf-contrast} in the ROM
setting: OpInf $(\sigma)$ shares the library and the noise model with SPOpInf but lacks the constraint and the joint calibration.  The modal,
spatiotemporal, and density comparisons in
\cref{fig:sbe-r02-time-series,fig:sbe-r02-spatiotemporal,fig:sbe-r02-pdfs,fig:sbe-r04-time-series,fig:sbe-r04-spatiotemporal,fig:sbe-r04-pdfs}
show that SPOpInf remains stable and retains the broad switching pattern of
the reference, while also revealing remaining distributional bias.
}

\subsubsection{Two-dimensional quasi-geostrophic equations}
\label{sec:qg-pde}
The second PDE test is the quasi-geostrophic equations (QGE), also known as
the barotropic vorticity equations, a simplified model for geophysical flows
in which rotation plays a central role, such as wind-driven ocean
circulation in mid-latitude ocean basins
\citep{mou2020reduced,majdawang2006}.  In the streamfunction--vorticity
formulation, the QGE read
\begin{align}
  \frac{\partial\omega}{\partial t}+J(\omega,\psi)
  -\mathrm{Ro}^{-1}\,\frac{\partial\psi}{\partial x}
  &=\mathrm{Re}^{-1}\Delta\omega
  +\mathrm{Ro}^{-1}F_e,
  \label{eq:qg-pde}\\
  \omega&=-\Delta\psi,
\end{align}
where $\psi$ is the streamfunction, $\omega$ is the vorticity,
$J(\omega,\psi)=\frac{\partial\omega}{\partial x}
\frac{\partial\psi}{\partial y}
-\frac{\partial\omega}{\partial y}\frac{\partial\psi}{\partial x}$ is the
Jacobian, $F_e$ is the external forcing, $\mathrm{Re}$ is the Reynolds
number, and $\mathrm{Ro}$ is the Rossby number, which quantifies the
rotation effects.  Following \citet{mou2020reduced,mou2023multiscale}, we
consider the rectangular domain $\Omega=[0,1]\times[0,2]$ with homogeneous
Dirichlet boundary conditions for both $\psi$ and $\omega$, the symmetric
double-gyre wind forcing $F_e=\sin\!\big(\pi(y-1)\big)$, and
$\mathrm{Re}=450$, $\mathrm{Ro}=0.0036$.  Although a double-gyre wind
forcing is used, the long-term time average yields a \emph{four-gyre}
circulation pattern \citep{greatbatch2000}, and the relatively low Rossby
number produces a sharp western boundary layer, which makes this four-gyre
test problem challenging for both FOM and ROM simulations
\citep{mou2020reduced}.  The FOM uses a pseudospectral discretization with a
$257\times513$ spatial resolution; the ROM time step is $10^{-3}$, states
are saved every $0.01$ time units, and the comparison interval is
$10\leq t\leq80$, over which the flow is statistically steady but chaotic \citep{mou2020reduced,mou2021ddvms,mou2023multiscale}.
The POD basis is generated from 3500 equally spaced vorticity snapshots.

The ROM construction follows the Galerkin ROM (G-ROM) framework for the QGE
\citep{mou2020reduced}.  POD applied to the vorticity snapshots yields the
orthonormal vorticity basis $\{\varphi_i\}_{i=1}^{r}$, and the
streamfunction basis $\{\phi_i\}_{i=1}^{r}$ is defined by the Poisson
problems $-\Delta\phi_i=\varphi_i$ with homogeneous Dirichlet boundary
conditions, so that the ROM approximations
$\omega_r=\sum_{i=1}^{r}a_i(t)\,\varphi_i(\bm x)$ and
$\psi_r=\sum_{i=1}^{r}a_i(t)\,\phi_i(\bm x)$ share the same coefficients.
The componentwise G-ROM is: for $i=1,\ldots,r$,
\begin{equation}
  \dot a_i(t)={D_{G,i}}
  +\sum_{m=1}^{r}{(\bm C_G)_{im}}\,a_m(t)
  +\sum_{m=1}^{r}\sum_{n=1}^{r}{(\bm B_{G,i})_{mn}}\,a_m(t)\,a_n(t),
  \label{eq:qg-grom}
\end{equation}
where
\begin{equation}
  {D_{G,i}}=\mathrm{Ro}^{-1}\big(F_e,\varphi_i\big),
  \qquad
  {(\bm C_G)_{im}}=\mathrm{Ro}^{-1}
  \Big(\tfrac{\partial\phi_m}{\partial x},\varphi_i\Big)
  -\mathrm{Re}^{-1}\big(\nabla\varphi_m,\nabla\varphi_i\big),
  \qquad
  {(\bm B_{G,i})_{mn}}=-\big(J(\varphi_m,\phi_n),\varphi_i\big),
  \label{eq:qg-grom-operators}
\end{equation}

and $(\cdot,\cdot)$ denotes the $L^2$ inner product over $\Omega$.
\spopinf{} retains the projected forcing {$\bm D_G=(D_{G,i})$} as the known {part of the drift, $\bm f_{\rm known}=\bm D_G$ as in
\cref{sec:pde-applications},} and learns the complete quadratic-linear reduced drift and the diagonal stochastic amplitude from the projected vorticity trajectory; the training data and solver
settings are listed in the Settings paragraph below.
{For the ROM trajectories we report the \yrev{long-time} relative error
\begin{equation}
  e_{\rm traj}
  =\frac{\Big(\sum_{n}\|\bm a_{\rm ROM}(t_n)-\bm a_{\rm FOM}(t_n)\|_2^2\Big)^{1/2}}
        {\Big(\sum_{n}\|\bm a_{\rm FOM}(t_n)\|_2^2\Big)^{1/2}},
  \label{eq:qg-trajectory-error}
\end{equation}
the discrete analogue of \cref{eq:sbe-l2-error}, over $10\le t_n\le80$.}

\paragraph{Settings.}
The computational setting follows \citet{mou2020data,mou2023multiscale}: the
double-gyre forcing, the homogeneous Dirichlet boundary conditions, and the
parameters \yrev{$\mathrm{Re}=450$ and $\mathrm{Ro}=0.0036$} are those used there, the FOM is a spectral
solution on a $257\times513$ grid recorded every $0.01$ time units, and the
POD basis is built from the vorticity snapshots \yrev{described above}, with the
streamfunction basis obtained by solving the Poisson problem for each
vorticity mode.  The training data for \opinf{} and \spopinf{} are $400$
adjacent snapshot pairs $(\bm a_n,\bm a_{n+1})$ whose left members are spread
uniformly over the training interval $[10,50]$, so the step entering
\cref{eq:forward-response} is the snapshot spacing $\Delta t=0.01$; the number
of pairs is the same at every $r$.  For \spopinf{}, we set
$\yrev{\bm\Gamma}=10^{-8}\yrev{\bm I}$, $\varepsilon_{\rm tol}=10^{-6}$, and at most $20$
alternating updates, reduced to $10$ at $r=25$; at $r=30$ the normal
equations are numerically singular, with reciprocal condition numbers below
$10^{-16}$, so that case rescales the state by its global root-mean-square
value ($33.71$), uses $\yrev{\bm\Gamma}=10^{-1}\yrev{\bm I}$, and performs a single update.

All ROMs are integrated on $[10,80]$ from the projected FOM state at $t=10$
with the time step $10^{-3}$, and the reduced state is stored every ten steps
to match the FOM sampling rate.  The deterministic G-ROM and \opinf{} models
are advanced by the fourth-order Runge--Kutta scheme (RK4), as in
\citet{mou2020data,mou2023multiscale}.  \spopinf{}, whose model carries a
diffusion term, uses the same time step: the drift is advanced by RK4 and the
Gaussian increment \yrev{$\widehat{\bm\Sigma}\sqrt{\delta t}\,\bm\xi_n$} is added at each step\yrev{, where $\delta t=10^{-3}$ is the ROM time step and $\bm\xi_n$ is a standard normal vector}, with the increment path fixed at each $r$ by the seed $2026+r$.
A trajectory is \yrev{classified} as divergent when it becomes non-finite or $\|\bm a(t)\|_2>10^{16}$.

Because the QGE is chaotic, trajectories of different models diverge quickly
and pathwise agreement cannot be expected.  The trajectory error
\cref{eq:qg-trajectory-error} in \cref{tab:qg-pathwise-errors} therefore
serves only as a stability diagnostic: a model is useful if its error remains
at the level set by chaotic decorrelation rather than growing without bound.
The primary metric is statistical.  Following
\yrev{\citet{mou2020data,mou2023multiscale}}, we use the $L^2$ error of the
time-averaged ROM streamfunction over $[10,80]$, which tests the models on
the full interval although they are trained on $[10,50]$ only,
\begin{equation}
  \mathcal E(L^2)=
  \big\|\overline{\psi^{\rm FOM}}(\bm x)
  -\overline{\psi^{\rm ROM}}(\bm x)\big\|_{L^2},
  \qquad
  \overline{\psi}=\frac{1}{T_1-T_0}
  \int_{T_0}^{T_1}\psi(t)\,\mathrm dt,
  \label{eq:qg-mean-error}
\end{equation}
with $[T_0,T_1]=[10,80]$.  This metric is appropriate because successful
ROMs for this test problem are expected to reproduce the four-gyre
structure of the time-averaged streamfunction with the correct magnitude
\citep{greatbatch2000,mou2020reduced}.

\begin{table}[!htbp]
\centering
\caption{QG pathwise comparison over $10\leq t\leq80$: \yrev{long-time}
reduced-state relative  error {\cref{eq:qg-trajectory-error}}
for each
reduced dimension $r$.  \xrev{A dash marks a trajectory that diverges
(becomes non-finite or exceeds $\|\bm a\|_2=10^{16}$)} before the final time $T=80$, so the \yrev{long-time} error is
undefined.  Best value in each row in bold.
\xrev{G-ROM\,$(\sigma)$ is not included, since the full-order model
\cref{eq:qg-pde} is deterministic and there is no physical noise to
project.}}
\label{tab:qg-pathwise-errors}
\small
\begin{tabular}{rcccc}
\toprule
$r$ & G-ROM &  OpInf & OpInf $(\sigma)$ & SPOpInf \\
\midrule
 5 & $32.3530$ & $1.6952$ & \xrev{---} & $\mathbf{1.1870}$ \\
10 & $15.6463$ & \xrev{---} & \xrev{---} & $\mathbf{1.1056}$ \\
15 & $8.7633$  & \xrev{---} & \xrev{---} & $\mathbf{1.0596}$ \\
20 & $2.0168$  & \xrev{---} & \xrev{---} & $\mathbf{1.0648}$ \\
25 & $1.7169$  & \xrev{---} & \xrev{---} & $\mathbf{1.0390}$ \\
30 & $1.5223$  & \xrev{---} & \xrev{---} & $\mathbf{1.0844}$ \\
\bottomrule
\end{tabular}
\end{table}
\Cref{tab:qg-pathwise-errors} shows that \yrev{\opinf{} is unstable beyond $r=5$ and \opinf{}\,$(\sigma)$ at every $r$}, whereas \spopinf{} remains stable at every $r$,
with $e_{\rm traj}$ close to one, the level set by chaotic decorrelation of
two trajectories with matched statistics.  The G-ROM is also stable at every
$r$, but at small $r$ its trajectory error lies far above that level ($32.4$
at $r=5$); this reflects a drift of the mean state rather than decorrelation,
the same defect that \cref{tab:qg-pde-errors} measures directly.

\Cref{tab:qg-pde-errors} therefore compares \spopinf{} with the G-ROM, the
only other model that remains stable over $[10,80]$ at every $r$: \opinf{} is
stable only at $r=5$ and \opinf{}\,$(\sigma)$ at no $r$
(\cref{tab:qg-pathwise-errors}), and G-ROM\,$(\sigma)$ is not defined for
this deterministic full-order model.
\begin{table}[!htbp]
\centering
\caption{QG time-averaged streamfunction errors \cref{eq:qg-mean-error}
over $10\leq t\leq80$.}
\label{tab:qg-pde-errors}
\small
\begin{tabular}{rcc}
\toprule
$r$ & G-ROM & SPOpInf \\
\midrule
 5 & $99.6506$ & $\mathbf{0.5183}$ \\
10 & $50.4060$ & $\mathbf{0.9725}$ \\
15 & $39.2524$ & $\mathbf{0.8656}$ \\
20 & $7.5923$  & $\mathbf{0.8647}$ \\
25 & $3.6814$  & $\mathbf{1.1333}$ \\
30 & $1.7693$  & $\mathbf{0.9602}$ \\
\bottomrule
\end{tabular}
\end{table}

\begin{figure}[H]
  \centering
  {\includegraphics[width=\textwidth]{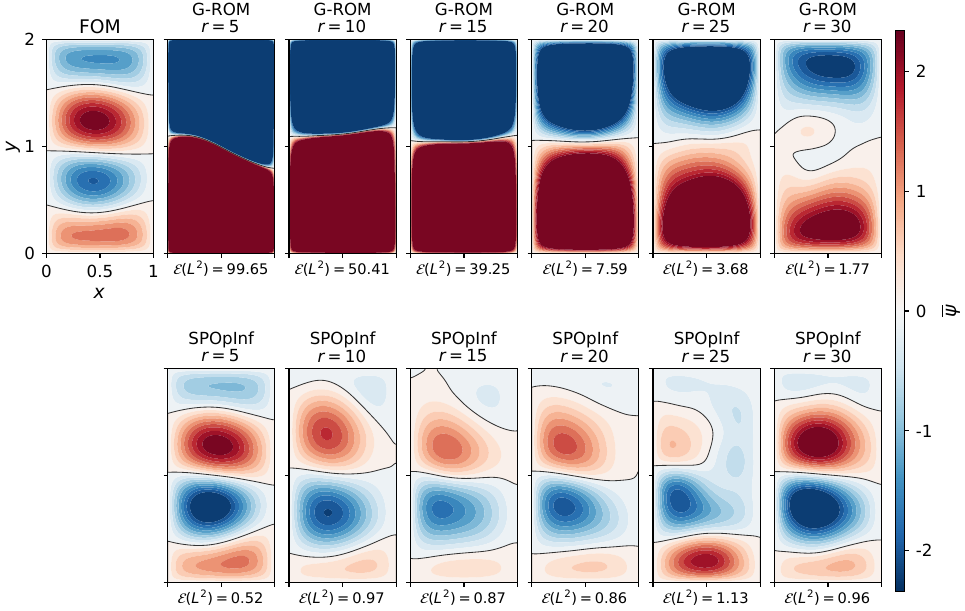}}

  \caption{Time-averaged QG streamfunction, $\overline\psi$, for different $r$ values for
  the FOM, G-ROM, and SPOpInf.  {All panels share the symmetric color
  scale of the FOM field and a single colorbar, so magnitude errors are
  visible directly; the thin contour marks $\overline\psi=0$, and the
  mean-streamfunction error \cref{eq:qg-mean-error} of
  \cref{tab:qg-pde-errors} is printed under each ROM panel.}}
  \label{fig:qg-mean-streamfunction}
\end{figure}

\begin{remark}[Conserved quadratic invariant]
\label{rem:qg-invariant}
As noted in \cref{sec:constraints}, the constraint \cref{eq:constraint},
imposed on the vorticity coefficients, conserves the resolved enstrophy
$\tfrac12\|\bm a\|_2^2=\tfrac12\|\omega_r\|_{L^2}^2$ rather than the kinetic
energy $\tfrac12(\psi_r,\omega_r)$.  This is the appropriate invariant for
the QGE: the Jacobian $J(\omega,\psi)$ conserves the enstrophy in the
continuum, and the projected Galerkin tensor assembled from
\cref{eq:qg-grom-operators} inherits this property to quadrature accuracy,
which we verified by evaluating the cubic form
$\bm a^\top\mathcal Q_{\mathcal B_G}(\bm a)$ on the assembled operators.  The
same quadrature confirms that the vorticity POD basis is orthonormal, so no
weighting of the kind described in \cref{rem:weighted-energy} is needed and
\cref{eq:reduced-relations} applies verbatim.
\end{remark}

The low-dimensional G-ROM has a severe mean-field bias: at $r=5$ its error
exceeds that of \spopinf{} by roughly two orders of magnitude.  The
\spopinf{} error stays within a narrow band of order one across the entire
sweep, a level the G-ROM approaches only at $r=30$, where \spopinf{} still
reduces the G-ROM error by nearly a factor of two.  The \spopinf{} error is
not monotone in $r$ and is in fact smallest at $r=5$.  Two mechanisms act in
opposite directions.  The learned stochastic closure absorbs the truncation
error directly at each dimension, so enlarging the basis does not by itself
improve the mean field; and with the training data fixed at $400$ snapshot
pairs while the size of the quadratic library grows with $r$, the regression
becomes progressively less well conditioned, until at $r=30$ the normal
equations are numerically singular (see the Settings paragraph) and the mean
field is shaped partly by the ridge term rather than by the data alone.
Neither mechanism dominates, which is why the error varies within the band
instead of decreasing.  Whether the non-monotonicity persists when the
training data grow with $r$ has not been tested.  The time-averaged
streamfunction fields in \cref{fig:qg-mean-streamfunction} show that the
\spopinf{} gain concerns the long-time four-gyre spatial structure rather
than synchronization of individual chaotic trajectories.

\subsection{Sensitivity tests}
\label{sec:robustness}
Except for the CDV constraint study, the results of \cref{sec:ident-sds} are each obtained from a single training trajectory, generated with one
realization of the driving noise, sampled at one time interval, and observed without noise.  This subsection examines each of these choices in turn for the five Setting I benchmarks, using eight independent random seeds \yrev{per system}.  
\spopinf{} and \opinf{} use the same library of terms and differ only in the energy constraints and the joint calibration of drift and noise.
The seed-to-seed comparison in \yrev{\cref{tab:robustness}} uses the same training
intervals as \cref{sec:ident-sds}.  For the other three tests (\cref{tab:robustness-dt,tab:robustness-T,tab:robustness-obs}), 
we simulate one reference trajectory per system and seed with time step $\Delta t=0.005$ and then derive every training set from it: coarser sampling intervals by keeping every $k$-th snapshot, shorter training intervals by keeping only the
initial part, and noisy observations by adding random perturbations.  Each
test therefore compares estimates obtained from the same underlying trajectory.  The results for the sampling and training intervals are also
plotted in \cref{fig:sensitivity-dt,fig:sensitivity-T}.

\paragraph{Seed-to-seed variability.}
\Cref{tab:robustness} repeats the experiments of \cref{sec:ident-sds} with
eight independent realizations of the driving noise.  For each realization,
\opinf{} and \spopinf{} are trained on the same trajectory, so any
difference between them comes from the method and not from the data.  The
\spopinf{} errors vary noticeably from seed to seed, most for the dyad,
whose bursty trajectory contains the fewest informative events, less for
Lorenz--84, and least for the other three systems.  This variation does not
change the comparison with \opinf{}: \spopinf{} has the smaller quadratic
error $e_{\mathcal B}$ for every seed of every system, and the smaller linear
error $e_C$ for every seed of Lorenz--84, Lorenz--96, and CDV and for most
seeds of the dyad and TBH.  The TBH exceptions are expected from
\cref{sec:sds-recovery}: the constraints act directly on $\mathcal B$ and
only indirectly on $\bm C$, and for TBH there is little for them to correct.
For every seed, \spopinf{} recovers the noise amplitude accurately and
satisfies the constraints to machine precision, whereas the \opinf{}
operators violate them clearly.  The single-trajectory results of
\cref{sec:sds-recovery} therefore give the correct ranking of the two
methods, although not always the typical error size: for the dyad, the
reported run has a quadratic error about an order of magnitude smaller than
the mean over seeds and should be read with that in mind.

\paragraph{Sampling interval.}
We keep every first, second, fourth, tenth, twentieth, and fortieth snapshot
of the same trajectory, which gives sampling intervals from $\Delta t=0.005$
to $0.2$ (\cref{tab:robustness-dt,fig:sensitivity-dt}).  The sampling
interval affects the estimate in two ways: a longer interval makes the
finite-difference approximation in \cref{eq:forward-response} less accurate,
and it changes how much of the one-step change is attributed to noise.  For
the dyad, TBH, and CDV, the quadratic error $e_{\mathcal B}$ of both methods
is nearly unchanged up to moderate sampling intervals, because it is limited
by the finite length of the record rather than by the finite difference.
For the two chaotic Lorenz systems, whose states change fastest along a
trajectory, the finite-difference error dominates already at the second
smallest interval, and $e_{\mathcal B}$ grows steadily with $\Delta t$.  The
noise-amplitude error $e_\sigma$ grows with $\Delta t$ in every system:
slowly for CDV, roughly in proportion to $\Delta t$ for the dyad and TBH, and
by several orders of magnitude for the two Lorenz systems, because the
increment likelihood assigns the entire one-step discretization error to the
noise.  The noise amplitude is therefore more sensitive to the sampling
interval than the drift, and for chaotic systems both require the finest
available sampling; the interval used in \cref{sec:ident-sds} is the
smallest tested and gives the smallest errors throughout.

\opinf{} depends on $\Delta t$ in the same way, and its quadratic error is
larger than that of \spopinf{} for almost every combination of system, seed,
and sampling interval; the only exceptions are one dyad seed at the two
coarsest intervals.  The gap between the two methods narrows as $\Delta t$
grows, since the discretization error common to both then dominates, except
for Lorenz--84, where the gap widens.  The linear operator behaves
differently.  For Lorenz--84 at every interval above the finest, and for
Lorenz--96 at the coarsest, \spopinf{} has the larger linear error $e_C$ for
every seed.  This is consistent with the energy constraints keeping the
finite-difference error out of $\mathcal B$, so that part of it is absorbed
by $\bm C$ instead; at the finest interval the effect disappears and
\spopinf{} has the smaller $e_C$ for every seed of both systems.

\paragraph{Training interval.}
We train on the first $T$ time units of each trajectory, after the initial
transient of \cref{sec:ident-sds} is discarded, for $T$ between $100$ and
$1000$ and a sampling interval of $\Delta t=0.005$
(\cref{tab:robustness-T,fig:sensitivity-T}).  The quadratic error
$e_{\mathcal B}$ of both methods decreases steadily as $T$ grows, in every
system and without \yrev{leveling} off, at a rate comparable to the reference line
$T^{-1/2}$ in \cref{fig:sensitivity-T}; the dyad shows the largest
fluctuations because of its intermittent bursts.  The systems differ mainly
in how much data they need.  The dyad and TBH reach small errors on the
shorter records, whereas CDV, whose smallest coefficients are very small,
still has a sizeable error on the longest record tested, which is why its
training interval in \cref{sec:ident-sds} is much longer than for the other
systems.  Lorenz--84, trained on the shortest interval in
\cref{sec:ident-sds}, benefits from more data as well, but its error is
already small on that interval.  The advantage over \opinf{} does not depend
on $T$: \opinf{} decreases at nearly the same rate, so the gap between the
two methods stays roughly constant, largest for the dyad and moderate for the
other four systems.  \spopinf{} has the smaller quadratic error for all but
one combination of system, seed, and $T$, and the smaller linear error $e_C$
for most of them.  The noise-amplitude error $e_\sigma$ is nearly the same
for all five systems and follows the standard error expected when a noise
amplitude is estimated from $T/\Delta t$ independent Gaussian increments
(last row of \cref{tab:robustness-T}).  With the sampling interval used here,
the accuracy of the noise amplitude is therefore determined mainly by the
length of the record.

\paragraph{Observation noise.}
We replace each snapshot by $a_{n,i}^{\rm obs}=a_{n,i}+\eta_{n,i}$, where the
errors $\eta_{n,i}\sim\mathcal N(0,\sigma_{{\rm obs},i}^2)$ are independent
across snapshots and $\sigma_{{\rm obs},i}=\varepsilon\,\mathrm{std}(a_i)$ is
proportional to the spread of each coefficient, to examine the effect of the
noise-free assumption discussed in \cref{sec:data-model}.  The quadratic
error $e_{\mathcal B}$ is essentially unchanged at the two smaller noise
levels (\cref{tab:robustness-obs}, top).  At the largest level the effect
differs between systems: the error is unchanged for TBH and Lorenz--84,
increases moderately for Lorenz--96 and the dyad, and more than doubles for
CDV, whose driving noise is by far the weakest and which is therefore the
most sensitive to added observation noise.  The noise amplitude is much more
sensitive, and in a predictable way.  Since the errors at two consecutive
snapshots are independent, each finite difference carries the variance of
both, so \cref{eq:R-update} overestimates the squared noise amplitude by
$2\sigma_{{\rm obs},i}^2/\Delta t$.  The measured estimates agree closely
with this prediction in all five systems (\cref{tab:robustness-obs},
bottom).  The prediction also explains why the sensitivity differs between
systems: it is largest where the driving noise is weak relative to the state
amplitude, as for Lorenz--84 and Lorenz--96.  The effect of observation
noise on the noise amplitude is thus a systematic and predictable bias
rather than random error, and it could be removed if $\sigma_{{\rm obs},i}$
were known or estimated together with the model, which we leave to future
work.

\begin{remark}[Incorrect constraints]
\label{rem:cdv-constraint-sets}
\Cref{tab:cdv-constraint-sets} completes the CDV constraint comparison
discussed in \cref{sec:ident-sds}.  Imposing every constraint, including the
two that the true dynamics does not satisfy, increases the quadratic error
$e_{\mathcal B}$ only slightly, and the result remains clearly more accurate
than the unconstrained fit.  In every case, the constraints that the true
dynamics does satisfy are met to rounding error.  \spopinf{} therefore
degrades gracefully when part of the constraint set is incorrect.
\end{remark}

\begin{table}[H]
\centering
\caption{Seed-to-seed variability at the settings of
\cref{sec:ident-sds}: mean $\pm$ standard deviation over eight independent
driving-noise realizations, with \opinf{} and \spopinf{} fitted to the
\emph{same} eight trajectories.
\yrev{The smaller quadratic and linear errors are in bold.}  $e_\sigma$ is
reported for \spopinf{} only, since \opinf{} does not infer $\bm\Sigma$.}
\label{tab:robustness}
\small
\setlength{\tabcolsep}{4pt}
\begin{tabular}{llccc}
\toprule
System & Method & $e_{\mathcal B}$ & $e_C$ & $e_\sigma$ \\
\midrule
Dyad & \opinf{}  & $9.10\!\times\!10^{-2}\!\pm\!3.8\!\times\!10^{-2}$ & $1.37\!\times\!10^{-1}\!\pm\!5.8\!\times\!10^{-2}$ & --- \\
     & \spopinf{} & $\mathbf{3.43\!\times\!10^{-2}}\!\pm\!1.9\!\times\!10^{-2}$ & $\mathbf{8.70\!\times\!10^{-2}}\!\pm\!5.1\!\times\!10^{-2}$ & $1.23\!\times\!10^{-3}\!\pm\!7.8\!\times\!10^{-4}$ \\
\addlinespace
TBH  & \opinf{}  & $1.09\!\times\!10^{-1}\!\pm\!7.1\!\times\!10^{-3}$ & $1.65\!\times\!10^{-1}\!\pm\!2.2\!\times\!10^{-2}$ & --- \\
     & \spopinf{} & $\mathbf{8.07\!\times\!10^{-2}}\!\pm\!6.8\!\times\!10^{-3}$ & $\mathbf{1.62\!\times\!10^{-1}}\!\pm\!2.3\!\times\!10^{-2}$ & $1.71\!\times\!10^{-3}\!\pm\!6.5\!\times\!10^{-4}$ \\
\addlinespace
L84  & \opinf{}  & $1.15\!\times\!10^{-2}\!\pm\!2.8\!\times\!10^{-3}$ & $5.16\!\times\!10^{-2}\!\pm\!1.6\!\times\!10^{-2}$ & --- \\
     & \spopinf{} & $\mathbf{7.66\!\times\!10^{-3}}\!\pm\!2.5\!\times\!10^{-3}$ & $\mathbf{3.41\!\times\!10^{-2}}\!\pm\!1.4\!\times\!10^{-2}$ & $4.75\!\times\!10^{-3}\!\pm\!1.1\!\times\!10^{-3}$ \\
\addlinespace
L96  & \opinf{}  & $5.67\!\times\!10^{-3}\!\pm\!4.7\!\times\!10^{-4}$ & $3.23\!\times\!10^{-2}\!\pm\!2.4\!\times\!10^{-3}$ & --- \\
     & \spopinf{} & $\mathbf{3.92\!\times\!10^{-3}}\!\pm\!2.0\!\times\!10^{-4}$ & $\mathbf{2.19\!\times\!10^{-2}}\!\pm\!1.5\!\times\!10^{-3}$ & $1.65\!\times\!10^{-3}\!\pm\!5.0\!\times\!10^{-4}$ \\
\addlinespace
CDV  & \opinf{}  & $2.41\!\times\!10^{-1}\!\pm\!1.3\!\times\!10^{-2}$ & $1.100\pm0.138$ & --- \\
     & \spopinf{} & $\mathbf{1.38\!\times\!10^{-1}}\!\pm\!1.9\!\times\!10^{-2}$ & $\mathbf{5.70\!\times\!10^{-1}}\!\pm\!1.4\!\times\!10^{-1}$ & $6.88\!\times\!10^{-4}\!\pm\!1.9\!\times\!10^{-4}$ \\
\bottomrule
\end{tabular}
\end{table}

\begin{table}[H]
\centering
\caption{Sampling interval.  Mean over eight seeds of the \yrev{quadratic error}
$e_{\mathcal B}$ of \opinf{} and \spopinf{} and of the \yrev{noise-amplitude error}
$e_\sigma$ of \spopinf{}, obtained by sub-sampling the same trajectories, of
length $T=2000$ for the dyad and Lorenz--84, $1000$ for TBH and Lorenz--96,
and $5000$ for CDV.  The value used in \cref{sec:ident-sds} is
$\Delta t=0.005$.  The smaller \yrev{quadratic error} of each pair is in bold.}
\label{tab:robustness-dt}
\small
\setlength{\tabcolsep}{4pt}
\begin{tabular}{llcccccc}
\toprule
& $\Delta t$ & $0.005$ & $0.01$ & $0.02$ & $0.05$ & $0.1$ & $0.2$ \\
\midrule
\multicolumn{8}{l}{\textit{\yrev{quadratic error}} $e_{\mathcal B}$}\\
Dyad & \opinf{}  & $1.17\!\times\!10^{-1}$ & $1.16\!\times\!10^{-1}$ & $1.12\!\times\!10^{-1}$ & $1.07\!\times\!10^{-1}$ & $1.16\!\times\!10^{-1}$ & $1.64\!\times\!10^{-1}$ \\
     & \spopinf{} & $\mathbf{2.64\!\times\!10^{-2}}$ & $\mathbf{2.56\!\times\!10^{-2}}$ & $\mathbf{2.39\!\times\!10^{-2}}$ & $\mathbf{2.85\!\times\!10^{-2}}$ & $\mathbf{4.71\!\times\!10^{-2}}$ & $\mathbf{1.08\!\times\!10^{-1}}$ \\
TBH  & \opinf{}  & $1.21\!\times\!10^{-1}$ & $1.21\!\times\!10^{-1}$ & $1.20\!\times\!10^{-1}$ & $1.23\!\times\!10^{-1}$ & $1.37\!\times\!10^{-1}$ & $1.97\!\times\!10^{-1}$ \\
     & \spopinf{} & $\mathbf{8.89\!\times\!10^{-2}}$ & $\mathbf{8.87\!\times\!10^{-2}}$ & $\mathbf{8.90\!\times\!10^{-2}}$ & $\mathbf{9.39\!\times\!10^{-2}}$ & $\mathbf{1.14\!\times\!10^{-1}}$ & $\mathbf{1.83\!\times\!10^{-1}}$ \\
L84  & \opinf{}  & $2.20\!\times\!10^{-3}$ & $1.75\!\times\!10^{-2}$ & $5.15\!\times\!10^{-2}$ & $1.53\!\times\!10^{-1}$ & $3.20\!\times\!10^{-1}$ & $6.25\!\times\!10^{-1}$ \\
     & \spopinf{} & $\mathbf{1.38\!\times\!10^{-3}}$ & $\mathbf{8.78\!\times\!10^{-3}}$ & $\mathbf{2.55\!\times\!10^{-2}}$ & $\mathbf{6.32\!\times\!10^{-2}}$ & $\mathbf{6.12\!\times\!10^{-2}}$ & $\mathbf{2.40\!\times\!10^{-1}}$ \\
L96  & \opinf{}  & $5.52\!\times\!10^{-3}$ & $1.48\!\times\!10^{-2}$ & $4.22\!\times\!10^{-2}$ & $1.32\!\times\!10^{-1}$ & $3.09\!\times\!10^{-1}$ & $6.89\!\times\!10^{-1}$ \\
     & \spopinf{} & $\mathbf{3.65\!\times\!10^{-3}}$ & $\mathbf{1.19\!\times\!10^{-2}}$ & $\mathbf{3.49\!\times\!10^{-2}}$ & $\mathbf{1.13\!\times\!10^{-1}}$ & $\mathbf{2.78\!\times\!10^{-1}}$ & $\mathbf{6.49\!\times\!10^{-1}}$ \\
CDV  & \opinf{}  & $2.60\!\times\!10^{-1}$ & $2.59\!\times\!10^{-1}$ & $2.59\!\times\!10^{-1}$ & $2.58\!\times\!10^{-1}$ & $2.59\!\times\!10^{-1}$ & $2.73\!\times\!10^{-1}$ \\
     & \spopinf{} & $\mathbf{1.55\!\times\!10^{-1}}$ & $\mathbf{1.54\!\times\!10^{-1}}$ & $\mathbf{1.54\!\times\!10^{-1}}$ & $\mathbf{1.55\!\times\!10^{-1}}$ & $\mathbf{1.59\!\times\!10^{-1}}$ & $\mathbf{1.79\!\times\!10^{-1}}$ \\
\addlinespace
\multicolumn{8}{l}{\textit{\yrev{noise-amplitude error}} $e_\sigma$, \spopinf{}}\\
Dyad & & $1.23\!\times\!10^{-3}$ & $1.92\!\times\!10^{-3}$ & $4.19\!\times\!10^{-3}$ & $1.07\!\times\!10^{-2}$ & $2.21\!\times\!10^{-2}$ & $4.47\!\times\!10^{-2}$ \\
TBH  & & $1.63\!\times\!10^{-3}$ & $2.60\!\times\!10^{-3}$ & $4.39\!\times\!10^{-3}$ & $1.14\!\times\!10^{-2}$ & $2.19\!\times\!10^{-2}$ & $3.83\!\times\!10^{-2}$ \\
L84  & & $1.27\!\times\!10^{-3}$ & $1.76\!\times\!10^{-3}$ & $1.15\!\times\!10^{-2}$ & $2.46\!\times\!10^{-1}$ & $1.62\!\times\!10^{0}$ & $5.45\!\times\!10^{0}$ \\
L96  & & $1.65\!\times\!10^{-3}$ & $2.32\!\times\!10^{-3}$ & $4.49\!\times\!10^{-2}$ & $7.66\!\times\!10^{-1}$ & $3.05\!\times\!10^{0}$ & $7.14\!\times\!10^{0}$ \\
CDV  & & $6.57\!\times\!10^{-4}$ & $9.81\!\times\!10^{-4}$ & $1.70\!\times\!10^{-3}$ & $2.80\!\times\!10^{-3}$ & $5.06\!\times\!10^{-3}$ & $9.37\!\times\!10^{-3}$ \\
\bottomrule
\end{tabular}
\end{table}

\begin{figure}[H]
  \centering
  \includegraphics[width=\textwidth]{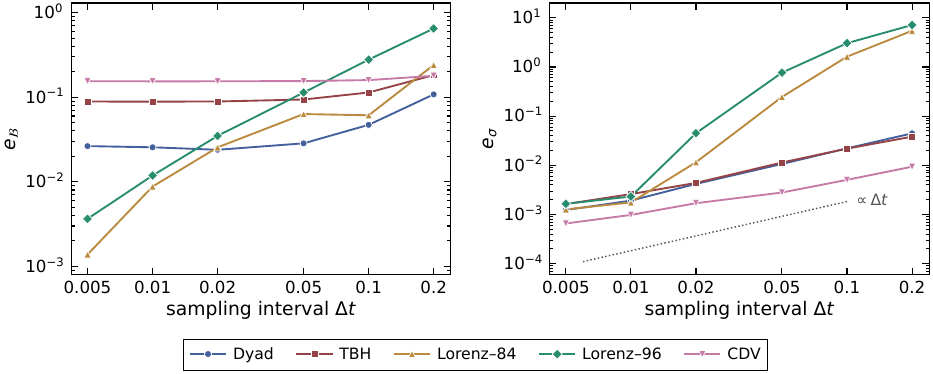}
  \caption{Sampling-interval sweep of \cref{tab:robustness-dt}: mean over
  eight seeds of the \yrev{quadratic error} $e_{\mathcal B}$ (left) and the \yrev{noise-amplitude error} $e_\sigma$ (right) of \spopinf{} against $\Delta t$, obtained by
  sub-sampling the same trajectories; the \opinf{} \yrev{quadratic error}s are listed in
  \cref{tab:robustness-dt}.  The dotted line has slope one.}
  \label{fig:sensitivity-dt}
\end{figure}

\begin{table}[H]
\centering
\caption{Training interval on a common grid.  Mean over eight seeds of the
\yrev{quadratic error} $e_{\mathcal B}$ of \opinf{} and \spopinf{} and of the \yrev{noise-amplitude error} $e_\sigma$ of \spopinf{} at $\Delta t=0.005$; each estimate uses the
first $T$ time units of one trajectory per system and seed, the same data for
both methods.  The smaller \yrev{quadratic error} of each pair is in bold.  The last
row is $(2T/\Delta t)^{-1/2}$, the relative standard error of a noise
amplitude estimated from $T/\Delta t$ Gaussian increments.}
\label{tab:robustness-T}
\small
\setlength{\tabcolsep}{3.5pt}
\begin{tabular}{llccccccc}
\toprule
& $T$ & $100$ & $150$ & $200$ & $300$ & $500$ & $700$ & $1000$ \\
\midrule
\multicolumn{9}{l}{\textit{\yrev{quadratic error}} $e_{\mathcal B}$}\\
Dyad & \opinf{}  & $4.67\!\times\!10^{-1}$ & $4.37\!\times\!10^{-1}$ & $3.50\!\times\!10^{-1}$ & $2.64\!\times\!10^{-1}$ & $2.12\!\times\!10^{-1}$ & $1.58\!\times\!10^{-1}$ & $1.26\!\times\!10^{-1}$ \\
     & \spopinf{} & $\mathbf{1.63\!\times\!10^{-1}}$ & $\mathbf{1.19\!\times\!10^{-1}}$ & $\mathbf{8.42\!\times\!10^{-2}}$ & $\mathbf{6.59\!\times\!10^{-2}}$ & $\mathbf{4.82\!\times\!10^{-2}}$ & $\mathbf{2.86\!\times\!10^{-2}}$ & $\mathbf{2.86\!\times\!10^{-2}}$ \\
TBH  & \opinf{}  & $4.62\!\times\!10^{-1}$ & $3.39\!\times\!10^{-1}$ & $2.83\!\times\!10^{-1}$ & $2.20\!\times\!10^{-1}$ & $1.63\!\times\!10^{-1}$ & $1.31\!\times\!10^{-1}$ & $1.09\!\times\!10^{-1}$ \\
     & \spopinf{} & $\mathbf{3.25\!\times\!10^{-1}}$ & $\mathbf{2.42\!\times\!10^{-1}}$ & $\mathbf{2.01\!\times\!10^{-1}}$ & $\mathbf{1.59\!\times\!10^{-1}}$ & $\mathbf{1.15\!\times\!10^{-1}}$ & $\mathbf{9.67\!\times\!10^{-2}}$ & $\mathbf{8.07\!\times\!10^{-2}}$ \\
L84  & \opinf{}  & $1.02\!\times\!10^{-2}$ & $8.71\!\times\!10^{-3}$ & $8.05\!\times\!10^{-3}$ & $6.80\!\times\!10^{-3}$ & $4.76\!\times\!10^{-3}$ & $3.94\!\times\!10^{-3}$ & $2.93\!\times\!10^{-3}$ \\
     & \spopinf{} & $\mathbf{5.56\!\times\!10^{-3}}$ & $\mathbf{4.40\!\times\!10^{-3}}$ & $\mathbf{4.88\!\times\!10^{-3}}$ & $\mathbf{4.08\!\times\!10^{-3}}$ & $\mathbf{2.61\!\times\!10^{-3}}$ & $\mathbf{2.42\!\times\!10^{-3}}$ & $\mathbf{1.96\!\times\!10^{-3}}$ \\
L96  & \opinf{}  & $1.86\!\times\!10^{-2}$ & $1.46\!\times\!10^{-2}$ & $1.25\!\times\!10^{-2}$ & $1.03\!\times\!10^{-2}$ & $7.79\!\times\!10^{-3}$ & $6.58\!\times\!10^{-3}$ & $5.67\!\times\!10^{-3}$ \\
     & \spopinf{} & $\mathbf{1.17\!\times\!10^{-2}}$ & $\mathbf{9.20\!\times\!10^{-3}}$ & $\mathbf{8.31\!\times\!10^{-3}}$ & $\mathbf{6.76\!\times\!10^{-3}}$ & $\mathbf{5.30\!\times\!10^{-3}}$ & $\mathbf{4.54\!\times\!10^{-3}}$ & $\mathbf{3.92\!\times\!10^{-3}}$ \\
CDV  & \opinf{}  & $4.08\!\times\!10^{0}$ & $2.80\!\times\!10^{0}$ & $2.17\!\times\!10^{0}$ & $1.47\!\times\!10^{0}$ & $9.71\!\times\!10^{-1}$ & $7.70\!\times\!10^{-1}$ & $5.92\!\times\!10^{-1}$ \\
     & \spopinf{} & $\mathbf{2.13\!\times\!10^{0}}$ & $\mathbf{1.42\!\times\!10^{0}}$ & $\mathbf{1.12\!\times\!10^{0}}$ & $\mathbf{7.60\!\times\!10^{-1}}$ & $\mathbf{5.41\!\times\!10^{-1}}$ & $\mathbf{4.25\!\times\!10^{-1}}$ & $\mathbf{3.40\!\times\!10^{-1}}$ \\
\addlinespace
\multicolumn{9}{l}{\textit{\yrev{noise-amplitude error}} $e_\sigma$, \spopinf{}}\\
Dyad & & $4.72\!\times\!10^{-3}$ & $5.19\!\times\!10^{-3}$ & $4.51\!\times\!10^{-3}$ & $2.68\!\times\!10^{-3}$ & $2.19\!\times\!10^{-3}$ & $1.58\!\times\!10^{-3}$ & $1.57\!\times\!10^{-3}$ \\
TBH  & & $5.22\!\times\!10^{-3}$ & $4.40\!\times\!10^{-3}$ & $3.48\!\times\!10^{-3}$ & $2.39\!\times\!10^{-3}$ & $1.89\!\times\!10^{-3}$ & $1.64\!\times\!10^{-3}$ & $1.71\!\times\!10^{-3}$ \\
L84  & & $5.07\!\times\!10^{-3}$ & $4.38\!\times\!10^{-3}$ & $3.59\!\times\!10^{-3}$ & $2.38\!\times\!10^{-3}$ & $2.06\!\times\!10^{-3}$ & $1.44\!\times\!10^{-3}$ & $1.50\!\times\!10^{-3}$ \\
L96  & & $4.31\!\times\!10^{-3}$ & $4.14\!\times\!10^{-3}$ & $3.64\!\times\!10^{-3}$ & $2.35\!\times\!10^{-3}$ & $1.83\!\times\!10^{-3}$ & $1.50\!\times\!10^{-3}$ & $1.65\!\times\!10^{-3}$ \\
CDV  & & $4.70\!\times\!10^{-3}$ & $3.81\!\times\!10^{-3}$ & $3.65\!\times\!10^{-3}$ & $3.03\!\times\!10^{-3}$ & $2.37\!\times\!10^{-3}$ & $1.91\!\times\!10^{-3}$ & $1.70\!\times\!10^{-3}$ \\
\multicolumn{2}{l}{$(2T/\Delta t)^{-1/2}$} & $5.00\!\times\!10^{-3}$ & $4.08\!\times\!10^{-3}$ & $3.54\!\times\!10^{-3}$ & $2.89\!\times\!10^{-3}$ & $2.24\!\times\!10^{-3}$ & $1.89\!\times\!10^{-3}$ & $1.58\!\times\!10^{-3}$ \\
\bottomrule
\end{tabular}
\end{table}

\begin{figure}[H]
  \centering
  \includegraphics[width=\textwidth]{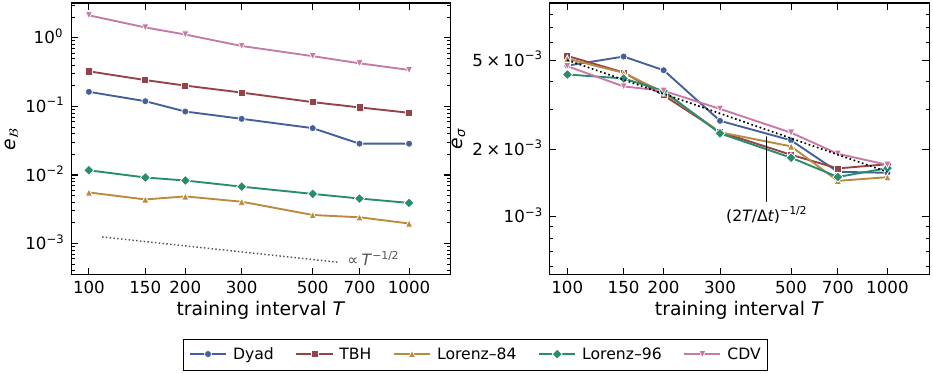}
  \caption{Training-interval sweep of \cref{tab:robustness-T}: mean over
  eight seeds of the \yrev{quadratic error} $e_{\mathcal B}$ (left) and the \yrev{noise-amplitude error} $e_\sigma$ (right) of \spopinf{} against $T$, at $\Delta t=0.005$; the
  \opinf{} \yrev{quadratic error}s are listed in \cref{tab:robustness-T}.  The dotted
  line on the left has slope $-1/2$; on the right it is
  $(2T/\Delta t)^{-1/2}$, the relative standard error of a noise amplitude
  estimated from $T/\Delta t$ Gaussian increments, which the five systems
  follow closely.}
  \label{fig:sensitivity-T}
\end{figure}

\begin{table}[H]
\centering
\caption{Observation noise of relative amplitude
$\varepsilon=\sigma_{\rm obs}/\mathrm{std}(a_i)$ (\spopinf{}).  Top: \yrev{quadratic error} as a multiple of its noise-free value.  Bottom: \yrev{measured ratio $\widehat\sigma/\sigma$, with the predicted value in parentheses.}}
\label{tab:robustness-obs}
\small
\begin{tabular}{lccccc}
\toprule
$\varepsilon$ & Dyad & TBH & L84 & L96 & CDV \\
\midrule
\multicolumn{6}{l}{\textit{\yrev{quadratic error}} $e_{\mathcal B}(\varepsilon)/e_{\mathcal B}(0)$}\\
$10^{-4}$ & $1.000$ & $1.000$ & $1.000$ & $1.000$ & $1.000$ \\
$10^{-3}$ & $1.002$ & $1.000$ & $0.999$ & $1.001$ & $1.003$ \\
$10^{-2}$ & $1.306$ & $0.998$ & $0.990$ & $1.156$ & $2.663$ \\
\addlinespace
\multicolumn{6}{l}{\textit{\yrev{noise-amplitude ratio}} $\widehat\sigma/\sigma$, measured (predicted)}\\
$10^{-4}$ & $1.000$ & $1.000$ & $1.000$ & $1.000$ & $1.000$ \\
$10^{-3}$ & $1.000$ & $1.000$ & $1.013$ & $1.011$ & $1.009$ \\
$10^{-2}$ & $1.034$ & $1.020$ & $1.922$ & $1.811$ & $1.662$ \\
          & $(1.034)$ & $(1.020)$ & $(1.920)$ & $(1.811)$ & $(1.659)$ \\
\bottomrule
\end{tabular}
\end{table}

\begin{table}[H]
\centering
\caption{Charney--DeVore under \yrev{an incorrect} constraint set: mean
$\pm$ standard deviation over four seeds.  ``pruned'' removes the two rows the
true dynamics violates (the setting used in \cref{sec:ident-sds}), ``full''
imposes all rows including those two, and ``none'' is the unconstrained
least-squares fit.  The last column is the largest residual of the
\emph{compatible} rows, normalized as in \cref{eq:structure-residual}.}
\label{tab:cdv-constraint-sets}
\small
\begin{tabular}{lccc}
\toprule
Constraint set & $e_{\mathcal B}$ & $e_C$ & $e_G$ (compatible rows) \\
\midrule
pruned (\spopinf{}) & $0.140\pm0.024$ & $0.605\pm0.168$ & $7.8\times10^{-17}$ \\
full (\yrev{incorrect}) & $0.155\pm0.022$ & $0.639\pm0.150$ & $1.8\times10^{-16}$ \\
none (\opinf{})      & $0.249\pm0.014$ & $1.172\pm0.049$ & $1.3\times10^{-1}$ \\
\bottomrule
\end{tabular}
\end{table}

\section{Conclusions}
\label{sec:conclusion}

We proposed stochastic physics-constrained operator inference (\spopinf{}),
a non-intrusive method that identifies a complete stochastic quadratic
model, drift and diffusion amplitude together, from snapshot data.
\spopinf{} differs from classical operator inference in three ways: the
deterministic and stochastic components are calibrated jointly by maximizing
an Euler--Maruyama increment likelihood in which the residual covariance is
itself an unknown; the energy-preserving structure of the quadratic tensor is
enforced exactly, as linear equality constraints inside the optimization; and
the model is parametrized in upper-triangular form, which removes the
structural non-identifiability of the full tensor.  The optimization is
solved by an expectation--maximization \yrev{(EM)-based} iterative algorithm that
alternates between updating the drift coefficients and the noise levels,
each step an inexpensive weighted least-squares solve, so the method retains
nearly the cost of ordinary least-squares regression.

We tested \spopinf{} on five stochastic systems that show intermittent
bursts, chaos, and switching between regimes.  Compared with \opinf{}, it
gave more accurate quadratic operators, satisfied the energy constraints to
machine precision, recovered the noise amplitudes accurately, and reproduced
the long-time probability densities and autocorrelations of each system.
For the stochastic Burgers equation, it was the only learned model that
remained stable over the whole time interval at every reduced dimension
tested, and it recovered the known noise amplitude of each mode, correctly
assigning near-zero values to the modes that receive no forcing.  For the
quasi-geostrophic double-gyre flow, it reproduced the four-gyre
time-averaged circulation at reduced dimensions where the Galerkin ROM
fails.  The sensitivity tests of \cref{sec:robustness} varied the noise
realization, the sampling interval, the length of the training record, and
the level of observation noise.  They showed that \spopinf{} learns the
quadratic operator more accurately than \opinf{} for almost every
combination tested; that the quadratic error decreases steadily as the
training record grows, while the error in the noise amplitude matches what
is expected from a finite number of independent increments; and that the
noise amplitude is more sensitive than the drift to both the sampling
interval and observation noise, in the latter case through a simple bias
that we predict and confirm numerically.

Several limitations point to future work.  The noise is assumed to be
additive, with a constant amplitude for each mode and no correlation between
modes; noise that is correlated across modes, depends on the state, or
changes with the flow regime would require a more general noise model.  The
likelihood is built on the Euler--Maruyama scheme and first-order finite
differences, so the learned drift carries a discretization error and the
noise amplitude depends on the sampling interval, strongly so for chaotic
systems; likelihoods based on higher-order schemes would reduce this
dependence.  The snapshots are assumed to be free of observation noise; the
sensitivity tests of \cref{sec:robustness} show that such noise biases the
noise amplitude in a predictable way, and estimating the observation-noise
level together with the model is a natural extension.  When the true
dynamics does not satisfy part of the constraint set, the constraints that
it does satisfy must be identified beforehand; the method remains accurate
when they are not, but a systematic way of selecting them is still missing.
Finally, for the quasi-geostrophic flow at the largest reduced dimension,
the fit is poorly conditioned with the amount of training data used here, so
the \yrev{behavior} of the method when the training data grow with the reduced
dimension remains open.

Beyond these limitations, three directions are of particular interest.
First, the method should be applied to more challenging problems, such as
three-dimensional turbulent flows, systems with several interacting scales,
and systems whose reduced dynamics are not well described by a quadratic
model.  Second, the method calls for numerical analysis: showing that the
constrained maximum likelihood estimator converges to the true operators and
noise amplitudes as the sampling interval decreases and the training record
grows, and deriving error bounds for the resulting reduced-order model that
account for both the discretization error and the finite-data error observed
in \cref{sec:robustness}.  Third, the learned drift and noise amplitudes
could be combined with the statistical closure framework of
\citet{qi2025data,qi2023random}, in which the reduced-order model is
calibrated to reproduce the leading statistics of the full system rather
than individual trajectories; extending this combination to parametric and
memory-dependent (non-Markovian) closures would further broaden the range of
problems the method can address.

\section*{Code availability}
The code that implements \spopinf{} and reproduces the numerical results of
this paper is publicly available at
\url{https://github.com/chhmou/SPOpinf}.  The full-order snapshot data for
the stochastic Burgers and quasi-geostrophic tests of
\cref{sec:pde-applications} are not included in the repository and are
available from the author upon reasonable request.

\section*{Declaration of AI usage}
During the preparation of this work, the author used Claude (Anthropic) to
improve the language and consistency of the manuscript.  After using this
tool, the author reviewed and edited the content as needed and takes full
responsibility for the content of the publication.
\bibliographystyle{plainnat}
\bibliography{references}

\end{document}